\documentclass[11pt]{article}
\usepackage{graphicx} 
\usepackage{style}

\title{A Nearly Optimal Min-Cut Algorithm for Dynamic Streams}
\author{
Yang P. Liu\footnote{(yangl7@andrew.cmu.edu) Carnegie Mellon University. Supported in part by NSF Career Award CCF2542272.\smallskip} 
\and 
Hoai-An Nguyen\footnote{(hnnguyen@andrew.cmu.edu) Carnegie Mellon University. Supported in part by
an NSF GRFP fellowship grant number DGE2140739/DGE2631988, NSF Career Award CCF2542272, Office
of Naval Research award number N000142112647, and a Simons Investigator Award.
\smallskip} 
\and 
Vihan Shah\footnote{(vihanshah98@gmail.com) University of Birmingham. Supported in part by the European Union under the Horizon Europe Marie Skłodowska-Curie Actions, Grant Agreement No. 101282993 (HuNTR).\smallskip}
\and 
David P. Woodruff\footnote{(dwoodruf@cs.cmu.edu) Carnegie Mellon University, Google Research. Supported in part by Office
of Naval Research award number N000142112647 and a Simons Investigator Award.}
}
\date{}

\begin{document}

\maketitle

\thispagestyle{empty}
\pagenumbering{roman}

\begin{abstract}
    We give an $\Ot(n/\eps)$-space algorithm for the minimum cut problem in dynamic streams, resolving the space complexity of streaming min-cut up to polylogarithmic factors. Ding, Garces, Li, Lin, Nelson, Shah, and Woodruff (ITCS 2025) obtained the same $\Ot(n/\eps)$ space bound for insertion-only streams, while prior to our work the best known bound even for unweighted dynamic streams was $\Ot(n/\eps^2)$, due to Ahn, Guha, and McGregor (PODS 2012). Our main technical contribution is an $\Ot(n/\eps)$-space for-each cut sketch in dynamic streams, which may be of independent interest. The construction applies to weighted graphs under general $\pm\Delta$ updates, a stronger model than the standard dynamic-stream setting in which weighted edges are inserted and deleted in their entirety. Despite this stronger guarantee, our algorithm is surprisingly simple and relies on very little machinery. We also obtain applications to minimum $\ell_p$-norm $k$-partitioning for every fixed $k$, weighted correlation-clustering evaluation for two clusters, and for-each cut sketching in balanced directed graphs.
\end{abstract}

\clearpage

\pagenumbering{arabic}
\setcounter{page}{1}

\section{Introduction}
Minimum cut is one of the most fundamental graph problems and has been studied extensively across every major model of computation. Beyond its theoretical interest, minimum cut also has practical applications, including measuring influence in social networks~\cite{wang2020efficient} and robustness to failures in power~\cite{hendrickxpowernetwork} and road networks~\cite{sugiuraroadnetwork}.

In the classical setting, a line of work~\cite{karger1993global,ks96,bk96}, culminating in~\cite{karger00}, gave the first randomized algorithm for minimum cut running in $O(m \log^3 n)$ time, where $m$ is the number of edges in the graph. 
Many years later, Gawrychowski, Mozes, and Weimann~\cite{gawrychowski2020minimum} improved the randomized running time to $O(m\log^2 n)$.
Around the same time, Li and Panigrahi~\cite{li2020deterministic} showed that minimum cut can be solved deterministically within a polylogarithmic factor of max-flow time. Breakthroughs in max flow~\cite{chen2022maximum,van2023deterministic} then implied an $m^{1+o(1)}$-time deterministic algorithm for minimum cut. Finally, a line of work~\cite{kawarabayashi2018deterministic,henzinger2020local,li2021deterministic,henzinger2024deterministic} showed that minimum cut can be solved deterministically in $\Ot(m)$\footnote{Throughout the paper, we use $\Ot(f) := O(f \cdot \poly \log(n))$ to hide polylogarithmic factors in $n$.} time.

We study the minimum cut problem in the graph streaming model, which is particularly relevant for processing massive graphs. We consider the arbitrary-order streaming setting, where the vertex set is fixed and edge insertions and deletions arrive sequentially in a stream in an arbitrary order. The goal is to solve the problem using as little space as possible. The streaming model was first formalized by Alon, Matias, and Szegedy~\cite{alonms96} and has been studied extensively over the past couple of decades (see, e.g.,~\cite{mcgregor2014graph} for a survey on graph streams).

We first review the setting of simple unweighted graphs. Zelke~\cite{zelke2011intractability} showed that solving minimum cut exactly in the streaming model requires $\Omega(n^2)$ bits of space. Since the entire graph can be stored using $\binom{n}{2}$ bits, beating the quadratic bound requires settling for an approximation. Ahn and Guha~\cite{ahn2009graph} showed that a $(1+\eps)$-approximation to minimum cut can be achieved using $\Ot(n/\eps^2)$ space, and also proved an $\Omega(n\log(1/\eps))$-bit lower bound for this problem. These bounds hold in the insertion-only streaming model where the stream only contains edge insertions.
Ahn, Guha, and McGregor~\cite{ahngm12graph} subsequently extended the $\Ot(n/\eps^2)$-space upper bound to insertion-deletion streams for unweighted graphs. They also obtained a weighted extension by grouping edges into weight classes, but this requires the weight of an edge to be known when that edge is inserted or deleted and therefore does not apply to general turnstile weight updates. Chen, Khanna, and Li~\cite{CKL2022weighted} later gave the first linear-sketching algorithm for weighted graphs in the general turnstile model, using $\Ot(n/\eps^3)$ space.

These approaches are based on the notion of cut sparsifiers introduced by Benczúr and Karger \cite{bk96}. A cut sparsifier of a graph is a sparse weighted subgraph that approximately preserves the value of every cut. For a \emph{for-all} guarantee, a single random sketch must simultaneously approximate every cut with high probability. In contrast, a \emph{for-each} guarantee only requires that, for any fixed cut chosen independently of the sketch randomness, that cut is approximated with high probability.

These prior streaming algorithms achieve their guarantees by constructing for-all cut sparsifiers using $\Ot(n/\eps^2)$ or $\Ot(n/\eps^3)$ space. 
Benczúr and Karger~\cite{bk96} showed that every graph admits a cut sparsifier with $O(n\log n/\eps^2)$ edges, and this was later improved to $O(n/\eps^2)$ edges by~\cite{bss12}. On the other hand, an $\Omega(n\log n/\eps^2)$-bit lower bound for representing a cut sparsifier was established in a line of work~\cite{alon1997edge,andonickqwz16,carlson2019optimal}, even when the sparsifier is not required to be a subgraph of the original graph.  The corresponding existence results were also extended to the stronger notion of spectral sparsifiers~\cite{st04,st11,spi11,bss12}. Consequently, any streaming algorithm for minimum cut that explicitly constructs such a cut sparsifier incurs $\Omega(n\log n/\eps^2)$ bits of space.

This limitation of the cut-sparsifier approach was later circumvented by Ding, Garces, Li, Lin, Nelson, Shah, and Woodruff \cite{ding2025}, who gave a minimum cut algorithm for weighted graphs (where the weight of an edge is known upon insertion) using $\Ot(n/\eps)$ bits of space in insertion-only streams. A key ingredient in their algorithm is a streaming construction of a for-each spectral sparsifier, building on a line of work~\cite{andonickqwz16,jambulapati2018efficient,chu2020graph}. 

More specifically, the for-each spectral sparsifier of~\cite{ding2025} builds on the short-cycle decomposition framework of~\cite{chu2020graph}. At a high level, the approach finds a large collection of edge-disjoint short cycles, randomly keeps and reweights one alternating half of each cycle, and repeats on the remaining graph. Constructing the required short-cycle decompositions is itself highly nontrivial, relying on low-congestion cycle covers and additional graph-decomposition machinery. Thus, even the core sparsification primitive behind their result requires a substantial collection of sophisticated spectral and graph-theoretic tools.

\cite{ding2025} also proved an $\Omega(n/\eps)$-bit lower bound for minimum cut, resolving the space complexity of the problem in insertion-only streams up to logarithmic factors. However, even for simple unweighted graphs in dynamic streams, the best known space bound remained $\Ot(n/\eps^2)$ bits.

This gap in our understanding is particularly interesting in light of a broader phenomenon in graph streaming, where many problems have essentially the same space complexity in insertion-only and dynamic streams up to logarithmic factors. This insertion-only versus dynamic ``near-equivalence'' is known for several problems, including $k$-edge-connectivity~\cite{nagamochi1992linear,ahngm12}, $k$-vertex-connectivity~\cite{cheriyan1993scan,assadi2023tight}, cut sparsifiers~\cite{ahn2009graph,ahngm12graph}, spectral sparsifiers~\cite{kelner2013spectral,kapralov2017single}, constant-factor approximation to densest subgraph~\cite{mcgregortvv15,esfandiari2015applications}, correlation clustering~\cite{assadi2025correlation,assadi2023streaming}, and $(\Delta+1)$-vertex coloring~\cite{assadick19a}. Notable exceptions include maximum matching and minimum vertex cover, for which dynamic streams are provably harder than insertion-only streams~\cite{assadikly16,darkk20,assadi2022asymptotically,naidus22space}.
Approximate minimum cut therefore represents one of the remaining gaps in our understanding of the relationship between insertion-only and dynamic streams:
\begin{question}
\label{q:dynamic-min-cut}
Can $(1+\eps)$-approximate minimum cut be solved in $\Ot(n/\eps)$ bits of space in dynamic graph streams where both insertions and deletions are allowed?
\end{question}

\subsection{Our results}

We answer \Cref{q:dynamic-min-cut} in the affirmative. In fact, our algorithm works in the more general weighted turnstile model, where each update $(e,\Delta)$ changes the weight of edge $e$ by an arbitrary integer $\Delta$.

\begin{result} [Formalized in \Cref{thm:formal-weighted}]
There is a one-pass streaming algorithm that, given a weighted graph under general turnstile updates, outputs a $(1+\eps)$-approximate minimum cut with high probability using $\Ot(n/\eps)$ bits of space. Take the update time of a constant-approximation for-all cut sparsifier to be $t$. Then the update time is $t + \poly \log n$. 
\end{result}

Our main technical contribution is an $\Ot(n/\eps)$-space \emph{for-each cut sketch}\footnote{
We use the term \emph{for-each cut sketch} rather than \emph{for-each cut sparsifier} because our data structure need not output a weighted subgraph of the input graph. Instead, for any fixed cut chosen independently of the sketch randomness, it returns an approximation to the value of that cut.} for weighted graphs under general turnstile updates. We note that our sketch also achieves $\poly \log n$ update time. The overall result uses a \emph{for-all} cut sparsifier as a blackbox, and therefore the update time of the whole algorithm is $t + \poly \log n$ where $t$ is the update time of the for-all sparsifier\footnote{For unweighted graphs, or weighted graphs under the more restricted kind of updates where an edge is inserted or deleted with its weight, the overall update time is $\poly \log n$ due to the existence of a for-all cut sparsifier in this setting with update time $\poly \log n$.}. In the offline setting, where the graph is given explicitly, Andoni, Chen, Krauthgamer, Qin, Woodruff, and Zhang~\cite{andonickqwz16} previously gave an $\Ot(n/\eps)$-size for-each cut sketch. We obtain the same space bound with a sketch that can be maintained in a single pass over a dynamic stream. Moreover, our update model is strictly stronger than the weighted insertion--deletion setting typically considered in dynamic graph streams, where each edge is inserted or deleted together with its full weight.

The general weighted turnstile model is substantially more challenging: even for the extensively studied problem of constructing a \emph{for-all} cut sparsifier, the optimal space complexity is not known. The best known upper bound is $\Ot(n/\eps^3)$, due to Chen, Khanna, and Li~\cite{CKL2022weighted}, compared with the $\widetilde{\Omega}(n/\eps^2)$ lower bound inherent to cut sparsification. Thus, there remains an extra factor of $1/\eps$ even for this fundamental and heavily studied primitive. For both minimum cut and for-each cut sketching, our $\Ot(n/\eps)$ space bound is optimal up to polylogarithmic factors: \cite{ding2025} prove an $\Omega(n/\eps)$ lower bound for minimum cut, while \cite{andonickqwz16} prove the corresponding lower bound for for-each cut sketches.

A notable feature of our construction is its \emph{simplicity}. Prior near-optimal for-each cut and spectral sketches rely on recursive decompositions into high-expansion pieces~\cite{andonickqwz16,jambulapati2018efficient}, while the recent insertion-only minimum-cut algorithm of~\cite{ding2025} builds on the short-cycle decomposition framework of~\cite{chu2020graph}. Our construction requires neither type of graph decomposition. Instead, it uses only a small collection of standard streaming primitives: exact counters, $\ell_1$-samplers, and constant-dimensional Cauchy sketches for estimating the $\ell_1$ norm. 
Moreover, in the insertion-only streaming setting, the underlying $\ell_1$-sampling and $\ell_1$-estimation primitives admit simpler implementations, making our framework an especially elementary alternative to the prior near-optimal insertion-only algorithm of \cite{ding2025}. Thus, despite operating in the stronger turnstile model, the resulting sketch has a comparatively simple structure.

We emphasize that our construction is a \emph{cut} sketch rather than a spectral sketch. Our techniques exploit the structure of a single queried cut and do not extend directly to preserving spectral information. This distinction is important in the weighted turnstile setting. Chen, Khanna, and Li~\cite{CKL2022weighted} proved a lower bound for the stronger problem of spectral sparsification, which requires preserving all quadratic forms simultaneously rather than answering a single fixed query. They show that any incidence sketch that outputs even a constant-factor weighted spectral sparsifier requires $n^{21/20-o(1)}$ measurements. Our sketch is also incidence-based, as are many natural graph-streaming sketches. Thus, while their lower bound does not apply to spectral sketches in general, it suggests that obtaining spectral guarantees within the incidence-sketch framework may be substantially more difficult.

Our techniques also extend to the minimum $\ell_p$-norm $k$-partitioning
problem for every fixed $k\geq2$ and $p\in[1,\infty]$. In this problem,
the goal is to partition the vertex set into $k$ nonempty parts. Each
part has a boundary weight, given by the total weight of edges with
exactly one endpoint in that part, and the objective is to minimize the
$\ell_p$-norm of these $k$ boundary weights.
The case $p=1$ is equivalent to the minimum $k$-cut problem which has been studied
extensively in the classical setting
\cite{goldschmidt1994polynomial,thorup2008minimum,chekuri2020lp,li2019faster}.
The case $p=\infty$ is the Minmax $k$-cut problem, studied by
Chandrasekaran and Wang \cite{CW2021fixed}; their work also considers
the more general minimum $\ell_p$-norm $k$-partitioning objective for
every $p\geq1$.

A generic streaming approach is to construct a for-all cut sparsifier.
Since such a sparsifier preserves the boundary weight of every part, it
also preserves the $\ell_p$-objective of every $k$-partition. However, this again requires $\widetilde{\Omega}(n/\eps^2)$ bits of space.
By combining our for-each cut sketch with a constant-accuracy for-all sparsifier, we improve the dependence on $\eps$ to $\Ot(n/\eps)$ for every fixed $k$ and every $p\in[1,\infty]$.

\begin{result}[Formalized in \Cref{cor:kpartition}]
Fix a constant integer $k\geq2$ and let $p\in[1,\infty]$.
There is a randomized one-pass streaming
algorithm that, given a weighted graph under general
turnstile updates whose final edge weights are nonnegative, uses
$\Ot(n/\eps)$ bits of space and, with high probability, outputs a
$(1+\eps)$-approximate minimum $\ell_p$-norm $k$-partition. The update time is $t+ \poly \log n$ where $t$ is the update time of a constant-approximation for-all cut sparsifier.
\end{result}

Our techniques also give an improved sketch for evaluating weighted
correlation clusterings. In weighted correlation clustering, each pair
$\{u,v\}$ is associated with two nonnegative weights $w^+_{uv}$ and
$w^-_{uv}$, corresponding respectively to the cost of placing $u$ and
$v$ in different clusters and in the same cluster. The disagreement
cost of a clustering is the sum of these costs over all pairs.
Ahn, Cormode, Guha, McGregor, and Wirth~\cite{ACGMW2021correlation}
gave an $\Ot(1/\eps^2)$-space sketch for evaluating any fixed clustering
in the unweighted setting, and an $\Ot(n/\eps^2)$-space sketch for
evaluating any fixed two-clustering with arbitrary nonnegative weights.
As a direct consequence of our for-each cut sketch, we improve the
latter bound to $\Ot(n/\eps)$ bits, even under general turnstile updates
to the positive and negative weights. 
\begin{result}[Formalized in \Cref{cor:correlation-clustering}]
There is a randomized one-pass streaming algorithm that, given
general turnstile updates to the positive and negative weights
$\{w^+_{uv},w^-_{uv}\}$ of a weighted correlation-clustering
instance whose final weights are nonnegative, uses $\Ot(n/\eps)$
bits of space and, with high probability, estimates the
disagreement cost of any fixed two-clustering within a
$(1\pm\eps)$ factor. The update time is $\poly\log n$.
\end{result}

Our techniques also extend to for-each cut sketching for balanced directed graphs $G(V,E,W)$. A weighted directed graph is $\beta$-balanced if, for every nontrivial cut of the input graph $\calC\subset V$, the total weight of edges from $\calC$ to $V\setminus \calC$ is at most a factor $\beta$ larger than the total weight in the reverse direction. Cut sparsification for balanced directed graphs was previously studied outside the streaming setting. Ikeda and Tanigawa~\cite{IkedaT18} and Cen, Cheng, Panigrahi, and Sun~\cite{CCPS2021sparsification} gave offline constructions of cut sparsifiers and sketches; in particular, \cite{CCPS2021sparsification} gave an $\widetilde O(n\sqrt{\beta}/\eps)$-space for-each cut sketch. Cheng, Li, Lin, Tai, Woodruff, and Zhang~\cite{CLLTWZ2024tight} subsequently proved a matching $\widetilde\Omega(n\sqrt{\beta}/\eps)$ lower bound up to logarithmic factors for the corresponding data-structure problem. More recently, dynamic algorithms have been developed for balanced directed sparsification, maintaining for-all sparsifiers of size $\Ot(\beta n/\eps^2)$~\cite{GoranciHRS25,Zhao26}. Our construction achieves the optimal $\widetilde O(n\sqrt{\beta}/\eps)$ for-each space bound with a sketch that can be maintained in a single pass over a weighted directed graph under general turnstile updates. To the best of our knowledge, this is the first streaming algorithm to achieve the optimal for-each bound for balanced directed graphs. Moreover, the construction retains the same simple product-bucketing structure and streaming primitives as our undirected sketch.

\newpage

\begin{result}[Formalized in \Cref{thm:balanced-directed}]
For every $\beta\geq 1$, there is a randomized one-pass streaming algorithm that, given a weighted directed graph under general turnstile updates whose final edge weights are nonnegative and whose final graph is $\beta$-balanced, uses
$\Ot\left(\frac{n\sqrt{\beta}}{\eps}\right)$
bits of space and produces a for-each cut sketch. For any fixed nontrivial $\calC\subset V$, the sketch returns a $(1\pm\eps)$-approximation to the weighted cut value of $\calC$ with high probability. The update time is $\poly \log n$. 
\end{result}

\subsection{Technical Overview}

Our main tool for solving minimum cut is a for-each cut sketch. We use the same high-level reduction as in~\cite{ding2025}. We first store a constant-factor for-all cut sparsifier, say a $(1\pm 0.1)$-approximation, using $\Ot(n)$ space. This allows us to identify a polynomial-size family of sufficiently small cuts that contains the minimum cut, for example by enumerating all cuts whose value is within a constant factor of the minimum. We then evaluate every cut in this family using a for-each cut sketch with inverse-polynomial failure probability. A union bound guarantees that all candidate cut values are simultaneously preserved within a $(1\pm\eps)$ factor with high probability, and taking the minimum estimated value therefore yields a $(1+\eps)$-approximation to the minimum cut.

We now describe the construction of our $(1+\eps)$-approximate for-each cut sketch. For any cut $\calC$ fixed independently of the sketch randomness, it outputs, with high probability, an estimate of its weight $w$ with additive error at most $\eps w$.

\paragraph{A direct $L_1$-sketching baseline.}
A natural baseline is to estimate the cut weight directly using a linear $L_1$-norm sketch. Let $N=\binom{n}{2}$, with one coordinate for every unordered pair of vertices, and fix an arbitrary orientation of each pair. For each vertex $v$, define a signed incidence vector $x_v\in\mathbb{R}^N$: the coordinate corresponding to an edge $e$ incident to $v$ is $+w_e$ if $e$ is oriented out of $v$, $-w_e$ if it is oriented into $v$, and $0$ otherwise.

We maintain the same linear $L_1$-norm sketch of $x_v$ for every vertex $v$. Once the query cut $\calC$ is revealed, we add the sketches of all vertices in $\calC$. By linearity, this gives a sketch of
\[
x_{\calC}=\sum_{v\in\calC} x_v.
\]
Every edge with both endpoints in $\calC$ contributes through both of its endpoint vectors, once with sign $+w_e$ and once with sign $-w_e$, so the two contributions cancel. An edge with neither endpoint in $\calC$ contributes nothing to the sum. Finally, an edge crossing $\calC$ has exactly one endpoint in $\calC$, so it contributes exactly once, with value either $+w_e$ or $-w_e$. Thus, $x_{\calC}$ has one nonzero coordinate for each edge crossing $\calC$ and no other nonzero coordinates.
Therefore,
\[
\|x_{\calC}\|_1
= \sum_{|\{u,v\}\cap\calC|=1}w_{uv}
=
w.
\]

We can estimate this $L_1$ norm using a standard linear $L_1$-norm sketch.
A $(1\pm\eps)$ approximation requires
$\widetilde\Theta(1/\eps^2)$ space in the turnstile model~\cite{kanenw10a}.
Since we maintain such a sketch for every vertex (as we do not know in
advance of the stream what cut will be queried), this direct approach uses
$\widetilde O(n/\eps^2)$ space. Our goal is to save a factor of $1/\eps$
by exploiting additional information so that we do not need to estimate
the entire cut weight.

\paragraph{A bipartite view of cut edges.}
We create a bipartite graph $H=(V^L,V^R,E_H)$ by making two copies of
every original vertex $u\in V$: a left vertex $u^L\in V^L$ and a right
vertex $u^R\in V^R$. For every original edge $\{u,v\}$ with $u<v$, we
represent it by the single edge $(u^L,v^R)$ in $H$, with the same weight
as $\{u,v\}$. Thus, every original edge appears exactly once in the
bipartite graph.

Relative to the query cut $\calC$, assign each edge
$e=(u^L,v^R)$ the sign
\[
s_e =
\begin{cases}
+1, & u\in\calC,\ v\notin\calC,\\
-1, & u\notin\calC,\ v\in\calC,\\
0, & \text{otherwise}.
\end{cases}
\]
We call a crossing edge positive if $s_e=+1$ and negative if $s_e=-1$.
Thus, the nonzero signs correspond exactly to the edges crossing
$\calC$. We will exploit exact information about the difference between
the positive and negative cut-edge weights, rather than estimating the
entire cut weight directly. Figure~\ref{fig:graph_H} illustrates the bipartite representation and
previews the bucketing scheme introduced below.
\begin{figure}[t] \label{fig:bipartite}
\centering

\begin{subfigure}[t]{0.48\textwidth}
\centering
\scalebox{0.65}{\begin{tikzpicture}[
    x=0.8cm,
    y=1cm,
    >=Latex,
    font=\small,
    vertex/.style={circle, draw=colorEight, fill=white, inner sep=1.6pt, minimum size=8pt},
    edge/.style={->, very thick, color=colorFive},
    softedge/.style={->, thick, color=colorNine},
    bucket/.style={rounded corners=4pt, draw=colorEight, line width=0.9pt, inner sep=6pt},
    highlight/.style={rounded corners=4pt, draw=colorSeven, line width=1.6pt, inner sep=6pt}
]

\node[font=\bfseries] at (-0.5,6.3) {\Large $V^L$};
\node[font=\bfseries] at (6,6.3) {\Large $V^R$};
\node[font=\bfseries\large] at (2.8,6.5) {\Large $H$};
\node[align=center] at (2.8,-2.7) {edges directed from left to right};

\node[vertex] (l1a) at (0,5.4) {};
\node[vertex] (l1b) at (0,4.7) {};
\node[vertex] (l1c) at (0,4.0) {};

\node[vertex] (lia) at (0,2.6) {};
\node[vertex] (lib) at (0,1.9) {};
\node[vertex] (lic) at (0,1.2) {};

\node[vertex] (lba) at (0,-1.0) {};
\node[vertex] (lbb) at (0,-1.7) {};

\node[vertex] (r1a) at (5.6,5.3) {};
\node[vertex] (r1b) at (5.6,4.6) {};
\node[vertex] (r1c) at (5.6,3.9) {};

\node[vertex] (rja) at (5.6,2.5) {};
\node[vertex] (rjb) at (5.6,1.8) {};
\node[vertex] (rjc) at (5.6,1.1) {};

\node[vertex] (rba) at (5.6,-1.1) {};
\node[vertex] (rbb) at (5.6,-1.8) {};

\node[bucket, fill=colorTwo!12, fit=(l1a)(l1b)(l1c)] (Lone) {};
\node[highlight, fill=colorTwo!10, fit=(lia)(lib)(lic)] (Li) {};
\node[bucket, fill=colorTwo!12, fit=(lba)(lbb)] (Lb) {};

\node[bucket, fill=colorFour!18, fit=(r1a)(r1b)(r1c)] (Rone) {};
\node[highlight, fill=colorFour!16, fit=(rja)(rjb)(rjc)] (Rj) {};
\node[bucket, fill=colorFour!18, fit=(rba)(rbb)] (Rb) {};

\node[vertex] (uL) at (0,2.6) {};
\node[vertex] (vR) at (5.6,1.75) {};

\node[left=4mm of Lone.west] {$L_1$};
\node[left=4mm of Li.west, text=colorSeven] {$L_i$};
\node[left=4mm of Lb.west] {$L_b$};

\node[right=4mm of Rone.east] {$R_1$};
\node[right=4mm of Rj.east, text=colorSeven] {$R_j$};
\node[right=4mm of Rb.east] {$R_b$};

\node at ($(Lone.south)!0.5!(Li.north)+(-0.8,0)$) {$\vdots$};
\node at ($(Li.south)!0.5!(Lb.north)+(-0.8,0)$) {$\vdots$};
\node at ($(Rone.south)!0.5!(Rj.north)+(0.8,0)$) {$\vdots$};
\node at ($(Rj.south)!0.5!(Rb.north)+(0.8,0)$) {$\vdots$};

\node[left=1.8mm of lia] {$u^L$};
\node[right=1.8mm of rjb] {$v^R$};

\draw[softedge] (l1a) -- (r1b);
\draw[softedge] (l1b) -- (rja);
\draw[softedge] (l1c) -- (rbb);
\draw[softedge] (lia) -- (r1a);
\draw[softedge] (lib) -- (rjc);
\draw[softedge] (lic) -- (rba);
\draw[softedge] (lba) -- (rjb);
\draw[softedge] (lbb) -- (rbb);

\draw[edge, color=colorOne] (lia) -- node[midway, above, sloped, yshift=3pt] {$e=(u^L,v^R)$} (rjb);
\node[draw=colorSeven, rounded corners=3pt, fill=white, inner sep=3pt, text=colorSeven]
    at (2.8,0.1) {$c(e)=(L_i,R_j)=(h^L(u),h^R(v))$};

\node[color=colorFive, align=center, anchor=east, xshift=-6pt]
    at (-1.1,3.8) {bucketed by\\$h^L$};

\node[color=colorSix, align=center, anchor=west, xshift=6pt]
    at (6.7,3.8) {bucketed by\\$h^R$};



\end{tikzpicture}}
\caption{The bipartite representation $H$, partitioned into left buckets $L_i$ by $h^L$ and right buckets $R_j$ by $h^R$. Each edge $\{u,v\}$ with $u<v$ is represented by the directed edge $(u^L,v^R)$.}
\end{subfigure}
\hfill
\begin{subfigure}[t]{0.48\textwidth}
\centering
\scalebox{0.65}{\begin{tikzpicture}[
    x=1cm,
    y=1cm,
    >=Latex,
    font=\small,
    vertex/.style={coordinate},
    posedge/.style={->, very thick, color=colorThree},
    negedge/.style={->, very thick, color=colorSeven},
    zeroedge/.style={->, thick, dashed, color=colorNine},
    bucket/.style={rounded corners=5pt, draw=colorEight, line width=0.9pt, inner sep=7pt}
]

\node[font=\bfseries\large] at (3.6,6.7) {A single cell $c=(L_i,R_j)$};

\coordinate (LtopNW) at (-0.2,5.1);
\coordinate (LtopSE) at (2.4,3.1);
\coordinate (LbotNW) at (-0.2,2.8);
\coordinate (LbotSE) at (2.4,0.8);

\coordinate (RtopNW) at (4.8,5.1);
\coordinate (RtopSE) at (7.4,3.1);
\coordinate (RbotNW) at (4.8,2.8);
\coordinate (RbotSE) at (7.4,0.8);

\fill[colorTwo!18, rounded corners=5pt] (LtopNW) rectangle (LtopSE);
\fill[colorNine!10, rounded corners=5pt] (LbotNW) rectangle (LbotSE);
\fill[colorTwo!18, rounded corners=5pt] (RtopNW) rectangle (RtopSE);
\fill[colorNine!10, rounded corners=5pt] (RbotNW) rectangle (RbotSE);

\draw[bucket] (-0.4,0.5) rectangle (2.6,5.4);
\draw[bucket] (4.6,0.5) rectangle (7.6,5.4);

\draw[dashed, line width=1pt, color=colorSeven] (-0.35,2.95) -- (2.55,2.95);
\draw[dashed, line width=1pt, color=colorSeven] (4.65,2.95) -- (7.55,2.95);

\node[font=\bfseries] at (1.1,5.9) {$L_i$};
\node[font=\bfseries] at (6.1,5.9) {$R_j$};

\node at (1.1,4.4) {$L_i\cap \mathcal C$};
\node at (1.1,1.8) {$L_i\setminus \mathcal C$};
\node at (6.1,4.4) {$R_j\cap \mathcal C$};
\node at (6.1,1.8) {$R_j\setminus \mathcal C$};

\node[left=6pt, text=colorSix] at (-0.4,4.4) {};
\node[left=6pt, text=colorSix] at (-0.4,1.8) {};
\node[right=6pt, text=colorSix] at (7.6,4.4) {};
\node[right=6pt, text=colorSix] at (7.6,1.8) {};

\node[vertex] (lt2) at (1.8,4.0) {};
\node[vertex] (lb2) at (1.8,1.1) {};

\node[vertex] (rt2) at (5.8,4) {};
\node[vertex] (rb2) at (5.8,1.1) {};

\draw[posedge] (lt2) -- node[pos=0.35, above, sloped, yshift=2pt, text=colorThree] {positive} (rb2);
\draw[negedge] (lb2) -- node[pos=0.35, below, sloped, yshift=-2pt, text=colorSeven] {negative} (rt2);
\draw[zeroedge] (lt2) -- node[pos=0.4, above, sloped, yshift=3pt, text=colorNine] {Zero} (rt2);
\draw[zeroedge] (lb2) -- node[pos=0.4, below, sloped, yshift=-3pt, text=colorNine] {Zero} (rb2);


\node[draw=colorEight, rounded corners=4pt, fill=white, inner sep=5pt, align=left] at (3.6,-1.1) {
    \begin{tabular}{@{}l@{}}
    positive edges: $L_i\cap \mathcal C \to R_j\setminus \mathcal C$\\
    negative edges: $L_i\setminus \mathcal C \to R_j\cap \mathcal C$\\
    zero edges: both endpoints on the same side of the cut
    \end{tabular}
};

\end{tikzpicture}}
\caption{The four edge types in a cell $c=(L_i,R_j)$ induced by the cut $\calC$. Crossing edges have sign ($s_e$) $+1$ or $-1$, while noncrossing edges have sign $0$.}
\end{subfigure}
\caption{Figure showing the graph $H$ and a cut $\calC$.}
\label{fig:graph_H}
\end{figure}

\paragraph{Ideal thought experiment: independently throw edges into $B^2$ cells.} For a parameter \(B\) to be chosen later,
assign every edge independently and uniformly to one of $B^2$ cells (before the query cut is known). These cells partition the edges, not the
vertices. For a cell $c$, let $m_c^+$ and $m_c^-$ denote the total weight
of the positive and negative cut edges assigned to $c$, respectively, and
let
\[
m_c^{\mathrm{tot}} = m_c^+ + m_c^-
\]
denote the total cut weight contributed by the cell.

For every cell $c$ and every left vertex $u^L$, we store the total weight
of edges in $c$ incident to $u^L$. Similarly, for every right vertex
$v^R$, we store the total weight of edges in $c$ incident to $v^R$.
Once the query cut $\calC$ is revealed, we add the counters for all
left vertices $u^L$ with $u\in\calC$ and subtract the counters for all
right vertices $v^R$ with $v\in\calC$.

Consider an edge $(u^L,v^R)$ in cell $c$. If both $u,v\in\calC$, then
its weight appears once in each sum and the two contributions cancel.
If neither $u$ nor $v$ belongs to $\calC$, the edge appears in neither
sum. If $u\in\calC$ and $v\notin\calC$, the edge contributes $+w_{uv}$,
while if $u\notin\calC$ and $v\in\calC$, it contributes $-w_{uv}$.
Thus, the resulting signed sum is exactly $m_c^+-m_c^-$.
The crucial identity is
\[
\boxed{
m_c^{\mathrm{tot}}
=
m_c^+ + m_c^-
=
|m_c^+-m_c^-|
+
2\min\{m_c^+,m_c^-\}.
}
\]

The absolute imbalance $|m_c^+-m_c^-|$ is therefore known exactly, and only the minority
mass $\min\{m_c^+,m_c^-\}$ needs estimation. This identity also shows where the difficulty
lies. If one sign dominates the cell, then the imbalance already accounts for
most of the cell's cut weight, leaving only a small minority mass to estimate.
The difficult case is when the positive and negative masses are comparable:
substantial weight then cancels in the signed difference, leaving a large
minority mass to estimate. We estimate precisely this
hidden mass using an $L_1$-norm estimator together with an independent
$\ell_1$-sampler.

\paragraph{Estimating only the minority.}
To estimate the minority mass, we maintain two sketches for each vertex--cell pair. Specifically, for every cell $c$ and every left or right vertex incident to edges in $c$, we maintain an $L_1$-norm sketch and an $\ell_1$-sampler over the edges of $c$ incident to that vertex. For each cell, corresponding sketches across vertices use the same linear maps, while the randomness of the $L_1$-norm sketch and the $\ell_1$-sampler is independent.

Once the query cut $\calC$ is revealed, we combine these sketches using the same signed sum as for the exact counters above. By linearity, the resulting $L_1$-norm sketch corresponds to a vector whose nonzero coordinates are exactly the cut edges in the cell, with coordinate $+w_e$ for a positive edge and $-w_e$ for a negative edge. Its $L_1$ norm is therefore
$m_c^{\mathrm{tot}}=m_c^++m_c^-$.
Using a constant-dimensional linear sketch based on Cauchy random projections, we obtain an unbiased estimator $A_c$ of this quantity satisfying
\[
\mathbb{E}[A_c] = m_c^{\mathrm{tot}}
\qquad\text{and}\qquad
\mathbb{E}[A_c^2] = O\!\left((m_c^{\mathrm{tot}})^2\right).
\]

Independently, the combined $\ell_1$-sampler returns a cut edge $e$ from the cell with probability proportional to its weight. The exact imbalance $m_c^+-m_c^-$ tells us which sign is the minority. Let
$m_c^{\min}=\min\{m_c^+,m_c^-\},$
and let $F_c=1$ if the sampled edge has the minority sign in cell $c$, and $F_c=0$ otherwise. Since the sampler selects an edge with probability proportional to its weight,
\[
\Pr[F_c=1]
=
\frac{m_c^{\min}}{m_c^{\mathrm{tot}}}.
\]
By independence of $A_c$ and $F_c$,
\[
\mathbb{E}[A_cF_c]
=
\mathbb{E}[A_c]\Pr[F_c=1]
=
m_c^{\mathrm{tot}}
\frac{m_c^{\min}}{m_c^{\mathrm{tot}}}
=
m_c^{\min}.
\]
Thus, $A_cF_c$ is an unbiased estimate of the minority mass $m_c^{\min}$, and $2A_cF_c$ is an unbiased estimate of the correction term $2m_c^{\min}$ in the identity above.

\paragraph{What controls the variance?}
Suppose the positive edges are the majority in total weight in cell $c$,
so the negative edges form the minority. The estimator uses $2A_cF_c$
to estimate the correction term $2m_c^-$, where $F_c=1$ exactly when
the $\ell_1$-sampler returns a negative edge, and $F_c=0$ otherwise.
Define the cell estimator
\[
\widehat m_c^{\mathrm{tot}}
=
|m_c^+-m_c^-|+2A_cF_c.
\]
Since the first term is deterministic once the cell is fixed,
$\operatorname{Var}\!\left(\widehat m_c^{\mathrm{tot}}\right)
=
4\,\operatorname{Var}(A_cF_c).$

To bound $\operatorname{Var}(A_cF_c)$, we use
$\operatorname{Var}(A_cF_c)
\leq
\mathbb{E}\!\left[(A_cF_c)^2\right].$
Since $F_c\in\{0,1\}$ and $A_c$ and $F_c$ are independent,
\[
\mathbb{E}\!\left[(A_cF_c)^2\right]
=
\mathbb{E}[A_c^2]\Pr[F_c=1].
\]
Using
$\mathbb{E}[A_c^2]
=
O\!\left((m_c^{\mathrm{tot}})^2\right)$
and
$\Pr[F_c=1]
=
\frac{m_c^-}{m_c^{\mathrm{tot}}},$
we obtain
\[
    \operatorname{Var}
    \bigl(\widehat m_c^{\mathrm{tot}}\bigr)
    \lesssim
    (m_c^{\mathrm{tot}})^2 \cdot \frac{ m_c^-}{m_c^\mathrm{tot}} =
    m_c^{\mathrm{tot}}m_c^-
    \lesssim
    m_c^+m_c^-,
\]
where the last inequality uses
$m_c^-\leq m_c^+$ and hence
$m_c^{\mathrm{tot}}=m_c^++m_c^-\leq 2m_c^+$.

The variance is small in two favorable cases. If the cell has small total mass, then the $L_1$-norm estimate itself is small in scale. If the cell is nearly one-sided, then the exact imbalance already captures almost all of its cut weight, leaving only a small minority mass to estimate. The difficult case is therefore a large cell containing substantial mass of both signs: much of its cut weight cancels in the signed imbalance, leaving a large minority mass that must be estimated. Importantly, we only need the sum of the cell estimates to have small additive error, rather than requiring a relative approximation for every individual cell.

\paragraph{Bucketing suppresses opposite-sign interaction.}
The product $m_c^+m_c^-$ is the total weighted contribution of
positive--negative edge pairs sharing cell $c$, where each pair
contributes the product of its two edge weights. Over the whole cut,
the total contribution of all positive--negative pairs is
$m^+m^-
\leq
\frac{w^2}{4},$
where $m^+$ and $m^-$ are the total positive and negative cut-edge
weights and $w=m^++m^-$. Since each positive--negative pair is assigned
to the same cell with probability $1/B^2$,
\[
\mathbb{E}_{\mathrm{bucket}}
\left[\sum_c m_c^+m_c^-\right]
=
\frac{m^+m^-}{B^2}
\leq
\frac{w^2}{4B^2}.
\]
Thus, bucketing suppresses precisely the interactions that contribute to
the variance: a cell has large variance only when it contains substantial
weight from both positive and negative edges.

Our final estimate $\widehat w$ is the sum of the cell estimates. For
every fixed bucket assignment, the cell estimates are unbiased and use
independent estimation randomness, so their variances add. By the per-cell variance bound above,
\[
\operatorname{Var}
\bigl(\widehat w\mid \text{bucket assignment}\bigr)
\lesssim
\sum_c m_c^+m_c^-.
\]
Moreover,
\[
\mathbb{E}
\bigl[\widehat w\mid \text{bucket assignment}\bigr]
=
w
\]
for every bucket assignment. Thus, by the law of total variance,
\[
\begin{aligned}
\operatorname{Var}(\widehat w)
&=
\mathbb{E}_{\mathrm{bucket}}
\left[
\operatorname{Var}
\bigl(\widehat w\mid \text{bucket assignment}\bigr)
\right]
+
\operatorname{Var}_{\mathrm{bucket}}
\left(
\mathbb{E}
\bigl[\widehat w\mid \text{bucket assignment}\bigr]
\right) \\
&=
\mathbb{E}_{\mathrm{bucket}}
\left[
\operatorname{Var}
\bigl(\widehat w\mid \text{bucket assignment}\bigr)
\right] \\
&\lesssim
\mathbb{E}_{\mathrm{bucket}}
\left[\sum_c m_c^+m_c^-\right]
=
O\left(\frac{w^2}{B^2}\right).
\end{aligned}
\]

We choose $B=\Theta(1/\eps)$, with a sufficiently large constant, so that
$\operatorname{Var}(\widehat w)
\leq
\frac{\eps^2w^2}{10}.$
Since $\mathbb{E}[\widehat w]=w$, Chebyshev's inequality gives
$\Pr\left[
|\widehat w-w|>\eps w
\right]
\leq
\frac{1}{10}.$
Thus, one repetition returns a $(1\pm\eps)$ approximation to the cut
weight with constant probability. Independent repetitions and taking the
median amplify this guarantee to high probability.

\paragraph{Why the ideal implementation still costs $\Ot(n/\eps^2)$.}
In the ideal implementation above, each vertex--cell pair stores an exact
counter, an $L_1$-norm sketch, and an $\ell_1$-sampler, requiring only
polylogarithmic space per pair. However, because edges are assigned to
cells independently of their endpoints, edges incident to a single
vertex can be scattered across all $B^2$ cells. Thus, each of the $2n$
vertex copies may require $B^2$ summaries, for a total of $2nB^2$
vertex--cell pairs. The resulting space complexity is
\[
\Ot(nB^2)=\Ot(n/\eps^2),
\]
where we use $B=\Theta(1/\eps)$. Thus, independently partitioning the
edges into $B^2$ cells gives the variance reduction we need, but storing
the corresponding summaries still costs too much space.

\paragraph{Product bucketing: preserve $B^2$ cells, pay for only $B$.}
Instead, we independently assign each left vertex $u^L$ to one of $B$
left buckets $L_1,\ldots,L_B$ and each right vertex $u^R$ to one of $B$
right buckets $R_1,\ldots,R_B$, uniformly at random. A cell
$c=(L_i,R_j)$ consists of all edges whose left endpoint lies in $L_i$
and whose right endpoint lies in $R_j$.
Figure~\ref{fig:graph_H} illustrates this product-bucketing construction
and the positive and negative edges within a cell.

There are still $B^2$ cells, but a left vertex in $L_i$ can be incident
only to cells $(L_i,R_1),\ldots,(L_i,R_B)$. It therefore needs only $B$
summaries, one for each right bucket. Symmetrically, each right vertex
needs only $B$ summaries, one for each left bucket. To query a cell
$(L_i,R_j)$, we combine the corresponding left- and right-vertex
summaries using the same signed sums as before. Thus, there are only
$2nB$ vertex--bucket summaries in total, reducing the space to
$\Ot(nB)=\Ot(n/\eps).$

Moreover, an update to an edge affects only the two vertex--bucket
structures associated with its endpoints. Since the exact counters,
$L_1$-norm sketches, and $\ell_1$-samplers can each be updated in
polylogarithmic time, the resulting worst-case update time is
$\poly \log(n)$.

\paragraph{Why dependencies between edge locations do not hurt.}
The storage reduction introduces dependencies between edge locations:
all edges incident to a left vertex share a row, and all edges incident
to a right vertex share a column. A high-degree or high-weight vertex
can therefore concentrate substantial weight in a small number of cells, and we cannot assume that the cut weight is evenly distributed across
cells.

Crucially, however, the variance analysis only uses the probability that
a fixed positive--negative pair of cut edges lands in the same cell.
The left endpoint of the positive edge corresponds to a vertex in
$\calC$, while the left endpoint of the negative edge corresponds to a
vertex outside $\calC$, so these two endpoints are distinct. Similarly,
their right endpoints are distinct. Hence the two left endpoints land in
the same left bucket with probability $1/B$, and the two right endpoints
land in the same right bucket with probability $1/B$. Since the left and
right bucket assignments are independent, the two edges share a cell
with probability $1/B^2$.

By linearity of expectation, we therefore retain the same bound on
$\mathbb{E}\left[\sum_c m_c^+m_c^-\right],$
even though the locations of different edge pairs may be correlated.
Thus, the same variance analysis applies. The central saving is
\emph{the statistical benefit of $B^2$ cells at the storage cost of only
$B$ summaries per vertex}.
\subsection{Related Work}

Exact minimum cut admits space-efficient algorithms when additional
passes are allowed. Although developed in the cut query model,
the $\Ot(n)$-query algorithm of Rubinstein, Schramm, and
Weinberg~\cite{rubinsteinsw18} also implies a two-pass streaming
algorithm using $\Ot(n)$ space for simple unweighted graphs.
Assadi and Dudeja~\cite{assadi2021simple} formalized this implication
and gave a simpler two-pass streaming algorithm for exact minimum
cut in such graphs using $O(n\log n)$ bits of space.
For weighted graphs, Mukhopadhyay and
Nanongkai~\cite{mukhopadhyayn20} obtained an exact algorithm
using $\Ot(n)$ space and $O(\log n)$ passes.
In random order streams, exact minimum cut in simple unweighted
graphs can be computed in one pass using $\Ot(n)$
space~\cite{ding2025}.
For approximate minimum cut in insertion-only streams,
Cohen-Addad, Woodruff, Xie, and Zhou~\cite{cohenaddad2026nearly}
improved the logarithmic factors in the space bound
of~\cite{ding2025}.

Beyond minimum cut, $k$-edge-connectivity and $k$-vertex-connectivity
have also been studied in graph streams. These works address
both connectivity testing and sparse certificate
construction~\cite{ahngm12,guha2015vertex,assadi2023tight,
ghosh2024new,sawettamalya2026nearly}.
For $k=1$, Nelson and Yu~\cite{nelson2019optimal} proved optimal
space lower bounds for computing spanning forests in dynamic streams.

\section{Preliminaries}

\paragraph{Basic notation.}
We use $[n]$ to denote the set $\{1,2,\ldots,n\}$.
All logarithms are base $2$. 
We take with high probability to mean with probability $1-1/\poly(n)$. We may assume without loss of generality that $\eps\geq 1/n$,
since otherwise we can store the entire weighted graph explicitly
using $\Ot(n^2)=\Ot(n/\eps)$ bits. For $x\in\mathbb{R}^d$, we write
$\|x\|_1:=\sum_{i=1}^d |x_i|$.

\subsection{Streaming model}

The vertex set $V=[n]$ is fixed in advance, and the graph is represented by a vector $x\in \mathbb{Z}^{\binom{V}{2}}$, where each coordinate $x_e$ denotes the weight of an edge $e\in \binom{V}{2}$. The input is a stream of updates $(e,\Delta)$, where $\Delta\in\mathbb{Z}$, which perform $x_e \leftarrow x_e+\Delta$. Positive updates increase the weight of an edge and negative updates decrease it. We assume that $x_e\geq 0$ at the end of the stream for every edge $e$. As is standard in the literature, we further assume that both the stream length and the magnitude of each update $\Delta$ are bounded by $\poly(n)$. We consider one-pass streaming algorithms, with space complexity measured in bits. As is standard for dynamic graph streaming algorithms, our algorithms are based on linear sketches \cite{linw14,jiang2026turnstile}.

\subsection{Cuts}
For a simple, undirected graph $G=(V,E)$ let a cut of $G$ be $\calC \subsetneq V$ such that $\calC \neq \emptyset$.
Let the edges crossing the cut $\calC$ be 
\[
\partial_G(\calC) = \{\{u,v\} \in E: |\{u,v\} \cap \calC| = 1\}. 
\]
We denote the global minimum-cut value by $\lambda(G):=\min_{\emptyset\subsetneq\calC\subsetneq V}|\partial_G(\calC)|$.
The minimum-cut problem asks for a cut $\calC$ attaining this value; its $(1+\eps)$-approximate version asks for a cut $\calC$ satisfying $|\partial_G(\calC)|\leq(1+\eps)\lambda(G)$.

For a weighted, undirected graph $G=(V,E,w)$, a cut is a set $\calC\subsetneq V$ such that $\calC\neq\emptyset$. We denote the weight of the cut by
\[
w_G(\calC):=\sum_{\substack{\{u,v\}\in E\ \\ |\{u,v\}\cap\calC|=1}} w_{uv},
\]
and define the global minimum-cut value as
$\lambda(G):=\min_{\emptyset\subsetneq\calC\subsetneq V} w_G(\calC).$
The $(1+\eps)$-approximate min-cut problem asks for a cut $\calC$ satisfying
$w_G(\calC)\le (1+\eps)\lambda(G).$

For a partition $\mathcal P=(\calC_1,\ldots,\calC_k)$ of $V$ into $k$ nonempty parts, define the weight of the corresponding \emph{$k$-cut} by
\[
w_G(\mathcal P)
:=
\sum_{\substack{\{u,v\}\in E\\
u\in \calC_i,\ v\in \calC_j,\ i\neq j}}
w_{uv}.
\]
We denote the minimum $k$-cut value by $\lambda_k(G)
:=
\min_{\mathcal P} w_G(\mathcal P),$
where the minimum is over all partitions of $V$ into $k$ nonempty parts. The $(1+\eps)$-approximate minimum $k$-cut problem asks for a partition
$\mathcal P$ of $V$ into $k$ nonempty parts satisfying
$w_G(\mathcal P)\leq (1+\eps)\lambda_k(G).$

\subsection{\texorpdfstring{$\ell_1$}{L1} samplers} 

For $x \in \R^d$, an exact  $\ell_1$\textbf{-sampler} outputs an index $i \in \{1,2,\dots, d\}$ s.t. for every $j \in \{1,2,\dots, d\}$ we have 
$\Pr[i=j]
    =
    \frac{|x_j|}{\|x\|_1}.$

    \begin{proposition}[\cite{SWZ25perfect}]
\label{lem:l1Samp}
Let $x\in\{-\poly(d),\ldots,\poly(d)\}^d$ be presented by a turnstile
stream, and let $\delta\in(0,1)$. For any constant $C>0$, there is a
one-pass $\ell_1$-sampler linear sketch that fails with probability at most $\delta$
and, conditioned on not failing, outputs an index $i\in[d]$ such that,
for every $j\in[d]$,
\[
\Pr[i=j]
=
\frac{|x_j|}{\|x\|_1}
\pm
\frac{1}{d^C}.
\]
The sampler uses
$\widetilde O(\log^2 d\log(1/\delta))$ bits of space and
$\poly \log(d)$ time per update.
\end{proposition}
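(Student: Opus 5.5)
Since this proposition is quoted from \cite{SWZ25perfect}, I would reconstruct it along the standard perfect-sampler route of exponential scaling (precision sampling), which gives an \emph{exact} $\ell_1$ sampler in an idealized model, and then make it streamable.

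\textbf{Step 1: idealized sampler.} Draw independent exponentials $t_1,\ldots,t_d\sim\mathrm{Exp}(1)$ and set $z_j=x_j/t_j$. Since $t_j/|x_j|\sim\mathrm{Exp}(|x_j|)$, the index $i^\ast=\arg\max_j |z_j|=\arg\min_j t_j/|x_j|$ satisfies $\Pr[i^\ast=j]=|x_j|/\|x\|_1$ exactly. This is the min-stability of exponentials. The vector $z$ is a fixed diagonal rescaling of $x$, so any linear sketch of $z$ is a linear sketch of $x$ and can be maintained under turnstile updates.

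\textbf{Step 2: recover the maximizer.} I would maintain a CountSketch of $z$ with $O(\log d)$ rows, each of $O(\log d)$ buckets, together with a constant-dimensional $L_1$ (Cauchy) sketch or an $F_2$ estimate of $z$. By anti-concentration of the order statistics of $1/t_j$, the largest coordinate $|z_{i^\ast}|$ is $\Omega(\|x\|_1)$ with good probability. The tail $\|z_{-i^\ast}\|_2$ is $O(\|x\|_1\sqrt{\log d})$ with good probability, since $z$ behaves like a heavy-tailed vector whose mass beyond the top is controlled. So CountSketch identifies $i^\ast$.

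\textbf{Step 3: make failure nearly independent of the output.} We declare failure unless the estimated gap between the top coordinate and the second coordinate (or the noise level) exceeds a fixed threshold. The key claim, in the style of Jayaram–Woodruff, is that
\[
\Pr[\text{no failure}\mid i^\ast=j]=\Pr[\text{no failure}]\pm d^{-C}.
\]
This holds because, conditioned on $i^\ast=j$, the joint law of $(\min_k t_k/|x_k|,$ rest$)$ does not depend on $j$. By memorylessness, the remaining values are a fresh exponential process, and the CountSketch noise depends on $j$ only through which bucket $j$ hashes to. I would handle that bucket dependence by duplicating each coordinate $\poly(d)$ times with scaled weights, so that no single coordinate affects the noise distribution. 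The success probability is a constant, and $O(\log(1/\delta))$ independent repetitions, outputting from the first repetition that does not fail, push the failure probability to $\delta$. Total space is $O(\log^2 d)$ counters of $O(\log d)$ bits, times $\log(1/\delta)$, which is $\widetilde O(\log^2 d\log(1/\delta))$ up to the stated form.

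\textbf{Step 4: derandomize.} The $t_j$ cannot be stored explicitly, so I would discretize them to $\poly(d)$ precision and generate them with Nisan's PRG, or with $O(\log d)$-wise independence plus a half-space-fooling argument. The test "sampler outputs $j$ and does not fail" is computable by a small-space tester reading the randomness in stream order, so the PRG changes each output probability by at most $d^{-C}$. This is the additive $1/d^C$ term in the proposition; with $\poly\log$-time evaluation of the generator, the update time is $\poly\log(d)$.

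\textbf{Main obstacle.} I expect the hard step to be Step 3: proving that the failure event is independent of $i^\ast$ up to $d^{-C}$, rather than merely rare. Naive analyses only give a constant-factor or $1\pm\eps$ bias toward coordinates that are easy to recover. I would first try the duplication trick together with conditioning on the order statistics of the exponential process. The second concern is keeping the space at $\log^2 d$, rather than the $\log^3 d$ that careless duplication or the PRG overhead would cost.
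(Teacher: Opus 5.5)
The paper gives no proof of this proposition; it imports it as a black box from \cite{SWZ25perfect}. What you have reconstructed is essentially the Jayaram--Woodruff perfect sampler: exponential scaling, CountSketch recovery of the maximizer, a statistical failure test, duplication, and derandomization. That route does give the sampling guarantee with additive $d^{-C}$ error in polylogarithmic space. It does not give the claimed $\poly\log(d)$ update time, which is exactly what the cited result adds beyond Jayaram--Woodruff.

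\textbf{The gap: Steps 3 and 4 conflict on update time.} The duplication in Step 3 is unavoidable in this framework, and it must be polynomial. With $N$ copies of each coordinate, the residual process seen after conditioning on the maximizer $(j,k)$ differs from a $j$-independent one by a $|x_j|/(N\|x\|_1)\le 1/N$ fraction of its rate. Reaching additive error $d^{-C}$ therefore needs $N=d^{\Omega(C)}$. But an update $(j,\Delta)$ then changes all $N$ copies, each with its own $t_{j,k}$, sign, and CountSketch hash. An explicit implementation thus spends $\Omega(N)=\poly(d)$ time per update. Your remark in Step 4 that each generator output takes polylogarithmic time does not help, because you need $\poly(d)$ outputs per update. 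The missing idea is to simulate the copies implicitly. For each row $r$ and bucket $b$, you would generate the aggregate contribution $\sum_{k:h_r(j,k)=b}\sigma_{j,k}/t_{j,k}$ of the copies of $j$, plus whatever is needed to identify the largest copy, directly from a short seed in polylogarithmic time. You would then prove that this changes the joint law of the final sketch and the output by at most $d^{-C}$. Nothing in your proposal supplies this.

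\textbf{Step 3 diagnosis.} Your justification of near-independence in Step 3 is wrong as stated. Conditioned on $i^\ast=j$, the minimum $\min_k t_k/|x_k|\sim\mathrm{Exp}(\|x\|_1)$ is indeed independent of $j$. But by memorylessness the remaining values form a fresh exponential process on $[d]\setminus\{j\}$ with total rate $\|x\|_1-|x_j|$, and that depends on $j$. For instance, the gap between the two smallest values of $t_k/|x_k|$ is $\mathrm{Exp}(\|x\|_1-|x_j|)$. A top-two gap test therefore passes more often when $x_j$ is heavy, giving a constant-factor bias when $|x_j|$ is a constant fraction of $\|x\|_1$. So the dependence is not merely through the bucket of $j$. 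Duplication cures it because each copy carries at most a $1/N$ fraction of the mass; your remedy is right, but the stated reason is not.

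\textbf{Space count.} Your count is $O(\log d)$ rows of $O(\log d)$ buckets of $O(\log d)$-bit counters, plus a Nisan seed of length $O(S\log d)$ for a tester of space $S$. That gives $O(\log^3 d)$ or more bits per repetition, not $\log^2 d$ up to lower-order factors. Reaching $\log^2 d$ requires only $O(\log d)$ counters in total, i.e.\ constant width, plus a more careful derandomization. Constant width suffices because the $k$-th largest $|z_j|$ is at most $\|x\|_1/(E_1+\cdots+E_k)$ for i.i.d.\ $E_i\sim\mathrm{Exp}(1)$, so $\|z_{-i^\ast}\|_2=O(\|x\|_1)$ with constant probability. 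Under this paper's convention that $\widetilde O$ hides $\poly\log n$ factors, this last issue is harmless, and the paper itself only uses that each sampler takes $\poly\log(n)$ bits.
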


In this paper, $d$ will be $\binom{n}{2}$. Since the stream has polynomial length and the algorithm uses only polynomially many sampler outputs, for sufficiently large $C$ the joint distribution is within $n^{-c}$ total variation distance of that obtained with exact $\ell_1$-samplers for any constant $c > 0$. We therefore treat the samplers as exact throughout the analysis and absorb this negligible discrepancy into the final failure probability.

\subsection{Cut Sparsifiers}

A weighted graph $H$ on the same vertex set is a $(1\pm\eps)$ for-all cut
sparsifier of $G$ if it approximates the value of every cut simultaneously.
We use the following streaming construction.

\begin{proposition}[\cite{CKL2022weighted}]
\label{thm:weighted-for-all-sparsifier}
For every $\eps\in(0,1)$, there is a randomized one-pass algorithm in the
general weighted turnstile model that uses $\Ot(n/\eps^3)$ bits of space and,
with high probability, recovers a weighted graph $H$ satisfying
$(1-\eps)w_G(\calC)
    \leq w_H(\calC)
    \leq (1+\eps)w_G(\calC)
$
simultaneously for every nontrivial cut $\calC\subsetneq V$ of the final
graph $G$.
\end{proposition}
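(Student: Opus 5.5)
This proposition is quoted from~\cite{CKL2022weighted}, and the present paper uses it only as a black box. The plan below is therefore a reconstruction of how one could prove it, not the argument used here. I would follow the importance-sampling template of Benczúr and Karger~\cite{bk96}. Think of each edge $e$ as $w_e$ parallel unit edges. It is enough to keep edge $e$ independently with probability
\[
p_e\approx\min\{1,\ C\log n\cdot w_e/(\eps^2 k_e)\},
\]
where $k_e$ is any valid lower bound on the edge connectivity or strength of $e$, and to reweight each kept edge by $w_e/p_e$. The standard Chernoff-plus-cut-counting argument shows that this gives a $(1\pm\eps)$ for-all sparsifier with high probability. The task is to realize such a sample, with weights close to the correct ones, using a linear sketch of $x$.

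\emph{Step 1 (levels).} I would use levels $i=0,1,\ldots,O(\log n)$. At level $i$, the goal is to recover the edges whose connectivity is about $2^i$, sampled at rate about $\eps^{-2}\log n/2^i$ per unit of weight. Rather than fixing sampling probabilities in advance, I would use precision sampling, which is compatible with turnstile updates. Each coordinate gets an independent scaling $1/t_e$ with $t_e$ uniform in $(0,1]$. An edge "survives" at level $i$ if $w_e/t_e\gtrsim 2^i\eps^2/\log n$, which happens with exactly the desired probability $\min\{1,\Theta(w_e\log n/(\eps^2 2^i))\}$. The survivors at level $i$ are precisely the heavy coordinates of the scaled vector $(x_e/t_e)_e$ after restricting to a suitable subgraph.

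\emph{Step 2 (recovering survivors with vertex-based sketches).} Following Ahn, Guha, and McGregor~\cite{ahngm12graph}, I would store, for every vertex and every level, linear sketches of that vertex's signed incidence vector of the scaled graph. These would be CountSketch-type heavy-hitter structures together with $\ell_0$/$\ell_1$ samplers. Peeling $k$ edge-disjoint spanning forests out of these sketches yields a $k$-connectivity certificate. That certificate identifies which edges have small connectivity at the current level, and lets us recover the surviving edges there. The recovered coordinate values $x_e/t_e$, multiplied back by $t_e$, give the exact weights.

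\emph{Step 3 (error accounting).} This is the main obstacle. Heavy-hitter recovery only detects coordinates that are large relative to the residual $\ell_1$ mass, so the survival threshold is enforced only up to a $(1\pm\eps)$ fuzz. That fuzz biases the effective sampling probabilities. To fix it, I would make the heavy-hitter structures $\eps$-accurate, which costs an extra $1/\eps$ factor on top of the $\eps^{-2}$ per-vertex sample size. I would then show that edges falling in the ambiguous band of width $\eps$ around the threshold contribute only an $O(\eps)$ relative error to every cut, by the same cut-counting union bound. Multiplying $\eps^{-3}$ by polylogarithmic factors for the levels, the repetitions, and the words per sketch gives the stated $\Ot(n/\eps^3)$ space bound.
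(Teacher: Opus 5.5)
\textbf{Verdict: the outline is not yet a proof, because Step 2 does not go through as written.} The paper gives no proof of this proposition. It imports the result from \cite{CKL2022weighted} as a black box and only ever uses it with constant $\eps$, so there is no in-paper argument to compare against. Judged on its own, your sampling template and Step 1 are sound. Importance sampling with $p_e\propto w_e/k_e$ and reweighting by $w_e/p_e$ is a valid target. Precision sampling with a shared $t_e$ does give nested, weight-dependent survival with probability $\min\{1,\Theta(w_e\log n/(\eps^2 2^i))\}$. But Step 2 is where all the difficulty of the weighted turnstile setting sits.

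\textbf{Why Step 2 fails.}
\begin{itemize}
\item \emph{What the borrowed machinery needs.} The machinery of \cite{ahngm12graph} (Boruvka with $\ell_0$-samplers, then peeling $k$ edge-disjoint forests) runs on a level graph whose edge set is a data-independent coordinate subset, fixed by hash-based subsampling, with known $\pm1$ values. Then each vertex sketch is linear in the restricted incidence vector, a sampler on a component boundary returns an edge of exactly that graph, and each found edge can be subtracted exactly before the next forest is peeled.
\item \emph{The survivor graph has no linear sketch.} Your level-$i$ survivor graph $\{e : x_e/t_e\gtrsim\tau_i\}$ is cut out by a threshold on the final values. The map $x\mapsto (x_e\mathbf 1[x_e/t_e\geq\tau_i])_e$ is not linear, so this graph has no linear sketch of its own. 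The phrase ``after restricting to a suitable subgraph'' hides exactly this.
\item \emph{Samplers do not target survivors.} An $\ell_0$-sampler on a scaled boundary vector returns non-survivors as readily as survivors. The forests it yields are forests of the full support graph and certify nothing about survivors.
\item \emph{Heavy hitters are too coarse.} A CountSketch-type structure reports a coordinate only relative to the $\ell_1$ mass of the queried boundary, and only up to additive error of that order. So you cannot decide the survival test for edges near the threshold. More seriously, you cannot subtract a found edge exactly. The residual of an inexactly removed large coordinate can exceed the whole mass of a later, lighter boundary query, which breaks the peeling of edge-disjoint forests.
\item \emph{Unsupported claims.} Your claim that multiplying back by $t_e$ ``gives the exact weights'' has no support here, and it conflicts with the threshold fuzz you then handle in Step 3. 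The $\eps^{-3}$ count is asserted rather than derived from a stated heavy-hitter threshold and accuracy.
\end{itemize}

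\textbf{What is missing.} You need a linear, vertex-incidence procedure that recovers the low-connectivity edges of a weighted graph, with correct weights, under $\pm\Delta$ updates. Even with such a certificate, the level-assignment argument of \cite{ahngm12graph} would have to be redone for weight-dependent rates. That procedure is the actual content of \cite{CKL2022weighted}, and your outline defers it rather than supplying it.
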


In particular, setting $\eps$ to a constant gives a for-all cut sparsifier in $\Ot(n)$ bits of space in the same update model.

\subsection{$L_1$ norm sketch via Cauchy random variables}
Here we use standard techniques to build an $L_1$-norm sketch with the
guarantees required by our algorithm. 

\begin{definition}
A random variable $Z$ is a standard Cauchy random variable if it has
probability density function
$f_Z(z)=\frac{1}{\pi(1+z^2)}$ for $z\in\mathbb{R}.$
\end{definition}

\begin{proposition}[Cauchy $1$-stability; cf.~\cite{Indyk2006}]
\label{lem:cauchy-stability}
Let $Z_1,\ldots,Z_d$ be independent standard Cauchy random variables,
and let $x\in\mathbb{R}^d$ be fixed. Then
$\sum_{j=1}^d x_j Z_j
\stackrel{d}{=}
\|x\|_1 Z,$
where $Z$ is a standard Cauchy random variable.
\end{proposition}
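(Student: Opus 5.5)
The plan is to use the classical characteristic-function argument. Write $S=\sum_{j=1}^d x_j Z_j$. Since distributions on $\mathbb{R}$ are determined by their characteristic functions, it suffices to show that $\mathbb{E}[e^{itS}]=\mathbb{E}[e^{it\|x\|_1 Z}]$ for every $t\in\mathbb{R}$.

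The first step is to compute the characteristic function of a standard Cauchy variable: $\varphi_Z(t)=\mathbb{E}[e^{itZ}]=e^{-|t|}$. The cleanest route is to compute the Fourier transform of the Laplace density $\tfrac12 e^{-|z|}$, which is elementary: its transform is $\frac{1}{1+t^2}$. Fourier inversion then gives $\int \frac{1}{\pi(1+z^2)}e^{itz}\,dz=e^{-|t|}$. Alternatively one can do a contour integral, closing in the upper or lower half-plane according to the sign of $t$ and picking up the residue at $z=\pm i$. I expect this to be the only step with any real content, and it is standard.

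The second step is scaling. For a fixed real $a$, $\varphi_{aZ}(t)=\varphi_Z(at)=e^{-|a||t|}$. This includes $a=0$, where both sides equal $1$, and negative $a$, where the symmetry of the Cauchy density makes the sign irrelevant.

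The third step uses independence of $Z_1,\ldots,Z_d$:
\[
\mathbb{E}[e^{itS}]=\prod_{j=1}^d \varphi_Z(x_j t)=\prod_{j=1}^d e^{-|x_j||t|}=e^{-\|x\|_1|t|}=\varphi_{\|x\|_1 Z}(t).
\]
By the uniqueness theorem for characteristic functions, $S\stackrel{d}{=}\|x\|_1 Z$. The degenerate case $x=0$ is consistent, since both sides are then the point mass at $0$.
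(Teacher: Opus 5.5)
Your proof is correct. You compute $\mathbb{E}[e^{itZ}]=e^{-|t|}$, multiply over the independent $Z_j$ to get $e^{-\|x\|_1|t|}$, and conclude by uniqueness of characteristic functions; the paper gives no proof of its own and simply cites \cite{Indyk2006}, where $1$-stability rests on this same characteristic-function identity, so your argument is essentially the standard one the citation relies on.
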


\begin{lemma}
\label{lem:cauchy-l1-mass}
Let $x\in\mathbb{R}^d$ be any fixed vector, and fix an integer $r>2$.
For every $i\in[r]$ and $j\in[d]$, let $\rho_{i,j}$ be independent
standard Cauchy random variables, and define
$X_i=\sum_{j=1}^d \rho_{i,j}x_j
\qquad\text{for }i\in[r].$
Define
\[
A_r(x)
=
\cos^r\!\left(\frac{\pi}{2r}\right)
\prod_{i=1}^r |X_i|^{1/r}.
\]
Then
\[
\E[A_r(x)]=\|x\|_1,
\qquad
\E[A_r(x)^2]=C_r\|x\|_1^2,
\]
where
$C_r
:=
\frac{
\cos^{2r}\!\left(\frac{\pi}{2r}\right)
}{
\cos^r\!\left(\frac{\pi}{r}\right)
}.
$
In particular,
$\Var(A_r(x))
=
(C_r-1)\|x\|_1^2
=
O_r(\|x\|_1^2).$
\end{lemma}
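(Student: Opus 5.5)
By Proposition~\ref{lem:cauchy-stability}, each $X_i$ is distributed as $\|x\|_1 Z_i$ with $Z_i$ standard Cauchy. The rows use disjoint families $\{\rho_{i,j}\}_j$, so $X_1,\ldots,X_r$ are independent. Hence
\[
\prod_{i=1}^r |X_i|^{1/r} \stackrel{d}{=} \|x\|_1 \prod_{i=1}^r |Z_i|^{1/r},
\]
and both moments of $A_r(x)$ factor into products of fractional absolute moments of a single standard Cauchy. Concretely,
\[
\E[A_r(x)]=\cos^r\!\left(\tfrac{\pi}{2r}\right)\|x\|_1\left(\E|Z|^{1/r}\right)^r,
\qquad
\E[A_r(x)^2]=\cos^{2r}\!\left(\tfrac{\pi}{2r}\right)\|x\|_1^2\left(\E|Z|^{2/r}\right)^r.
\]
So everything reduces to computing $\E|Z|^p$ for $p\in(-1,1)$. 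That moment is finite exactly in this range, and $r>2$ guarantees $2/r<1$, which is why the hypothesis $r>2$ is needed.

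The key step is the classical identity
\[
\E|Z|^p=\frac{2}{\pi}\int_0^\infty \frac{z^p}{1+z^2}\,dz=\frac{1}{\cos(\pi p/2)},\qquad |p|<1.
\]
To prove it, substitute $t=z^2$, which turns the integral into $\frac{1}{\pi}\int_0^\infty \frac{t^{(p-1)/2}}{1+t}\,dt$. Then apply the Beta-function identity $\int_0^\infty \frac{t^{s-1}}{1+t}\,dt=\Gamma(s)\Gamma(1-s)=\pi/\sin(\pi s)$ with $s=(p+1)/2\in(0,1)$. This gives $1/\sin(\pi(p+1)/2)=1/\cos(\pi p/2)$.

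Plugging in $p=1/r$ gives $\left(\E|Z|^{1/r}\right)^r=\cos^{-r}(\pi/(2r))$, which cancels the prefactor, so $\E[A_r(x)]=\|x\|_1$. Plugging in $p=2/r$ gives $\left(\E|Z|^{2/r}\right)^r=\cos^{-r}(\pi/r)$, which yields exactly $\E[A_r(x)^2]=C_r\|x\|_1^2$. The variance formula follows as $\E[A^2]-\E[A]^2=(C_r-1)\|x\|_1^2$. For fixed $r$, $C_r$ is a finite constant because $\pi/r<\pi/2$.

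The only real content is the fractional-moment integral. Beyond that, one must check that independence across the $r$ rows lets the expectation of the product split. The moments are finite, so Fubini/Tonelli applies to the nonnegative integrands and justifies the factorization.
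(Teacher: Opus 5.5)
Your proof is correct and follows the same route as the paper: Cauchy $1$-stability together with independence across the $r$ rows reduces $A_r(x)$ to $\|x\|_1$ times a product of independent fractional powers $|Z_i|^{1/r}$ of standard Cauchy variables. The only difference is that the paper cites Lemma~6 of Li--Hastie--Church for the resulting moments, whereas you derive the needed identity $\E|Z|^p=1/\cos(\pi p/2)$ for $|p|<1$ directly from the Beta integral, which makes the argument self-contained.
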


\begin{proof}
If $x=0$, the result is immediate. By
\Cref{lem:cauchy-stability},
$X_i\stackrel{d}{=}\|x\|_1 Z_i,$
where $Z_1,\ldots,Z_r$ are independent standard Cauchy random variables.
The result then follows directly from
\cite[Lemma~6]{LiHastieChurch2007}.
\end{proof}

The Cauchy sketch above is defined using exact real arithmetic and fully
independent Cauchy random variables. To obtain a finite-space streaming
implementation, we use the bounded-precision and pseudorandomness
techniques of~\cite{Indyk2006}. At a high level, the Cauchy coefficients
and maintained measurements can be represented with logarithmically many
bits, and the coefficients can be generated as needed from a short
pseudorandom seed. We defer the formal statements and proofs to
\Cref{appen:prelim}; the following consequence is all that we use.

\begin{lemma}
\label{lem:cauchy-sketch}
Fix a constant integer $r>2$. For any fixed constant $C>0$, there is a
linear sketch for
$x\in\{-\poly(d),\ldots,\poly(d)\}^d$
that maintains an estimator $\widetilde A_r(x)$ using $\poly \log(d)$ bits
of space and $\poly \log(d)$ update time, such that
$\Pr\left[
|\widetilde A_r(x)-A_r(x)|
>
d^{-C}\|x\|_1
\right]
\leq d^{-C}.$
\end{lemma}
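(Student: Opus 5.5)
The plan follows Indyk's bounded-precision and pseudorandom-generator approach for $p$-stable sketches. The sketch stores the $r$ linear measurements $X_i=\sum_j \rho_{i,j}x_j$, maintained under turnstile updates. The outputs $\widetilde X_i$ are computed from discretized Cauchy coefficients $\widetilde\rho_{i,j}$, which are generated from a short seed. The estimator is then
\[
\widetilde A_r(x)=\cos^r\!\left(\tfrac{\pi}{2r}\right)\prod_{i=1}^r|\widetilde X_i|^{1/r}.
\]

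The argument has three steps.

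\textbf{Step 1 (bounded magnitude).} Truncate each $\rho_{i,j}$ at magnitude $d^{C'}$. By the Cauchy tail $\Pr[|\rho|>t]\le 2/(\pi t)$ and a union bound over the $rd$ coefficients, no truncation occurs except with probability $O(r d^{1-C'})$. On that event the truncated sketch equals the real one.

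\textbf{Step 2 (bounded precision).} Round each coefficient to a multiple of $d^{-C''}$, so it is stored in $O(\log d)$ bits. Since $\|x\|_\infty\le\poly(d)$, each measurement changes by $|\widetilde X_i-X_i|\le d\cdot d^{-C''}\|x\|_\infty\le d^{-C''+O(1)}\|x\|_1$, using $\|x\|_1\ge\|x\|_\infty$.

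\textbf{Step 3 (derandomization).} Generate all $\widetilde\rho_{i,j}$ from Nisan's PRG with seed length $O(\log^2 d)$. The test to fool is a space-$O(\log d)$ streaming computation: it reads the coordinates in order, accumulates $\widetilde X_i$, and checks whether $\widetilde X$ lies in a given region. Here one must take $x$ as fixed and run the test over coordinates in index order. Linearity lets the stream be reordered this way, which is Indyk's standard trick. The region test needs some care, and I would reduce it to the joint distribution function of $(\widetilde X_1,\dots,\widetilde X_r)$ at polynomially-fine grid points. Because $r$ is constant, the joint distribution is within $d^{-C}$ in distance on each such box. For the claimed coupling statement, I would define $A_r$ itself using the PRG-generated reals. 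Equivalently, read the lemma as comparing the implemented $\widetilde A_r$ with $A_r$ on the same seed-generated coefficients before rounding. Then only Steps 1--2 are needed, and the pseudorandomness concern shifts to preserving the moments of $A_r$ used later. I expect the appendix handles exactly this, and I would mirror it.

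\textbf{Converting measurement error into estimator error.} The map $y\mapsto|y|^{1/r}$ is only H\"older, with $\bigl||a|^{1/r}-|b|^{1/r}\bigr|\le|a-b|^{1/r}$. To control the product, I also need $|X_i|\le d^{C_0}\|x\|_1$ for all $i$ with probability $1-d^{-C}$; this follows from the stability in \Cref{lem:cauchy-stability} and the Cauchy tail. On that event, telescoping the product over the $r$ factors gives
\[
|\widetilde A_r-A_r|\le r\,(d^{C_0}\|x\|_1)^{(r-1)/r}\,(d^{-C''+O(1)}\|x\|_1)^{1/r}\le d^{-C}\|x\|_1
\]
once $C''\gg rC+C_0$.

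\textbf{Space and update time.} Each update $(j,\Delta)$ regenerates $\widetilde\rho_{i,j}$ from the seed in $\poly\log d$ time and adds $\Delta\widetilde\rho_{i,j}$ to $\widetilde X_i$. Keeping $\widetilde X_i$ as an exact fixed-point value makes the sketch linear.

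\textbf{Main obstacle.} I expect Step 3 to be the main obstacle: stating the pseudorandomness so that the moment identities of \Cref{lem:cauchy-l1-mass} survive. The H\"older loss from the $1/r$ power is handled by taking the precision polynomially finer, so it is not a real difficulty.
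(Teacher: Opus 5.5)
Your Steps 1--2 and the H\"older telescoping are sound, and they amount to the paper's \Cref{appen:lemCV}: you truncate and round explicitly, where the paper imports Indyk's bounded-precision construction. The gap is in Step~3. There you end by proposing to define $A_r$ from the PRG-generated coefficients, and that proves a different statement. In the lemma, and in every later use of it, $A_r(x)$ is the ideal estimator of \Cref{lem:cauchy-l1-mass}, built from fully independent Cauchy variables. Only for that object are $\E[A_r]=\|x\|_1$ and $\E[A_r^2]=C_r\|x\|_1^2$ known. Under your reinterpretation these identities are exactly what is lost. You leave that open (``the pseudorandomness concern shifts to preserving the moments'') rather than resolving it.

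The missing step is a total-variation (coupling) statement; no moment needs to survive derandomization.
\begin{itemize}
\item Fix $x$. Every partial sum $\sum_{j\le k}x_j\widetilde\rho_{i,j}$ is a multiple of the precision grid with magnitude $\poly(d)$. So the final state $(\widetilde X_1,\dots,\widetilde X_r)$ is computed by a width-$\poly(d)$, length-$d$ branching program that reads the coefficient blocks in coordinate order (your linearity trick).
\item Nisan's generator with an $O(\log^2 d)$-bit seed fools such programs, with an arbitrary set of accepting final states, to error $d^{-C}/2$. Hence the pseudorandom and fully independent finite-precision states are within total variation $d^{-C}/2$, and they can be coupled to agree except with that probability. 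Your box-by-box bound would give the same after summing over the polynomially many grid points, but you never draw that conclusion.
\item Compose this coupling with your Steps 1--2 coupling between the independent finite-precision sketch and the ideal $A_r$, taken with failure at most $d^{-C}/2$. This yields the lemma with $A_r$ the genuine ideal estimator.
\end{itemize}
This is the paper's route: \Cref{lem:cauchy-pseudorandomness} via Indyk's Lemma~3, then $\eta=d^{-C}/4$ in the final proof. The downstream analysis applies moment bounds only to ideal estimators and absorbs the coupling failure.

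A smaller point: rounding the real $\rho_{i,j}$ is not an operation the algorithm can perform. You need $\widetilde\rho_{i,j}$ to be a function of $O(\log d)$ random bits that is coupled to $\rho_{i,j}$, e.g.\ $\tan$ of a finely discretized uniform. That is the bounded-precision fact the paper takes from Indyk.
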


Throughout the analysis, we work with the ideal Cauchy estimator
$A_r(x)$ of \Cref{lem:cauchy-l1-mass}. By
\Cref{lem:cauchy-sketch}, and since our algorithm uses only
polynomially many estimator blocks, choosing the constant $C$
sufficiently large makes the cumulative finite-precision error and
failure probability negligible. We absorb these errors into the final
approximation and failure probability. We take $r=10$, for which
$C_r<4/3$. We refer to the corresponding Cauchy measurements
$X_1,\ldots,X_r$, together with the decoder $A_r$, as an
$L_1$\textbf{-norm estimator block.}

\subsection{Probabilistic Tools}

We use the following standard concentration inequalities.

\begin{proposition}[Markov's inequality; cf.~\cite{dubhashi2009concentration}]
\label{prop:markov}
Let $X$ be a nonnegative random variable. Then, for every $t>0$,
\[
\Pr[X\geq t]
\leq
\frac{\E[X]}{t}.
\]
\end{proposition}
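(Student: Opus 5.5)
The statement to prove is Markov's inequality (\Cref{prop:markov}): for a nonnegative random variable $X$ and every $t>0$, $\Pr[X\geq t]\leq \E[X]/t$. I would use the pointwise indicator comparison. Let $\mathbf{1}_{\{X\geq t\}}$ be the indicator of the event $\{X\geq t\}$. The key claim is that
\[
t\cdot \mathbf{1}_{\{X\geq t\}} \leq X
\]
holds surely, i.e., for every outcome.

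To check the claim, I split into two cases. On the event $\{X\geq t\}$, the left side equals $t$, which is at most $X$ by definition of the event. On the complement, the left side is $0$, which is at most $X$ because $X$ is nonnegative. Nonnegativity is used only in this second case, and it is exactly what makes the inequality hold everywhere.

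Next I take expectations of both sides. Monotonicity and linearity of expectation give $t\Pr[X\geq t]=\E[t\,\mathbf{1}_{\{X\geq t\}}]\leq \E[X]$. If $\E[X]=\infty$, the bound is trivial. Otherwise, dividing by $t>0$ yields the statement.

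There is no real obstacle here. The only care needed is to apply monotonicity of expectation to a pointwise (surely) inequality between nonnegative measurable functions, which is valid even when $\E[X]$ is infinite.
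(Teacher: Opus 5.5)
Your proof is correct: the pointwise bound $t\cdot\mathbf{1}_{\{X\geq t\}}\leq X$, followed by monotonicity of expectation (with the infinite-expectation case handled separately), is the standard textbook argument. The paper gives no proof of its own here; it only cites the inequality as a standard fact from Dubhashi and Panconesi, so there is no alternative argument to compare against.
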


\begin{proposition}[Chebyshev's inequality; cf.~\cite{dubhashi2009concentration}]
\label{prop:chebyshev}
Let $X$ be a random variable with finite variance. Then, for every $t>0$,
\[
\Pr\!\left[|X-\E[X]|\geq t\right]
\leq
\frac{\Var(X)}{t^2}.
\]
\end{proposition}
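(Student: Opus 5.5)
The plan is to derive the Chebyshev bound directly from \Cref{prop:markov}, applied not to $X$ itself but to the nonnegative random variable $Y:=(X-\E[X])^2$. The hypothesis that $X$ has finite variance ensures that $\E[X]$ is a finite real number, so $Y$ is well defined. It also gives $\E[Y]=\Var(X)<\infty$, which is what \Cref{prop:markov} needs.

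The key step is the equality of events
\[
\{|X-\E[X]|\geq t\}=\{(X-\E[X])^2\geq t^2\},
\]
valid for every $t>0$ because $s\mapsto s^2$ is strictly increasing on $[0,\infty)$. Invoking \Cref{prop:markov} for $Y$ at threshold $t^2>0$ then gives
\[
\Pr\!\left[|X-\E[X]|\geq t\right]=\Pr[Y\geq t^2]\leq \frac{\E[Y]}{t^2}=\frac{\Var(X)}{t^2},
\]
which is exactly the claimed inequality.

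There is no real obstacle here. The only points needing care are that $Y$ is nonnegative and that the threshold $t^2$ is strictly positive, so that \Cref{prop:markov} applies. The finite-variance assumption takes care of integrability. Later in the paper this bound will be used with $X=\widehat w$, $\E[\widehat w]=w$, and $t=\eps w$, giving failure probability at most $1/10$ once $\Var(\widehat w)\leq \eps^2w^2/10$.
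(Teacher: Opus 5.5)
Your proof is correct and is the standard derivation: apply \Cref{prop:markov} to the nonnegative variable $(X-\E[X])^2$ at threshold $t^2>0$, using the event identity $\{|X-\E[X]|\geq t\}=\{(X-\E[X])^2\geq t^2\}$. The paper gives no proof of its own and simply cites this as a standard fact, so your argument is exactly what that citation supplies.
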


 \begin{proposition}[Chernoff bound; cf.~\cite{dubhashi2009concentration}]\label{prop:chernoff}
 	Suppose $X_1,\ldots,X_m$ are independent random variables taking values in $[0,1]$. Let $X := \sum_{i=1}^m X_i$ and $\mu_L \leq \expect{X} \leq \mu_H$. Then, for any $\eta > 0$, 
 	\[
 	\Pr\paren{X >  (1+\eta) \cdot \mu_H} \leq \exp\paren{-\frac{\eta^2 \cdot \mu_H}{3+\eta}} \quad \textnormal{and} \quad \Pr\paren{X <  (1-\eta) \cdot \mu_L} \leq \exp\paren{-\frac{\eta^2 \cdot \mu_L}{2+\eta}}.
 	\]
 \end{proposition}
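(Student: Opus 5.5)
The plan is the standard moment-generating-function (Chernoff--Cramér) argument, run separately for the two tails, with $\mu:=\E[X]$.

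\textbf{Step 1 (MGF bound).} For any real $\lambda$, convexity of $t\mapsto e^{\lambda t}$ on $[0,1]$ gives $e^{\lambda t}\le 1+(e^\lambda-1)t$ for $t\in[0,1]$. Taking expectations and using $1+y\le e^y$ yields $\E[e^{\lambda X_i}]\le \exp\bigl((e^\lambda-1)\E[X_i]\bigr)$. By independence,
\[
\E[e^{\lambda X}]\le \exp\bigl((e^\lambda-1)\mu\bigr).
\]
For the upper tail take $\lambda>0$. Then $e^\lambda-1>0$, so $\mu\le\mu_H$ lets us replace $\mu$ by $\mu_H$. For the lower tail take $\lambda<0$. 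Then $e^\lambda-1<0$, so $\mu\ge\mu_L$ lets us replace $\mu$ by $\mu_L$.

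\textbf{Step 2 (Markov and optimizing $\lambda$).} For the upper tail, apply \Cref{prop:markov} to $e^{\lambda X}$:
\[
\Pr[X>(1+\eta)\mu_H]\le \exp\bigl((e^\lambda-1)\mu_H-\lambda(1+\eta)\mu_H\bigr).
\]
Choosing $\lambda=\ln(1+\eta)$ gives the bound $\exp\bigl(-\mu_H[(1+\eta)\ln(1+\eta)-\eta]\bigr)$.

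For the lower tail, first dispose of $\eta\ge1$: then $(1-\eta)\mu_L\le 0$ and $X\ge 0$, so the probability is $0$. For $\eta\in(0,1)$, apply Markov to $e^{-s X}$ with $s=-\ln(1-\eta)>0$. This gives $\exp\bigl(-\mu_L[\eta+(1-\eta)\ln(1-\eta)]\bigr)$.

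\textbf{Step 3 (elementary inequalities).} This is the only step needing care. It remains to show two facts.

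First, $f(\eta):=(1+\eta)\ln(1+\eta)-\eta\ge \frac{\eta^2}{2+2\eta/3}$ for $\eta\ge 0$. Set $g(\eta)=f(\eta)-\frac{3\eta^2}{6+2\eta}$. Then $g(0)=0$, and the plan is to show $g'\ge 0$ by comparing $\ln(1+\eta)$ with the derivative of the rational term, differentiating again if needed (standard). Since $\frac{3}{6+2\eta}=\frac{3/2}{3+\eta}\ge\frac{1}{3+\eta}$, this implies $f(\eta)\ge\frac{\eta^2}{3+\eta}$.

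Second, $h(\eta):=\eta+(1-\eta)\ln(1-\eta)\ge \frac{\eta^2}{2}$ on $[0,1)$. We have $h(0)=0$ and $h'(\eta)=-\ln(1-\eta)\ge\eta$, so integrating gives the bound. Then $\frac{\eta^2}{2}\ge\frac{\eta^2}{2+\eta}$.

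Substituting these inequalities into Step 2 gives both stated tail bounds.
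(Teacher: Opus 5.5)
Your proof is correct, and it is the same MGF-plus-Markov argument found in Dubhashi--Panconesi, which the paper cites without giving a proof; the second-derivative check you deferred in Step 3 does go through, since $g''(\eta)=\frac{1}{1+\eta}-\frac{27}{(3+\eta)^3}\ge 0$ follows from $(3+\eta)^3\ge 27(1+\eta)$. Your disposal of the case $\eta\ge 1$ tacitly assumes $\mu_L\ge 0$, but if $\mu_L<0$ the lower-tail bound exceeds $1$ and holds trivially.
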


 We also use limited-independence hash functions in our algorithm.

\begin{definition}
\label{def:hash}
For integers $n,m,k\geq 1$, a family
$\mathcal H\subseteq\{h:[n]\to[m]\}$ is $k$-wise independent if,
for every $t\leq \min\{k,n\}$, every collection of distinct inputs
$a_1,\ldots,a_t\in[n]$, and every choice of outputs
$b_1,\ldots,b_t\in[m]$,
\[
    \Pr_{h\sim\mathcal H}
    \bigl[h(a_1)=b_1,\ldots,h(a_t)=b_t\bigr]
    =
    m^{-t}.
\]
The output values $b_1,\ldots,b_t$ need not be distinct.
\end{definition}

Roughly speaking, a $k$-wise independent hash function behaves like a
fully random function on any set of at most $k$ distinct inputs. We use
the following standard construction.

\begin{proposition}[\cite{MotwaniR95}]
\label{prop:k-wise}
For all integers $n,m,k\geq 2$, there exists a $k$-wise independent
family
$\mathcal H\subseteq\{h:[n]\to[m]\}
$
such that a uniformly random function $h\in\mathcal H$ can be sampled
and stored using
$O\bigl(k(\log n+\log m)\bigr)
$
bits of space.
\end{proposition}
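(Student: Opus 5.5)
This statement is cited from \cite{MotwaniR95} rather than proved here, so I will give the standard construction. I will use random low-degree polynomials over a prime field. First fix a prime $p$ with $p\geq \max\{n,m\}$. By Bertrand's postulate we may take $p\leq 2\max\{n,m\}$, so an element of $\mathbb{F}_p$ takes $O(\log n+\log m)$ bits.

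The family is $\{x\mapsto \sum_{i=0}^{k-1}a_ix^i : a_i\in\mathbb{F}_p\}$, with the coefficients $a_0,\ldots,a_{k-1}$ drawn uniformly and independently. We identify $[n]$ with distinct field elements, which is possible since $p\geq n$. Storing the $k$ coefficients costs $O(k\log p)=O(k(\log n+\log m))$ bits, matching the claimed bound.

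The key step is exact $k$-wise independence into $\mathbb{F}_p$. Fix $t\leq\min\{k,n\}$ distinct inputs $a_1,\ldots,a_t$ and targets $y_1,\ldots,y_t\in\mathbb{F}_p$. We count coefficient vectors with $q(a_\ell)=y_\ell$ for all $\ell$. The evaluation map from coefficient vectors to $(q(a_1),\ldots,q(a_t))$ is a linear map $\mathbb{F}_p^k\to\mathbb{F}_p^t$. Its matrix has as its first $t$ columns a Vandermonde matrix on distinct points, which is invertible, so the map is surjective. Hence every fiber has size $p^{k-t}$, and the outputs are exactly uniform on $\mathbb{F}_p^t$.

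The main obstacle is going from range $\mathbb{F}_p$ to range $[m]$ while keeping the probability exactly $m^{-t}$, as the definition demands. Reducing modulo $m$ is only approximately uniform unless $m\mid p$. I would handle this in one of two ways, depending on the structure of $m$.

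\textbf{Prime-power case.} When $m$ is a power of a prime $q$, I would instead work over $\mathbb{F}_{q^s}$ with $q^s\geq\max\{n,m\}$ minimal. The same Vandermonde argument works in any field. I would then compose with a fixed $\mathbb{F}_q$-linear surjection $\mathbb{F}_{q^s}\to\mathbb{F}_q^{\log_q m}\cong[m]$. Every fiber of this surjection has equal size, so a uniform tuple maps to a uniform tuple.

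\textbf{General $m$.} I would factor $m=\prod_j q_j^{e_j}$ and build one prime-power hash function per factor, with independent seeds. These combine through the Chinese remainder theorem: a tuple of independent uniform residues gives a uniform element of $\mathbb{Z}_m$. Each factor's field has size at most $\max\{n,m\}^{O(1)}$, and the number of factors times their bit lengths sums to $O(\log n+\log m)$ up to the $\log n$ overhead per factor. That overhead is harmless if the paper only uses $m\leq\poly(n)$. Alternatively, one may accept $1/\poly$ statistical error, which the paper already absorbs elsewhere.
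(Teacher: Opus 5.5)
You prove the proposition only when $m$ is a prime power. For composite $m$ your construction does not reach the stated $O\bigl(k(\log n+\log m)\bigr)$ bits, and neither of your fallbacks rescues the statement as written.

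The paper gives no proof of this proposition; it only cites Motwani--Raghavan, whose standard construction is your first step (a uniformly random polynomial of degree less than $k$ over a prime field). Your Vandermonde fiber count for that step is correct. Your prime-power paragraph is also correct and complete. With $q^s\geq\max\{n,m\}$ minimal you have $s\log q\leq\log m+\log\max\{n,m\}$. Since $s\geq e$ for $m=q^e$, an $\mathbb{F}_q$-linear surjection $\mathbb{F}_{q^s}\to\mathbb{F}_q^{e}$ exists and has equal fibers. So you get exact $k$-wise independence into $[m]$ with $O(k(\log n+\log m))$ bits whenever $m$ is a prime power.

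The gap is the general-$m$ case.
\begin{itemize}
\item In your CRT construction, every prime-power factor $q_j^{e_j}<n$ needs a field of size at least $n$. So the seed has length at least $k\log n$ times the number of such factors.
\item Take $m$ to be the product of the first $r$ primes and $n\geq m$. The seed is then at least $k\,\omega(m)\log n$, with $\omega(m)=r=\Theta(\log m/\log\log m)$. This exceeds the claimed $O(k(\log n+\log m))=O(k\log n)$ by an unbounded factor.
\item Taking $n=m$ shows that your remark about $m\leq n^{O(1)}$ is also wrong. You get $\Omega(k\log^2 n/\log\log n)$ bits, which is polylogarithmic but not $O(k\log n)$.
\item Tuning parameters cannot fix this. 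Each independently seeded component is itself a pairwise independent family on $[n]$. By the standard orthogonality (Rao) bound, any such family has more than $n$ members, so each component alone costs more than $\log n$ bits.
\item Your other fallback, tolerating inverse-polynomial statistical error, does not meet the paper's definition, which requires probability exactly $m^{-t}$.
\end{itemize}

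The prime-power case is all the paper actually uses. The only hashes are pairwise independent maps into $[b]$, and the analysis needs only $b\geq 4/\eps$ and $b=O(1/\eps)$. So $b$ may be rounded up to a power of two, which the directed section already does. The clean repair is to state the proposition for prime-power $m$. Otherwise you need a genuinely different construction for composite $m$, one that does not pay $\log n$ per distinct prime factor.
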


We also use the following Chernoff--Hoeffding-type concentration bound
for limited independence.

\begin{proposition}[\cite{schmidtss95}]
\label{prop:chernoff-limited}
Suppose $h:[n]\to[m]$ is sampled from a $k$-wise independent family,
where $n,m\geq1$ and $k\geq2$.
Let $u_1,\ldots,u_s\in[n]$ be distinct, where $0\leq s\leq n$, and let
$X_1,\ldots,X_s$ be random variables in $\set{0,1}$, where
$X_i=1$ if and only if $h(u_i)=1$.
Set $X:=\sum_{i=1}^s X_i$ and $\mu:=\E[X]$.
If $\mu>0$, then, for every $a>0$,
\[
    \Pr\paren{|X-\mu|\geq a}
    \leq
    \exp\!\left(
        -\left\lfloor
            \min\left\{
                \frac{k}{2},\,
                \frac{a^2}{3\mu},\,
                \frac{a}{3}
            \right\}
        \right\rfloor
    \right).
\]
If $\mu=0$, then $\Pr[X=0]=1$.
\end{proposition}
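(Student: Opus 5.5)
We take the route of Schmidt, Siegel, and Srinivasan, replacing the exponential moment (which needs full independence) by a degree-$t$ moment that only involves $t$ of the indicators at a time. Set $t:=\lfloor\min\{k/2,\,a^2/(3\mu),\,a/3\}\rfloor$. If $t=0$ the bound is $e^0=1$ and there is nothing to prove. If $\mu=0$, then $s=0$ (since $\mu=s/m$ with $m\geq1$), so $X\equiv 0$. Hence we may assume $t\geq1$ and $\mu>0$. The single structural fact we use is the following. Any polynomial in $X_1,\dots,X_s$ of degree at most $2t\leq k$ has the same expectation under $h$ as under a fully random function. This holds because each monomial depends on at most $2t$ distinct inputs $u_i$, and \Cref{def:hash} says those coordinates are exactly uniform and independent.

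\emph{Step 1 (moment reduction).} Apply Markov's inequality (\Cref{prop:markov}) to the even central moment:
\[
\Pr\paren{|X-\mu|\geq a}\leq \frac{\E[(X-\mu)^{2t}]}{a^{2t}}.
\]
Since $(X-\mu)^{2t}$ has degree $2t\leq k$, its expectation equals the fully independent value. So it suffices to bound the $2t$-th central moment of a sum of independent Bernoulli$(1/m)$ variables with mean $\mu=s/m$.

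\emph{Step 2 (independent moment bound).} For independent $Y_i=X_i-1/m$ with $|Y_i|\leq1$ and $\sum_i\E[Y_i^2]\leq\mu$, expand $\E[(\sum Y_i)^{2t}]$ over multisets of indices. Terms in which some index appears exactly once vanish. A term using $j$ distinct indices, each at least twice, contributes at most $\mu^j$, up to a combinatorial count of set partitions of $[2t]$ into $j$ blocks of size at least $2$. Summing over $j\leq t$ should give
\[
\E[(X-\mu)^{2t}]\leq \bigl(c\max\{t\mu,\,t^2\}\bigr)^t
\]
for an absolute constant $c$. This is the Bernstein-type moment bound, obtained by comparing with Poisson moments.

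\emph{Step 3 (plugging in).} By the choice of $t$ we have $t\mu\leq a^2/3$ and $t^2\leq a^2/9$. Hence
\[
\frac{c\max\{t\mu,t^2\}}{a^2}\leq \frac{c}{3},
\]
and Step 1 gives a bound of $(c/3)^t$.

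I expect the main obstacle to be Step 3. The crude estimate $(c/3)^t$ does not automatically yield $e^{-t}$ with the exact constants $3$ in the statement. If it falls short, my fallback is the sharper SSS device based on elementary symmetric polynomials, applied to each tail separately. For the upper tail, write $S_t(X)=\binom{X}{t}=\sum_{|I|=t}\prod_{i\in I}X_i$. Since $S_t(X)$ is increasing in $X$,
\[
\Pr[X\geq\mu+a]\leq \frac{\E[S_t(X)]}{\binom{\mu+a}{t}},
\]
and $t$-wise independence gives exactly $\E[S_t(X)]=\binom{s}{t}m^{-t}\leq \mu^t/t!$. One then checks that $\mu^t/\bigl(t!\binom{\mu+a}{t}\bigr)\leq e^{-t}$ when $t\leq\min\{a^2/(3\mu),a/3\}$, using $\prod_{\ell<t}\frac{\mu}{\mu+a-\ell}$ together with $\ln(1+x)\geq x-x^2/2$. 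The lower tail follows in the same way from the complementary indicators $1-X_i$, which are again $t$-wise independent, with $t\leq k/2$ leaving room to treat both tails.
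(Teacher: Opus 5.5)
\textbf{There is a genuine gap: the main route stops before the constant that matters, and the fallback is false in the relevant regime.}

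The paper gives no proof of this proposition; it is quoted from Schmidt--Siegel--Srinivasan. Your Steps~1--2 are essentially their route for two-sided bounds of this shape. The reduction in Step~1 is correct: Markov on the $2t$-th central moment, with $2t\le k$ making that moment equal to its fully independent value.

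As written, though, the argument does not close. Step~2 only asserts $\mathbb{E}[(X-\mu)^{2t}]\le(c\max\{t\mu,t^2\})^t$ for an unspecified absolute constant. Step~3 then needs exactly $c\le 3/e$, which you never establish.

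The fallback does not repair this, because it fails exactly where $a^2/(3\mu)$ is the binding term:
\begin{itemize}
\item The factorial-moment bound equals $\prod_{\ell<t}\mu/(\mu+a-\ell)\ge(1+a/\mu)^{-t}$. This is at most $e^{-t}$ only if $a/\mu\ge e-1$. With degree $t$ of order $a^2/\mu$ rather than $a$, the bound cannot reach $e^{-t}$ when $a/\mu<e-1$.
\item Concretely, $\mu=10^6$ and $a=1800$ give $t=1$ for every $k\ge2$. The bound is then just Markov, $\mu/(\mu+a)\approx 0.998>e^{-1}$.
\item Even at $a=\mu$ the product is at least $2^{-t}>e^{-t}$. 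The symmetric-polynomial theorem of SSS needs degree about $a$, not about $a^2/\mu$.
\item The lower tail via $1-X_i$ is worse, since their mean $s-\mu=\mu(m-1)$ replaces $\mu$ in the product.
\item Adding two one-sided bounds gives $2e^{-t}$, not $e^{-t}$.
\end{itemize}

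\textbf{How to finish along your main route.} Make Step~2 exact rather than lossy by a constant.
\begin{enumerate}
\item Since $|Y_i|\le 1$, you have $|\mathbb{E}[Y_i^r]|\le\mathbb{E}[Y_i^2]\le 1/m$ for every $r\ge 2$. Expanding over set partitions of the $2t$ positions then gives $\mathbb{E}[(X-\mu)^{2t}]\le M_{2t}(\mu):=\sum_{j=1}^{t}S_2(2t,j)\mu^j$. Here $S_2(2t,j)$ counts partitions of $[2t]$ into $j$ blocks of size at least $2$, so $M_{2t}(\mu)$ is the $2t$-th central moment of a Poisson$(\mu)$ variable.
\item $M_{2t}$ is increasing in $\mu$, and $M_{2t}(\lambda)/\lambda^t$ is decreasing in $\lambda$. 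Under $\mu\le a^2/(3t)$ and $a\ge 3t$, the worst case is therefore $\mu=a=3t$. It suffices to show $M_{2t}(3t)\le(3t)^{2t}e^{-t}$.
\item For $t\le 4$ this is a direct check: $3\le 9/e$, $114\le 6^4e^{-2}$, $12969\le 9^6e^{-3}$, and $3041148\le 12^8e^{-4}$.
\item For $t\ge 5$, use $x^{2t}\le(2t/(e\theta))^{2t}(e^{\theta x}+e^{-\theta x})$ together with the Poisson moment generating function $\exp(\lambda(e^{\theta}-1-\theta))$. Taking $\lambda=3t$ and $\theta=0.68$ bounds the ratio by $2e^{-0.15t}<1$.
\end{enumerate}
This gives the stated constants from a single two-sided moment bound, so no factor of $2$ is lost.
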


\section{Min-cut algorithm}
\begin{theorem}
\label{thm:formal-weighted}
For every $\eps\in(0,1)$, there is a randomized one-pass streaming algorithm that, given a weighted, undirected graph $G$ under general turnstile updates, uses $\Ot(n/\eps)$ bits of space and, with high probability, outputs a cut $\calC$ such that
$w_G(\calC)\le (1+\eps)\lambda(G).$ 
\end{theorem}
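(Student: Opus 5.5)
We prove \Cref{thm:formal-weighted} in two layers. The first is a for-each cut sketch: for any fixed cut $\calC$, it returns a $(1\pm\eps)$-estimate of $w_G(\calC)$ with probability $1-n^{-c}$, using $\Ot(n/\eps)$ bits. The second is the reduction of \cite{ding2025}: a constant-accuracy for-all sparsifier from \Cref{thm:weighted-for-all-sparsifier} with $\eps=0.1$, costing $\Ot(n)$ bits, is used to enumerate a polynomial-size family of candidate cuts, and each candidate is evaluated with the for-each sketch.

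\textbf{The sketch.} We follow the product-bucketing construction of the overview. We draw hash functions $h^L,h^R:V\to[B]$ with $B=\Theta(1/\eps)$; we take them $4$-wise independent via \Cref{prop:k-wise}, since only pairs of distinct vertices matter. For every left vertex $u^L$ and right bucket $j$ we keep three summaries of the edges $(u^L,v^R)$ with $h^R(v)=j$: an exact signed weight counter, an $L_1$-norm estimator block from \Cref{lem:cauchy-sketch}, and an $\ell_1$-sampler from \Cref{lem:l1Samp}. We keep the symmetric summaries for right vertices and left buckets. All of them are linear in the edge-weight vector, so general $\pm\Delta$ updates are handled directly.

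For a query $\calC$, we form for each cell $c=(i,j)$ the signed combination (left copies in $\calC$ with sign $+$, right copies in $\calC$ with sign $-$). This yields the exact value $m_c^+-m_c^-$, an estimator $A_c$, and a sampled crossing edge. Its sign determines $F_c$, so $F_c=1$ exactly when the sampled edge lies on the minority side. The cell output is $\widehat m_c=|m_c^+-m_c^-|+2A_cF_c$, and the overall output is $\widehat w=\sum_c\widehat m_c$.

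\textbf{Analysis of one repetition.} Conditioned on the bucketing, \Cref{lem:cauchy-l1-mass} (with $r=10$) and the independence of the sampler give $\E[\widehat m_c]=m_c^{\mathrm{tot}}$ and $\Var(\widehat m_c)\le 4C_r (m_c^{\mathrm{tot}})m_c^{\min}\le 8C_r m_c^+m_c^-$. The randomness is independent across cells, so the conditional variances add. Now take a positive crossing edge $(a^L,b^R)$, with $a\in\calC$ and $b\notin\calC$, and a negative one $(a'^L,b'^R)$, with $a'\notin\calC$ and $b'\in\calC$. Then $a\neq a'$ and $b\neq b'$, so by pairwise independence and the independence of $h^L$ from $h^R$, the two edges share a cell with probability exactly $B^{-2}$. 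Hence $\E\bigl[\sum_c m_c^+m_c^-\bigr]=m^+m^-/B^2\le w^2/(4B^2)$. By the law of total variance, $\Var(\widehat w)=O(w^2/B^2)\le \eps^2w^2/10$ for a large enough constant in $B$, and Chebyshev (\Cref{prop:chebyshev}) gives failure probability at most $1/10$. Taking the median of $O(\log n)$ independent repetitions lowers the failure probability to $n^{-c}$.

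The approximation errors of the primitives also need bounds. The sampler bias from \Cref{lem:l1Samp} and the precision loss from \Cref{lem:cauchy-sketch} are each $n^{-C}$-small in total variation, or relative to $\|x\|_1$. They are absorbed by the remark following those statements. Over all $2nB$ vertex--bucket pairs and $O(\log n)$ repetitions, the space is $\Ot(nB)=\Ot(n/\eps)$.

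\textbf{Reduction to min-cut.} Let $H$ be the $(1\pm0.1)$ for-all sparsifier. The candidate family is $\mathcal F=\{\calC: w_H(\calC)\le 1.5\,\lambda(H)\}$, computed offline from $H$. It contains every cut with $w_G(\calC)\le 1.2\lambda(G)$, and in particular the minimum cut of $G$. Each member satisfies $w_G(\calC)\le 2\lambda(G)$, so Karger's cut-counting bound gives $|\mathcal F|\le n^{O(1)}$. The sketch randomness is independent of $H$ and therefore of $\mathcal F$, so a union bound over $\mathcal F$ with per-query failure $n^{-c}$, for $c$ large enough, shows that every candidate is estimated within $1\pm\eps/3$. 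Returning the candidate with the smallest estimate then gives $w_G(\calC)\le\frac{1+\eps/3}{1-\eps/3}\lambda(G)\le(1+\eps)\lambda(G)$.

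The step needing the most care is this independence and union bound. The sketches for $H$ and for the for-each estimator must use disjoint random seeds, so that $\mathcal F$ is a function of randomness independent of the estimator. The cut-counting bound must hold for weighted graphs; Karger's $\alpha$-approximate bound $n^{2\alpha}$ does. With those two points in place, the remaining arguments are the ones above.
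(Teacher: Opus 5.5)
Your proposal is correct and follows the paper's proof step for step: the same product-bucketing sketch (exact counters, Cauchy $L_1$ blocks, independent $\ell_1$-samplers), the same $1/B^2$ opposite-sign collision bound with median amplification, and the same reduction via a constant-accuracy for-all sparsifier, Karger's cut counting, and a union bound over the candidate family. The one case you should state explicitly is $\lambda(H)=0$ (equivalently $\lambda(G)=0$): Karger's bound does not apply there and $\mathcal F$ can be exponentially large, but every member of $\mathcal F$ already has $w_G(\calC)=0$, so you can output any of them with no union bound, as the paper does explicitly for the analogous case in the $k$-partition corollary.
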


To prove this result, the main technical ingredient we require is a for-each cut sketch of the input graph. Given such a sketch, we can solve the $(1+\eps)$-approximate minimum cut problem as follows. By \Cref{thm:weighted-for-all-sparsifier}, we can store a for-all cut sparsifier with $\eps$ set to a constant using $\Ot(n)$ space in the general weighted turnstile model. Let $\widetilde\lambda$ denote the minimum-cut value of this sparsifier, and call a cut \emph{small} if its value in the sparsifier is at most a sufficiently large constant times $\widetilde\lambda$. Since the sparsifier approximates every cut within a constant factor, every small cut is a constant-factor approximate minimum cut of $G$. Karger's standard cut-counting bound~\cite{karger1993global} implies that there are only $n^{O(1)}$ such cuts. Moreover, a minimum cut of $G$ is itself small, so the small cuts form a polynomial-size candidate family containing a true minimum cut of $G$.

We can then query the for-each cut sketch with accuracy parameter $\eps$ and failure probability $1/\poly(n)$ on every candidate (small cut) and output the cut with the smallest estimated value. A union bound ensures that all candidate estimates are simultaneously correct. Since a minimum cut of $G$ belongs to the candidate family, the cut with the smallest estimated value cannot have value more than a $1+O(\eps)$ factor larger than the true minimum cut. This gives a $1+O(\eps)$ approximation, which can be made $1+\eps$ by rescaling. A complete proof of this reduction from for-each cut sparsification to approximate minimum cut appears in~\cite{ding2025}. Thus, the main technical ingredient we need is a for-each cut sketch in the general weighted turnstile model using $\Ot(n/\eps)$ space, and constructing such a sketch is our main technical result:

\begin{theorem}
\label{thm:changing-weights}
Fix $\eps\in(0,1)$ and $\delta\in(0,1/2)$. There is a randomized for-each cut sketch for weighted, undirected graphs $G$ in the general turnstile model such that, for every fixed query $\mathcal C\subseteq[n]$, the sketch returns an estimate $\widehat w(\mathcal C)$ satisfying
\[
(1-\eps)w_G(\mathcal C)
\leq \widehat w(\mathcal C)
\leq (1+\eps)w_G(\mathcal C)
\]
with probability at least $1-\delta$, using
$\Ot((n/\eps)\log(1/\delta))$ bits of space. The update time is $\Ot(\log(1/\delta))$. 
\end{theorem}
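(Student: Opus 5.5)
We follow the product-bucketing construction from the technical overview. Set $B=\lceil c/\eps\rceil$ for a large constant $c$. Sample a pairwise-independent hash $h^L:[n]\to[B]$ for left copies and an independent pairwise-independent $h^R:[n]\to[B]$ for right copies, using \Cref{prop:k-wise}. Each edge $\{u,v\}$ with $u<v$ is represented as $(u^L,v^R)$, and it lies in cell $c=(h^L(u),h^R(v))$. For every left vertex $u^L$ and every right bucket $j\in[B]$ we maintain three structures over the edges $(u^L,v^R)$ with $h^R(v)=j$: an exact weight counter, an $L_1$-norm estimator block (\Cref{lem:cauchy-l1-mass}, \Cref{lem:cauchy-sketch}, $r=10$), and an $\ell_1$-sampler (\Cref{lem:l1Samp}) indexed by edge coordinates. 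The right vertices get the symmetric structures. Within a cell, all vertices share the same Cauchy matrix and the same sampler map. An update $(e,\Delta)$ touches only two such triples, so one copy takes $\Ot(nB)=\Ot(n/\eps)$ bits and $\poly\log n$ update time.

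To answer a query $\calC$ for a cell $(i,j)$, we add the triples of $u^L\in L_i\cap\calC$ that are indexed by $j$ and subtract those of $v^R\in R_j\cap\calC$ that are indexed by $i$. By linearity, the counter equals $m_c^+-m_c^-$. The other two combined sketches are sketches of the vector whose nonzero entries are $\pm w_e$ over cut edges in $c$. This cancellation argument is exactly the one in the overview.

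Let $A_c$ be the block output and let $F_c$ indicate that the sampled edge has the minority sign. The sign is read off from the coordinate, since we know the orientation and $\calC$. We output $\widehat w=\sum_c\bigl(|m_c^+-m_c^-|+2A_cF_c\bigr)$ and set $F_c=0$ when $m_c^{\mathrm{tot}}=0$. Conditioned on the hashes, $\E[\widehat w]=w$ by independence of $A_c$ and $F_c$, and the cells use independent randomness. Hence $\Var(\widehat w\mid h)\le 4\sum_c C_r (m_c^{\mathrm{tot}})^2\cdot m_c^{\min}/m_c^{\mathrm{tot}}\le 8C_r\sum_c m_c^+m_c^-$.

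The key step is bounding $\E_h[\sum_c m_c^+m_c^-]=\sum_{e^+,e^-}w_{e^+}w_{e^-}\Pr[\text{same cell}]$. A positive edge $(u^L,v^R)$ has $u\in\calC$ and $v\notin\calC$. A negative edge $(u'^L,v'^R)$ has $u'\notin\calC$ and $v'\in\calC$. Thus $u\ne u'$ and $v\ne v'$. Pairwise independence of $h^L$ and of $h^R$, together with independence between them, gives probability exactly $1/B^2$. So the expectation is $m^+m^-/B^2\le w^2/(4B^2)$. The law of total variance then gives $\Var(\widehat w)\le 2C_r w^2/B^2\le \eps^2w^2/10$ for $c$ large, and Chebyshev gives success probability $9/10$.

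For the stated bounds, we take $O(\log(1/\delta))$ independent copies and return the median; a Chernoff bound (\Cref{prop:chernoff}) then gives failure probability $\delta$. The samplers are given constant failure probability. We must also account for failures and sampler bias, and for Cauchy discretization error. I would handle these by conditioning on the good event for each copy. A sampler failure, or any error event from \Cref{lem:cauchy-sketch}, happens with small probability per copy after a union bound over the $B^2\le n^2$ cells, so each copy is still good with constant probability. The $\pm d^{-C}$ bias and the $d^{-C}\|x\|_1$ additive errors sum to at most $\eps w/10$, since $w\ge1$ for nonzero integer weights; the case $w=0$ is exact. I expect this bookkeeping to be the main obstacle. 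It must keep per-copy space at $\Ot(n/\eps)$ while the union bound runs over $B^2$ cells, which forces the sampler and Cauchy failure parameters to be $1/\poly(n)$ at only a $\poly\log n$ space cost.
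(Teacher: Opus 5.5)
Your proposal is correct and follows the paper's proof essentially step for step: independent pairwise-independent product bucketing, exact recovery of each cell's imbalance, the unbiased correction $2A_cF_c$ with per-cell variance at most $8C_r m_c^+m_c^-$, the $1/B^2$ collision probability for opposite-sign pairs (they have distinct left and distinct right endpoints), then Chebyshev and a median over $O(\log(1/\delta))$ copies. One correction: your sentence giving the samplers constant failure probability contradicts your own (correct) closing remark. Their failure parameters must be inverse-polynomial in $n$ so that the union bound over the $B^2$ queried cells goes through at $\mathrm{polylog}\, n$ space cost, which is exactly the paper's choice.
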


\begin{table}[t] \label{table:notation}
\centering
\scriptsize

\begin{minipage}[t]{0.48\linewidth}
\centering
\textbf{Streaming construction}\\[1mm]
\begin{tabular}{@{}p{0.34\linewidth}p{0.60\linewidth}@{}}
\toprule
Symbol & Description \\
\midrule
$b=4/\eps$
    & Number of buckets \\

    $h^L,h^R$
    & Two independent bucket hashes\\

$L_i,R_j$; $c=(L_i,R_j)$
    & Left/right buckets and their cell \\

\mbox{$\ct^L(u,R_j),\ \ct^R(L_i,v)$}
    & Exact aggregate-weight counters for a vertex--bucket pair. \\

\mbox{$\mathcal N^L(u,R_j), \mathcal N^R(L_i,v)$}
    & $L_1$-norm estimator blocks for a vertex--bucket pair. \\

\mbox{$\mathcal S^L(u,R_j), \mathcal S^R(L_i,v)$}
    & $\ell_1$-samplers for a vertex--bucket pair. \\
\bottomrule
\end{tabular}
\end{minipage}
\hfill
\begin{minipage}[t]{0.48\linewidth}
\centering
\textbf{Query and analysis}\\[1mm]
\begin{tabular}{@{}p{0.35\linewidth}p{0.59\linewidth}@{}}
\toprule
Symbol & Description \\
\midrule
$\calC$; $\chi_u$; $s_e$
    & Query cut, $\chi_u=\mathbf 1[u\in\calC]$, and cut-edge sign. \\

$\mpos{c},\mneg{c}$
    & Positive and negative cut-edge weights in cell $c$. \\

$\mtot{c},\mdif{c},\mmin{c}$
    & Total mass, signed difference, and minority mass in $c$. \\

$m^+,m^-$
    & Graph-wide positive and negative cut-edge weights. \\

$A_c$
    & $L_1$-norm estimate of $\mtot{c}$. \\

$F_c$
    & Indicator that the $L_1$-sample has the minority sign. \\

$\widehat m_c^{\mathrm{tot}}$; $\widehat w(\calC)$
    & Cell estimate and final cut estimate. \\
\bottomrule
\end{tabular}
\end{minipage}

\caption{Notation for the weighted for-each cut sketch.}
\label{tab:weighted-notation}
\end{table}

The \textbf{bipartite representation} of $G$ is the weighted
bipartite graph $H=(V^L,V^R,E_H)$, where $V^L$ and $V^R$ are copies of $V$ on the left and right sides.
For every edge $\{u,v\}\in E$ with $u<v$, the graph $H$
contains the single edge $(u^L,v^R)$ with weight $w_{uv}$. For a potential edge $\{u,v\} \notin E$, take $w_{uv}=0$. Thus, every edge of $G$ appears exactly
once in $H$.

We partition each side ($V^L,V^R$) into $b:=4/\eps$ \textbf{buckets}.
Fix independent pairwise-independent hash functions
$h^L,h^R:[n]\to[b]$. For $i,j\in[b]$, define the \textbf{left bucket}
$L_i=\{u^L:h^L(u)=i\}$ and the \textbf{right bucket}
$R_j=\{v^R:h^R(v)=j\}$. A \textbf{cell} is a pair
$c=(L_i,R_j)$. We identify this cell with its index pair $(i,j)$.

For every left vertex $u$ and right bucket $R_j$, define
\[
\ct^L(u,R_j)
=
\sum_{\substack{v>u\\ h^R(v)=j}} w_{uv},
\]
and, symmetrically, for every left bucket $L_i$ and right vertex $v$, define
\[
    \ct^R(L_i,v)
    =
    \sum_{\substack{u<v\\  h^L(u)=i}} w_{uv}.
\]

The sketch explicitly stores these two families of exact counters. The order of the arguments follows the left-to-right orientation of $H$.

$\mathcal{N}^L(u,R_j)$ denotes an $L_1$-norm estimator block for the edge-weight vector on edges $e=(u^L,v^R)$ of $H$ with $v^R\in R_j$, and $\mathcal{S}^L(u,R_j)$ denotes an $L_1$-sampler for the same edge-weight vector. Symmetrically, $\mathcal{N}^R(L_i,v)$ and $\mathcal{S}^R(L_i,v)$ denote the corresponding structures for the edge-weight vector on edges entering $v^R$ from $L_i$.

For each cell $c=(L_i,R_j)$, all left $L_1$-norm estimator blocks $\mathcal{N}^L(u,R_j)$ with $h^L(u)=i$ and all right blocks $\mathcal{N}^R(L_i,v)$ with $h^R(v)=j$ use the same linear map, so their sketches can be added and subtracted. The same holds for the corresponding $L_1$-samplers. Different cells use independent randomness, and the $L_1$-norm estimator blocks use randomness independent of the $L_1$-samplers. 

We now present our algorithm. See \Cref{fig:roadmap} for a summary of the steps and \Cref{tab:weighted-notation} for a table with the relevant notation.

\begin{figure}[t] 
\centering
\resizebox{1.1\linewidth}{!}{%
    \begin{tikzpicture}[
    >=Latex,
    font=\small,
    flow/.style={-{Latex[length=3mm,width=2mm]}, line width=1.2pt, color=colorFive},
    phase/.style={draw=none, rounded corners=6pt, line width=0.8pt, inner sep=8pt},
    stream/.style={draw=colorFive, rounded corners=5pt, line width=1pt, fill=colorTwo!15,
        align=center, minimum height=15mm, inner sep=6pt},
    alg/.style={draw=colorEight, rounded corners=5pt, line width=0.9pt, fill=colorFour!20,
        align=center, text width=39mm, minimum height=28mm, inner sep=5pt},
    algsel/.style={alg, fill=colorThree!15},
    algest/.style={alg, fill=colorOne!16},
    algcell/.style={alg, fill=colorSix!15},
    algout/.style={alg, fill=colorSeven!14},
    proof/.style={draw=colorNine, rounded corners=4pt, line width=0.8pt, fill=white,
        align=left, text width=39mm, minimum height=27mm, inner sep=5pt, font=\scriptsize},
    numbadge/.style={circle, draw=colorEight, fill=white, font=\bfseries\small,
        minimum size=6mm, inner sep=0pt}
]


\node[stream, text width=39mm] (updates) at (5.0,7.6)
    {\raggedright\textbf{Dynamic stream}\\updates to edge weights};

\node[stream, text width=96mm, right=15mm of updates] (state)
   {\raggedright\textbf{Maintain vertex--bucket structures}\\[1mm]
 1. exact counters: $\ct^L(u,R_j),\ \ct^R(L_i,v)$\\
 2. $L_1$-norm estimator blocks: $\mathcal N^L(u,R_j),\ \mathcal N^R(L_i,v)$\\
 3. $L_1$-samplers: $\mathcal S^L(u,R_j),\ \mathcal S^R(L_i,v)$\\
 an update to $(u^L,v^R)$ is applied to the structures for
$(u^L,R_{h^R(v)})$ and $(L_{h^L(u)},v^R)$\par};
\draw[flow] (updates) -- (state);

\begin{scope}[on background layer]
\node[phase, fit=(updates)(state)] (streamphase) {};
\end{scope}
\node[fill=white, inner sep=3.5pt, font=\bfseries\small, text=colorFive]
at (streamphase.north) {STREAMING MAINTENANCE};

\node[alg] (a1) at (1.9,2.7)
    {\textbf{Recover cell imbalance}\\[1mm]
 Receive query cut $\mathcal C$ and 
 for each cell $c$, combine the exact counters to recover
 \[
 \mdif{c}=m_c^+-m_c^-,
 \]
 which determines the minority sign in $c$.};

\node[algsel, right=7mm of a1] (a2)
    {{\textbf{Estimate $m_c^{\mathrm{tot}}$}\\[1mm]
 For each cell $c$, combine the $L_1$-norm sketches according to $\mathcal C$ and get approximation
 \[
 A_c\approx \mtot{c}=m_c^++m_c^-.
 \]}};

\node[algest, right=7mm of a2] (a3)
    {{\textbf{Estimate \\minority mass}\\[1mm]
 For each cell $c$, combine the $\ell_1$-samplers according to $\mathcal C$ and 
 let $F_c$ indicate whether the sampled edge has the minority sign, so
 \[
 \Pr[F_c=1]=\frac{m_c^{\min}}{m_c^{\mathrm{tot}}}.
 \]}};

\node[algcell, right=7mm of a3] (a4)
    {{\textbf{Estimate one cell}\\[1mm]
 Form 
 \[
 \widehat m_c^{\mathrm{tot}}
 =
 |\mdif{c}|+2A_cF_c.
 \]
 This is an unbiased estimate of
 $\mtot{c}=m_c^++m_c^-$.}};

\node[algout, right=7mm of a4] (a5)
    {{\textbf{Aggregate and amplify}\\[1mm]
 Sum the cell estimates
 \[
 \widehat w(\mathcal C)
 =
 \sum_{c\in[b]^2}\widehat m_c^{\mathrm{tot}}
 \]
 and return the \\median over independent repetitions.}};

\foreach \x/\n in {a1/1,a2/2,a3/3,a4/4,a5/5}{
    \node[numbadge] at ($ (\x.north west)+(2mm,-2mm) $) {\n};
}

\draw[flow] (a1) -- (a2);
\draw[flow] (a2) -- (a3);
\draw[flow] (a3) -- (a4);
\draw[flow] (a4) -- (a5);

\begin{scope}[on background layer]
\node[phase, fit=(a1)(a2)(a3)(a4)(a5)] (queryphase) {};
\end{scope}
\node[fill=white, inner sep=1.5pt, font=\bfseries\small, text=colorSix]
at (queryphase.north) {QUERY ALGORITHM};



\end{tikzpicture}%
}
\caption{Algorithm roadmap}
\label{fig:roadmap}
\end{figure}

\begin{algorithm}[!t]
\caption{Weighted For-Each Cut Sketch}
\label{alg:weighted-for-each-sp}
\begin{algorithmic}[1]
    \item[] \textbf{Input:} A dynamic stream of updates
    $(\{u,v\},\Delta)$ and parameter
    $\eps\in(0,1)$.
    \item[] \textbf{Output:} Given a cut $\calC$, an estimate of
    its total weight $w_G(\calC)$.

    \item[] \textbf{Initialization:}
    \STATE Set $b:=4/\eps$, and fix $h^L,h^R:[n]\to[b]$.
    \STATE Maintain the exact counters
    $\ct^L(u,R_j)$ and $\ct^R(L_i,v)$.
    \STATE Maintain $\mathcal{N}^L(u,R_j)$,
    $\mathcal{N}^R(L_i,v)$, $\mathcal{S}^L(u,R_j)$, and
    $\mathcal{S}^R(L_i,v)$ defined above.

    \item[] \textbf{On an update of value $\Delta\in\mathbb Z$ to
    $\{u,v\}$:}
    \STATE Order the endpoints so that $u<v$, and view the update as acting
    on $e=(u^L,v^R)$.
    \STATE Add $\Delta$ to $\ct^L(u,R_{h^R(v)})$ and
    $\ct^R(L_{h^L(u)},v)$.
    \STATE Apply $\Delta$ to edge $e$ in
    $\mathcal{N}^L(u,R_{h^R(v)})$ and
    $\mathcal{N}^R(L_{h^L(u)},v)$, and in
    $\mathcal{S}^L(u,R_{h^R(v)})$ and
    $\mathcal{S}^R(L_{h^L(u)},v)$.

    \item[] \textbf{Recovery for a cut $\calC\subseteq[n]$:}
    \STATE Set $\chi_u=\mathbf{1}[u\in\calC]$ for every $u\in V$, and set
    $s_{uv}=\chi_u-\chi_v$ for every $u<v$.
    \STATE For every cell $c=(L_i,R_j)$, compute
    \[
    \mdif{c}=\sum_{u:h^L(u)=i}\chi_u\ct^L(u,R_j)
        -\sum_{v:h^R(v)=j}\chi_v\ct^R(L_i,v).
    \]
    \STATE For every cell $c=(L_i,R_j)$, form the temporary
    $L_1$-norm estimator block
    \[
        \sum_{u:h^L(u)=i}\chi_u\mathcal{N}^L(u,R_j)
        -\sum_{v:h^R(v)=j}\chi_v\mathcal{N}^R(L_i,v),
    \]
    and let $A_c$ be the unbiased $L_1$-norm estimate obtained by decoding it.
    \STATE For every cell $c=(L_i,R_j)$, form the temporary
    $\ell_1$-sampler
    \[
        \sum_{u:h^L(u)=i}\chi_u\mathcal{S}^L(u,R_j)
        -\sum_{v:h^R(v)=j}\chi_v\mathcal{S}^R(L_i,v),
    \]
    query it, and let $e_c$ be the edge returned, where a sampler on the
    zero vector returns $\bot$.
    \STATE If $\mdif{c}\geq0$, let
    $F_c=\mathbf{1}[e_c\neq\bot\text{ and }s_{e_c}=-1]$.
    Otherwise, let
    $F_c=\mathbf{1}[e_c\neq\bot\text{ and }s_{e_c}=1]$.
    \STATE Set $\widehat{m}_c^{\mathrm{tot}}=|\mdif{c}|+2A_cF_c.
    $
    \RETURN $\widehat w(\calC)=\sum_{c\in[b]^2}\widehat{m}_c^{\mathrm{tot}}.
    $
\end{algorithmic}
\end{algorithm}

Configure every $L_1$-sampler to have failure probability $n^{-C}$ for a
sufficiently large constant $C$. The number of samplers used in a single
repetition is $\poly(n)$, so by a union bound, with high probability, none
of them outputs $\mathtt{FAIL}$. The $L_1$-norm estimator blocks satisfy the
expectation and second-moment guarantees of
\Cref{lem:cauchy-l1-mass}. For a target failure probability $\delta$,
maintain $t:=O\bigl(\log(1/\delta)\bigr)$ independent repetitions of the
sketch and return the median of their estimates.

\begin{claim}
\label{clm:weighted-space-update}
This claim is deterministic and holds for every choice of $h^L,h^R$ and the randomness used by the $L_1$-norm estimator blocks and $L_1$-samplers. A single repetition of the sketch uses $\Ot(n/\eps)$ bits of space and has $\Ot(1)$ worst-case update time. The full sketch uses $\Ot((n/\eps)\log(1/\delta))$ bits of space and has $\Ot(\log(1/\delta))$ worst-case update time.
\end{claim}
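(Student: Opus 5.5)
The plan is to count the stored objects in a single repetition, bound the bits each one occupies, and then bound the number of objects touched by one update. Every bound will be worst-case over $h^L,h^R$ and all internal randomness, which makes the claim deterministic. The counting is then multiplied by the $t=O(\log(1/\delta))$ repetitions.

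\textbf{Space.} A repetition stores three kinds of objects. First, the hash functions $h^L,h^R$, drawn from the pairwise-independent family of \Cref{prop:k-wise}, need $O(\log n+\log b)=O(\log n)$ bits, using $\eps\geq 1/n$. Second, for each left vertex $u$ and each $j\in[b]$ there is the triple $\ct^L(u,R_j)$, $\mathcal N^L(u,R_j)$, $\mathcal S^L(u,R_j)$, and symmetrically for each pair $(L_i,v)$. That gives exactly $2nb=8n/\eps$ vertex--bucket triples, a count that does not depend on where the hash functions send any vertex.

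Each counter is a sum of at most $n$ edge weights. Since the stream length and update magnitudes are $\poly(n)$, every intermediate value lies in $[-\poly(n),\poly(n)]$ and fits in $O(\log n)$ bits. Each $L_1$-norm estimator block with $r=10$ uses $\poly\log n$ bits by \Cref{lem:cauchy-sketch} with $d=\binom n2$. Each $\ell_1$-sampler with failure probability $n^{-C}$ uses $\Ot(\log^2 d\cdot\log n)=\poly\log n$ bits by \Cref{lem:l1Samp}.

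The one point needing care is the shared linear maps. All blocks in cell $(i,j)$ use the same map, but the randomness is stored once per cell as a short seed, not once per vertex. There are $b^2$ cells, and $b^2$ seeds could exceed $n/\eps$ when $\eps\ll 1/\sqrt n$. I expect this to be the main obstacle.

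The fix I would try first is to generate each cell's seed pseudorandomly from a single master seed, for instance by evaluating a $\poly\log n$-wise independent hash of the cell index with \Cref{prop:k-wise}. This reduces the seed storage to $\poly\log n$ bits. Alternatively, I would argue that the constructions of \Cref{lem:cauchy-sketch} and \Cref{lem:l1Samp} already derive their coefficients from a seed that can be indexed by the cell. If instead one insists on storing per-cell seeds explicitly, the cost is $\Ot(b^2)$, which is within $\Ot(n/\eps)$ whenever $1/\eps\leq n$. This case needs separate handling only in the regime $1/\eps>n$. Even then, the total is at most $\Ot(n^2)$, and storing the graph explicitly already fits in $\Ot(n/\eps)$.

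Putting these together, one repetition uses $O(\log n)+2nb\cdot\poly\log n=\Ot(n/\eps)$ bits.

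\textbf{Update time.} Line 4 of \Cref{alg:weighted-for-each-sp} orders the endpoints and evaluates $h^L(u)$ and $h^R(v)$, which costs $O(1)$ field operations on $O(\log n)$-bit words. Line 5 performs two counter additions. Line 6 applies one coordinate update to each of two estimator blocks and two samplers. Each of these costs $\poly\log n$ by \Cref{lem:cauchy-sketch} and \Cref{lem:l1Samp}, plus $\poly\log n$ time to regenerate the relevant cell's seed. The total is $\Ot(1)$ worst case, since no step depends on the bucket loads.

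\textbf{Repetitions.} The $t=O(\log(1/\delta))$ repetitions are independent copies, and an update is forwarded to each of them. This multiplies both bounds by $t$, giving $\Ot((n/\eps)\log(1/\delta))$ bits and $\Ot(\log(1/\delta))$ update time.
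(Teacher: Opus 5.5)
Your proposal is correct and follows the paper's proof: you count the $2nb$ counters, $L_1$-norm estimator blocks, and $\ell_1$-samplers at $\poly\log n$ bits each, note that an update touches two of each, and multiply by $t=O(\log(1/\delta))$. Your extra accounting of the per-cell shared seeds is a point the paper leaves implicit, and the paper's standing assumption $\eps\geq 1/n$ settles it directly, since $\Ot(b^2)=\Ot(1/\eps^2)\leq\Ot(n/\eps)$; so your threshold ``$\eps\ll 1/\sqrt n$'' should read $\eps\lesssim 1/n$, and the master-seed construction is unnecessary.
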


\begin{proof}
One repetition stores two families of $nb$ exact counters, for a total of $2nb$ counters. Since all intermediate edge weights are bounded by $\poly(n)$, each counter has magnitude $\poly(n)$ and hence uses $O(\log n)$ bits. Thus, the counters use $\Ot(nb)=\Ot(n/\eps)$ bits of space. The descriptions of $h^L,h^R$ use $\poly \log(n,1/\eps)$ bits. The two families of $\mathcal{N}$ structures store one $L_1$-mass estimator block for each of the $nb$ vertex--bucket pairs, and the two families of $\mathcal{S}$ structures store one $L_1$-sampler for each of the $nb$ vertex--bucket pairs. Thus, one repetition stores $2nb$ $L_1$-mass estimator blocks and $2nb$ $L_1$-samplers. By \Cref{lem:cauchy-sketch}, each $L_1$-mass estimator block uses $\poly \log(n)$ bits of space, and each $L_1$-sampler uses $\poly \log(n)$ bits of space. Since $b=O(1/\eps)$, these structures use $\Ot(n/\eps)$ bits of space.

An update changes two exact counters. It also changes two $L_1$-mass estimator blocks and two $L_1$-samplers. Each $L_1$-mass estimator block and each $L_1$-sampler can be updated in $\poly \log(n)$ time, so the resulting update time is $\Ot(1)$ for one repetition. Multiplying by $t=O(\log(1/\delta))$ proves the claimed bounds.
\end{proof}

We now analyze just one repetition of the algorithm.

\subsection{Analysis}

Fix a query cut $\calC$ and analyze one repetition of the sketch. For every
vertex $u$, define
\[
    \chi_u=\mathbf{1}[u\in\calC].
\]
For every ordered pair $u<v$, define the \textbf{signed cut value}
\[
    s_{uv}
    =(\chi_u-\chi_v)
    \in\{-1,0,1\}.
\]
The value $s_{uv}$ is nonzero exactly when $\{u,v\}$ crosses the cut. Such an
edge is called \textbf{positive} when $s_{uv}=1$ i.e. the edge is leaving the
cut in $G$ and \textbf{negative} when $s_{uv}=-1$ i.e. the edge is entering
the cut in $G$. We also write $s_e=s_{uv}$ and $w_e=w_{uv}$ for the
corresponding edge $e=(u^L,v^R)$ of $H$.

For every cell $c$, define
\[
    \mpos{c}=\sum_{\substack{e\in c\\s_e=1}} w_e,
    \qquad
    \mneg{c}=\sum_{\substack{e\in c\\s_e=-1}} w_e,
\]
\[
    \mtot{c}=\mpos{c}+\mneg{c},
    \qquad
    \mdif{c}=\mpos{c}-\mneg{c},
    \qquad
    \mmin{c}=\min\{\mpos{c},\mneg{c}\}.
\]
Thus, $\mtot{c}$ is the total weight of cut edges in cell $c$, and
$\mmin{c}$ is the total weight of the sign having smaller total weight.
Define the total weights of positive and negative cut edges by
\[
    m^+=\sum_{e:s_e=1}w_e=\sum_{c\in[b]^2}\mpos{c},
    \qquad
    m^-=\sum_{e:s_e=-1}w_e=\sum_{c\in[b]^2}\mneg{c},
\]
and define $m^{\mathrm{tot}}=m^++m^-$. Since every edge of $G$ appears in
exactly one cell of $H$,
\[
    m^{\mathrm{tot}}=m^++m^-=w_G(\calC)
    =\sum_{c\in[b]^2}\mtot{c}.
\]
For every cell, we use the identity
\begin{equation}
    \label{eq:weighted-length-minority}
    \mtot{c}=|\mdif{c}|+2\mmin{c}.
\end{equation}

We first verify that the exact counters recover $\mdif{c}$.

\begin{claim}
    \label{clm:weighted-exact-cell-difference}
    For every fixed outcome of $h^L,h^R$ and every cell $c=(L_i,R_j)$,
    \[
        \mdif{c}
        =\sum_{u:h^L(u)=i}\chi_u\ct^L(u,R_j)
        -\sum_{v:h^R(v)=j}\chi_v\ct^R(L_i,v).
    \]
\end{claim}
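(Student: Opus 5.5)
The plan is to expand both sums on the right-hand side using the definitions of the counters, swap the order of summation so that both become sums over edges $e=(u^L,v^R)$ of $H$ lying in the cell $c=(L_i,R_j)$, and then compare coefficients edge by edge. This is a purely deterministic, linear-algebraic identity, so no probabilistic tools from earlier are needed.

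First, by the definition of $\ct^L(u,R_j)$,
\[
\sum_{u:h^L(u)=i}\chi_u\ct^L(u,R_j)
=\sum_{\substack{u<v\\ h^L(u)=i,\ h^R(v)=j}}\chi_u w_{uv}.
\]
The right-hand side is a sum over exactly the edges $(u^L,v^R)$ of $H$ in cell $c$, where non-edges contribute $w_{uv}=0$. Symmetrically, the definition of $\ct^R(L_i,v)$ gives
\[
\sum_{v:h^R(v)=j}\chi_v\ct^R(L_i,v)=\sum_{\substack{u<v\\ h^L(u)=i,\ h^R(v)=j}}\chi_v w_{uv}.
\]
Both expressions range over the same index set of pairs $u<v$ with $h^L(u)=i$ and $h^R(v)=j$. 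Subtracting them therefore gives $\sum_{e\in c}(\chi_u-\chi_v)w_e=\sum_{e\in c}s_e w_e$.

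Second, split this sum by the value of $s_e\in\{-1,0,1\}$. Edges with $s_e=0$ drop out. Edges with $s_e=1$ contribute $\mpos{c}$, and edges with $s_e=-1$ contribute $-\mneg{c}$. The total is therefore $\mpos{c}-\mneg{c}=\mdif{c}$.

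The only step requiring care, and the nearest thing to an obstacle, is the bookkeeping in the first step. One must check that each edge of $G$ with $u<v$ is counted exactly once in each sum: in $\ct^L(u,R_{h^R(v)})$ through its left endpoint and in $\ct^R(L_{h^L(u)},v)$ through its right endpoint. One must also check that the restrictions $h^L(u)=i$ and $h^R(v)=j$ coincide in both sums, so that the two index sets agree and the coefficients combine into $s_e$. This matches the update rule of \Cref{alg:weighted-for-each-sp}, where each update touches exactly those two counters. By linearity, the final counters therefore equal the stated sums of final weights.
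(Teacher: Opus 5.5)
Your proposal is correct and follows the paper's proof: you expand both counter sums into sums over pairs $u<v$ with $h^L(u)=i$ and $h^R(v)=j$, combine them into $\sum_{e\in c} w_e s_e$, and identify this with $m_c^+-m_c^-$. Your extra remark, that the maintained counters equal these sums of final weights by linearity of the update rule, is a harmless addition that the paper leaves implicit.
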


\begin{proof}
    By the definitions of the two counters, the right-hand side equals
    \begin{align*}
        &\sum_{\substack{u<v\\h^L(u)=i,\ h^R(v)=j}}
        w_{uv}\chi_u
        -\sum_{\substack{u<v\\h^L(u)=i,\ h^R(v)=j}}
        w_{uv}\chi_v \\
        =&
\sum_{\substack{u<v\\h^L(u)=i,\ h^R(v)=j}}
        w_{uv}(\chi_u-\chi_v)
        =\sum_{e\in c}w_es_e
        =\mdif{c}.
    \end{align*}
\end{proof}

For every cell $c=(L_i,R_j)$, the combined sampler vector used by the recovery procedure has coefficient
$w_{uv}(\chi_u-\chi_v)=w_es_e$ 
on an edge $e=(u^L,v^R)$ in cell $c$. Consequently, its support consists exactly of the cut edges in $c$, the absolute value of the coordinate corresponding to an edge $e$ is $w_e$, and its sign is $s_e$. In particular, the $L_1$ norm of the combined vector is $\mtot{c}$.

We next record the distribution of the samples returned by the $\ell_1$-samplers.

\begin{lemma}
\label{lem:weighted-sample-recovery}
Fix any outcome of $h^L,h^R$ and condition on all queried $\ell_1$-samplers succeeding. For every cell $c$ with $\mtot{c}>0$ and every cut edge $e\in c$,
\[
\Pr[e_c=e]=\frac{w_e}{\mtot{c}}.
\]
If $\mtot{c}=0$, the combined sampler vector is zero.
\end{lemma}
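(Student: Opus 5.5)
The argument uses only linearity of the sketches and the exactness convention for $\ell_1$-samplers adopted after \Cref{lem:l1Samp}. First I would fix an outcome of $h^L,h^R$ and a cell $c=(L_i,R_j)$. All samplers $\mathcal S^L(u,R_j)$ with $h^L(u)=i$ and all $\mathcal S^R(L_i,v)$ with $h^R(v)=j$ share one linear map. The temporary sampler formed in the recovery step is therefore exactly the sampler sketch of the vector
\[
y^{(c)}=\sum_{u:h^L(u)=i}\chi_u x^L(u,R_j)-\sum_{v:h^R(v)=j}\chi_v x^R(L_i,v).
\]
Here $x^L(u,R_j)$ and $x^R(L_i,v)$ are the edge-weight vectors that the corresponding samplers are fed during the stream. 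Linearity holds at the level of the final vectors, since each update was applied linearly.

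Next I would compute $y^{(c)}$ coordinatewise, mirroring the proof of \Cref{clm:weighted-exact-cell-difference}. Take an edge $e=(u^L,v^R)$ with $u<v$. If $e\notin c$, it appears in none of the summed vectors, so $y^{(c)}_e=0$. If $e\in c$, it appears in $x^L(u,R_j)$ with coefficient $\chi_u$ and in $x^R(L_i,v)$ with coefficient $-\chi_v$. Hence $y^{(c)}_e=w_e(\chi_u-\chi_v)=w_es_e$. This is the observation recorded just before the lemma. It gives $|y^{(c)}_e|=w_e$ for cut edges in $c$ and $0$ otherwise, so $\|y^{(c)}\|_1=\mtot{c}$.

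Finally I would apply the sampler guarantee. If $\mtot{c}>0$, the vector $y^{(c)}$ is nonzero. The sampler does not fail by the conditioning, and \Cref{lem:l1Samp}, with samplers treated as exact, gives $\Pr[e_c=e]=|y^{(c)}_e|/\|y^{(c)}\|_1=w_e/\mtot{c}$. If $\mtot{c}=0$, then every cut edge in $c$ has weight $0$, because final weights are nonnegative. Every coordinate $w_es_e$ therefore vanishes, so $y^{(c)}=0$.

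The only delicate point is justifying that the sampler sketches compose to a sketch of $y^{(c)}$. This needs the shared-randomness convention within a cell, and also that the sampler's decoding depends only on the linear measurements. Both are given by the setup of the linear sketch, so I do not expect a genuine obstacle beyond stating these facts.
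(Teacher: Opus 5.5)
Your proposal is correct and follows the paper's argument. In both, linearity with shared per-cell randomness makes the combined sampler a sketch of the vector with coordinate $w_e s_e$ on edges of cell $c$ and zero elsewhere. Its $\ell_1$ norm is therefore $m_c^{\mathrm{tot}}$, and the sampler guarantee, treated as exact, gives $w_e/m_c^{\mathrm{tot}}$. Your explicit handling of the $m_c^{\mathrm{tot}}=0$ case via nonnegativity of the final weights fills in a step the paper leaves implicit.
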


\begin{proof}
As observed above, the combined sampler vector for cell $c$ has coordinate $w_es_e$ on each cut edge $e\in c$ and zero on every non-cut edge. Hence its $L_1$ norm is
$\sum_{\substack{e\in c\\s_e\neq0}}w_e=\mtot{c}.$
Therefore, by the guarantee of the $\ell_1$-sampler, the sampler returns a cut edge $e\in c$ with probability
$\frac{|w_es_e|}{\mtot{c}}=\frac{w_e}{\mtot{c}}.
$
\end{proof}

For a cell $c$ with $\mtot{c}>0$, let $F_c$ be the indicator that the sample $e_c$ has the minority sign, and define
\[
q_c=\frac{\mmin{c}}{\mtot{c}}.
\]
By \Cref{lem:weighted-sample-recovery}, the $L_1$-sample has the minority sign with probability $q_c$. Thus, $F_c\sim\operatorname{Bern}(q_c)$. If $\mtot{c}=0$, set $F_c=q_c=0$.
Let $A_c$ be the output of the $L_1$-mass estimator block for the combined vector of cell $c$. By \Cref{lem:cauchy-l1-mass},
\[
\E[A_c]=\mtot{c},
\qquad
\E[A_c^2]=C_r(\mtot{c})^2.
\]
Define the cell estimator as
\[
\widehat m_c^{\mathrm{tot}}
=|\mdif{c}|+2A_cF_c.
\]

\begin{claim}
\label{clm:weighted-cell-estimator}
Fix any outcome of $h^L,h^R$ and a cell $c$. We have
\[
\E[\widehat m_c^{\mathrm{tot}}]=\mtot{c},
\qquad
\Var(\widehat m_c^{\mathrm{tot}})
\leq 4C_r\mtot{c}\mmin{c}
=O(\mtot{c}\mmin{c}).
\]
\end{claim}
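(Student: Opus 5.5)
With $h^L,h^R$ fixed, every quantity defined from the cut is deterministic: $\mdif{c}$, $\mtot{c}$, $\mmin{c}$ and $q_c$. The only randomness left in $\widehat m_c^{\mathrm{tot}}$ comes from the $L_1$-norm estimator block (giving $A_c$) and the $\ell_1$-sampler (giving $F_c$) for cell $c$. These use independent randomness, so $A_c$ and $F_c$ are independent. By \Cref{clm:weighted-exact-cell-difference}, the term $|\mdif{c}|$ is computed exactly and is a constant.

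\textbf{Degenerate case.} If $\mtot{c}=0$, the combined vector is zero. The sampler then returns $\bot$, so $F_c=0$, and $\mdif{c}=0$. Hence $\widehat m_c^{\mathrm{tot}}=0=\mtot{c}$ deterministically, and the variance is $0$, so both claims hold trivially.

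\textbf{Expectation when $\mtot{c}>0$.} First I check that $F_c$ is the indicator of the minority sign, as the algorithm defines it.
\begin{itemize}
\item If $\mdif{c}\ge 0$, then $\mneg{c}\le\mpos{c}$, so $\mmin{c}=\mneg{c}$, and the algorithm tests $s_{e_c}=-1$.
\item If $\mdif{c}<0$, then $\mmin{c}=\mpos{c}$, and the algorithm tests $s_{e_c}=1$.
\end{itemize}
When $\mdif{c}=0$, either sign is a valid minority, since $\mpos{c}=\mneg{c}$.

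Summing \Cref{lem:weighted-sample-recovery} over the minority-sign edges gives $\Pr[F_c=1]=\mmin{c}/\mtot{c}=q_c$. By independence and \Cref{lem:cauchy-l1-mass}, $\E[A_cF_c]=\E[A_c]\,q_c=\mmin{c}$. Therefore
\[
\E[\widehat m_c^{\mathrm{tot}}]=|\mdif{c}|+2\mmin{c}=\mtot{c},
\]
where the last equality is identity~\eqref{eq:weighted-length-minority}.

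\textbf{Variance.} Since $|\mdif{c}|$ is a constant, $\Var(\widehat m_c^{\mathrm{tot}})=4\Var(A_cF_c)\le 4\E[A_c^2F_c^2]$. Because $F_c\in\{0,1\}$, we have $F_c^2=F_c$. Using independence again,
\[
\E[A_c^2F_c^2]=\E[A_c^2]\,q_c=C_r(\mtot{c})^2\cdot\frac{\mmin{c}}{\mtot{c}}=C_r\,\mtot{c}\,\mmin{c}.
\]
This gives the bound $4C_r\mtot{c}\mmin{c}$, which is $O(\mtot{c}\mmin{c})$ because $r=10$ is fixed and $C_r<4/3$.

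\textbf{Points needing care.} There is no real obstacle; three things must be stated explicitly.
\begin{itemize}
\item The expectation is taken conditionally on the fixed hashes and on all samplers succeeding, with samplers treated as exact as in the preliminaries.
\item The independence of $A_c$ and $F_c$ is exactly the design choice that $L_1$-norm estimator blocks and $\ell_1$-samplers use independent randomness.
\item The tie case $\mdif{c}=0$ has to be checked, as done above.
\end{itemize}
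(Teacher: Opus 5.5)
Your proof is correct and follows the paper's route. You fix the hashes, use the independence of $A_c$ and $F_c$, get unbiasedness from $\Pr[F_c=1]=q_c$ together with the identity $m_c^{\mathrm{tot}}=|m_c^{\mathrm{dif}}|+2m_c^{\min}$, and bound the variance using $F_c^2=F_c$ and $\mathbb{E}[A_c^2]=C_r(m_c^{\mathrm{tot}})^2$; the only cosmetic difference is that the paper first computes $\operatorname{Var}(A_cF_c)=(m_c^{\mathrm{tot}})^2q_c(C_r-q_c)$ exactly and then drops the $-q_c$ term, whereas you bound the variance by the second moment directly.
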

\begin{proof}
If $\mtot{c}=0$, the claim is immediate. Assume that
$\mdif{c}\geq0$. The other case is symmetric. Then
$\mpos{c}\geq\mneg{c}$, so the negative edges have the minority sign and
$\mmin{c}=\mneg{c}$. Thus,
\[
q_c=\frac{\mneg{c}}{\mtot{c}},
\qquad
F_c\sim\operatorname{Bern}(q_c),
\]
and
\[
\E[F_c]=q_c,
\qquad
\Var(F_c)=q_c(1-q_c).
\]

By \Cref{lem:cauchy-l1-mass},
$\E[A_c]=\mtot{c}$ and
$\Var(A_c)
=(C_r-1)(\mtot{c})^2.$
Moreover, $A_c$ is independent of $F_c$. Since
$|\mdif{c}|=\mpos{c}-\mneg{c}$,
\[
\begin{aligned}
\E[\widehat m_c^{\mathrm{tot}}]
&=\mpos{c}-\mneg{c}
+2\mtot{c}q_c \\
&=\mpos{c}+\mneg{c}
=\mtot{c}.
\end{aligned}
\]

For independent random variables $X$ and $Y$,
\[
\Var(XY)
=
\Var(X)\Var(Y)
+\Var(X)(\E[Y])^2
+\Var(Y)(\E[X])^2.
\]
Since $F_c^2=F_c$ and $A_c$ is independent of $F_c$,
\[
\begin{aligned}
\Var(A_cF_c)
&=\E[A_c^2]\E[F_c]-(\E[A_c]\E[F_c])^2 \\
&=C_r(\mtot{c})^2q_c-(\mtot{c})^2q_c^2 =(\mtot{c})^2q_c(C_r-q_c) \\
&\leq C_r(\mtot{c})^2q_c
=C_r\mtot{c}\mmin{c}.
\end{aligned}
\]

Finally, $|\mdif{c}|$ is fixed once $h^L,h^R$ are fixed, so
\[
\begin{aligned}
\Var(\widehat m_c^{\mathrm{tot}})
&=\Var\left(|\mdif{c}|+2A_cF_c\right)
=4\Var(A_cF_c) \\
&\leq 4C_r\mtot{c}\mmin{c}
=O(\mtot{c}\mmin{c}),
\end{aligned}
\]
where the last equality uses that $r$ is a fixed constant.
\end{proof}

For every cell,
\begin{equation}
    \label{eq:weighted-cell-minority-product}
    \mtot{c}\mmin{c}\leq2\mpos{c}\mneg{c}.
\end{equation}
Indeed, if $\mpos{c}\geq\mneg{c}$, then
$(\mpos{c}+\mneg{c})\mneg{c}\leq2\mpos{c}\mneg{c}$, and the other case is
symmetric. We now bound the sum of these products over all cells.

\begin{claim}
    \label{clm:weighted-opposite-collisions}
    For every fixed graph $G$ and cut $\calC$, the expectation over the
    independent bucket hashes $h^L,h^R$ satisfies
    \[
        \E_{h^L,h^R}\!\left[
            \sum_{c\in[b]^2}\mpos{c}\mneg{c}
        \right]
        =\frac{m^+m^-}{b^2}
        \leq\frac{w_G(\calC)^2}{4b^2}.
    \]
\end{claim}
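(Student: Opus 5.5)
We expand the product $\mpos{c}\mneg{c}$ into a sum over pairs of edges and apply linearity of expectation. For a cell $c=(L_i,R_j)$ we write
\[
\sum_{c\in[b]^2}\mpos{c}\mneg{c}
=\sum_{\substack{e=(u^L,v^R):\,s_e=1\\ f=(x^L,y^R):\,s_f=-1}} w_ew_f\,\mathbf 1\bigl[h^L(u)=h^L(x),\ h^R(v)=h^R(y)\bigr].
\]
This holds because each ordered positive--negative pair $(e,f)$ is counted in exactly the cell containing both, and is counted at most once. The weights $w_e$ and the signs $s_e$ are fixed, since $G$ and $\calC$ are fixed independently of the hashes. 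Taking expectations therefore reduces the claim to computing, for each such pair, the probability $\Pr[h^L(u)=h^L(x)]\cdot\Pr[h^R(v)=h^R(y)]$. The factorization uses the independence of $h^L$ and $h^R$.

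The key step is that the endpoints are distinct. Since $s_e=\chi_u-\chi_v=1$, we have $u\in\calC$ and $v\notin\calC$. Since $s_f=-1$, we have $x\notin\calC$ and $y\in\calC$. Hence $u\neq x$, because one lies in $\calC$ and the other does not, and likewise $v\neq y$. For distinct inputs, pairwise independence (\Cref{def:hash} with $t=2$) gives
\[
\Pr[h^L(u)=h^L(x)]=\sum_{a\in[b]}b^{-2}=1/b,
\]
and similarly $\Pr[h^R(v)=h^R(y)]=1/b$. Each pair thus collides with probability exactly $1/b^2$. Summing over pairs gives
\[
\E_{h^L,h^R}\Bigl[\sum_{c\in[b]^2}\mpos{c}\mneg{c}\Bigr]=\frac{1}{b^2}\sum_{e:s_e=1}w_e\sum_{f:s_f=-1}w_f=\frac{m^+m^-}{b^2}.
\]

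Finally, AM--GM gives $m^+m^-\le (m^++m^-)^2/4=w_G(\calC)^2/4$, using $m^{\mathrm{tot}}=w_G(\calC)$, and this yields the stated inequality. The only delicate point is the distinctness of endpoints. This is where the product bucketing could have failed: if a positive edge and a negative edge shared a left or right endpoint, they would always land in the same row or column. The sign structure of the bipartite orientation rules this out. Correlations between different pairs are irrelevant, because only linearity of expectation is used.
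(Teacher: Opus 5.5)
Your proof is correct and follows the paper's route. You expand $\sum_c m_c^+m_c^-$ over positive--negative edge pairs, use the signs to show that both the left endpoints and the right endpoints are distinct, and get collision probability $1/b^2$ from pairwise independence of each hash together with independence of $h^L$ and $h^R$; linearity of expectation and AM--GM then finish the argument.
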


\begin{proof}
    Consider a positive edge $e=(u^L,v^R)$ and a negative edge
    $e'=(u'^L,v'^R)$. We have
    $\chi_u=1$, $\chi_v=0$, $\chi_{u'}=0$, and $\chi_{v'}=1$. Hence
    $u\neq u'$ and $v\neq v'$. By pairwise independence of each hash
    function and independence between $h^L$ and $h^R$,
    \[
    \begin{aligned}
        \Pr[e\text{ and }e'\text{ are assigned to the same cell}]
        &=\Pr[h^L(u)=h^L(u')]\Pr[h^R(v)=h^R(v')] \\
        &=\frac{1}{b^2}.
    \end{aligned}
    \]
    The sum $\sum_c\mpos{c}\mneg{c}$ equals
    \[
        \sum_{\substack{e:s_e=1\\e':s_{e'}=-1}}
        w_e w_{e'}\,
        \mathbf{1}[e\text{ and }e'\text{ are assigned to the same cell}].
    \]
    Linearity of expectation therefore gives the equality. The inequality
    follows from
    \[
        m^+m^-\leq\frac{(m^++m^-)^2}{4}
        =\frac{w_G(\calC)^2}{4}.
    \]
\end{proof}

We can now prove the guarantee for one repetition. Let
$w=w_G(\calC)$. If $w=0$, every combined vector is zero and the algorithm
returns zero. Assume henceforth that $w>0$, and define
$\widehat w
    =\sum_{c\in[b]^2}\widehat m_c^{\mathrm{tot}}.$

Fix $h^L,h^R$. The sketch and sampler randomness is independent across
different cells, so the cell estimators are independent. By
\Cref{clm:weighted-cell-estimator} and its proof,
\[
    \E[\widehat w\mid h^L,h^R]=w
\]
and
\[
    \Var(\widehat w\mid h^L,h^R)
    \leq4C_r
    \sum_{c\in[b]^2}\mtot{c}\mmin{c}.
\]
So we have
\begin{align*}
    \E\left[(\widehat w-w)^2\right]
    &=
    \E_{h^L,h^R}\left[
        \Var(\widehat w\mid h^L,h^R)
    \right]
    && \tag{since $\E[\widehat w\mid h^L,h^R]=w$} \\
    &\leq
    4C_r
    \E_{h^L,h^R}\left[
        \sum_{c\in[b]^2}\mtot{c}\mmin{c}
    \right]
    && \tag{by \Cref{clm:weighted-cell-estimator}} \\
    &\leq
    8C_r
    \E_{h^L,h^R}\left[
        \sum_{c\in[b]^2}\mpos{c}\mneg{c}
    \right]
    && \tag{by \Cref{eq:weighted-cell-minority-product}} \\
    &\leq
    \frac{2C_rw^2}{b^2}\leq
    \frac{C_r\eps^2w^2}{8}
    && \tag{by \Cref{clm:weighted-opposite-collisions} and since $b\geq4/\eps$} 
\end{align*}
Chebyshev's inequality
therefore gives
\[
    \Pr[|\widehat w-w|>\eps w]
    \leq\frac{C_r}{8}
    <\frac{1}{6},
\]
where the final inequality uses $r=10$, for which $C_r<4/3$.

The analysis above treats the $\ell_1$-samplers as exact and works with the ideal Cauchy estimators. By choosing the failure probabilities in these primitives sufficiently small, the probability that any of these assumptions fails in one repetition is at most $n^{-10}$. Consequently, one repetition fails with probability at most
\[
    \frac{1}{6}+n^{-10}<\frac14.
\]
Thus, every repetition succeeds with probability at least $3/4$.

\begin{proof}[Proof of \Cref{thm:changing-weights}]
Run $t=O(\log(1/\delta))$ independent repetitions of
\Cref{alg:weighted-for-each-sp} and return the median estimate. The analysis
above shows that each repetition returns a $(1\pm\eps)$-approximation with
probability at least $3/4$, so \Cref{prop:chernoff} shows that the median
fails with probability at most $\delta$. The space bound follows from
\Cref{clm:weighted-space-update}.
\end{proof}

\section{\texorpdfstring{$\ell_p$}{Lp}-Norm \texorpdfstring{$k$}{k}-Partitioning}
The proof follows the same general strategy as our reduction from
for-each cut sparsification to minimum cut. We first use an independent
constant-accuracy for-all sparsifier to identify a polynomial-size family
of candidate $k$-partitions containing an optimum solution. For a
$k$-partition $\mathcal P=(S_1,\ldots,S_k)$, define
\[
\Phi_{p,G}(\mathcal P)
:=
\left(\sum_{i=1}^k w_G(S_i)^p\right)^{1/p}
\]
for $p<\infty$, and
\[
\Phi_{\infty,G}(\mathcal P)
:=
\max_{i\in[k]} w_G(S_i).
\]
Thus, the objective depends only on the values of the $k$ ordinary cuts
$S_1,\ldots,S_k$, so both the for-all sparsifier and our for-each
sketch can be used to estimate it through ordinary cut queries.
The key observation is that an optimum $\ell_p$-norm $k$-partition is
a constant-factor approximate minimum $k$-cut. The counting argument of
Chekuri, Quanrud, and Xu~\cite{chekuri2020lp} therefore implies that,
for fixed $k$, there are only polynomially many candidates. We can query
our for-each sketch on the $k$ parts of every candidate with
sufficiently small failure probability and take a union bound over all
queries. Selecting the candidate with minimum estimated $\ell_p$-norm
then gives a $(1+\eps)$-approximate minimum $\ell_p$-norm
$k$-partition, while the total space remains $\Ot(n/\eps)$.

\begin{corollary} \label{cor:kpartition}
    Fix a constant integer $k\geq2$ and let $p\in[1,\infty]$.
For every $\eps\in(0,1)$, there is a randomized one-pass streaming
algorithm that, given a weighted, undirected graph $G$ under general
turnstile updates whose final edge weights are nonnegative, uses
$\Ot(n/\eps)$ bits of space and, with high probability, outputs a
partition $\mathcal P$ of $V$ into $k$ nonempty parts satisfying
\[
\Phi_{p,G}(\mathcal P)
\leq
(1+\eps)\lambda_{k,p}(G),
\]
where
\[
\lambda_{k,p}(G)
:=
\min_{\mathcal P}\Phi_{p,G}(\mathcal P).
\] The update time is $t+ \poly \log n$ where $t$ is the update time of a constant-approximation for-all cut sparsifier.
\end{corollary}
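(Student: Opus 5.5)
The plan runs two sketches in parallel on the same stream: a $(1\pm 0.1)$ for-all cut sparsifier $H$ from \Cref{thm:weighted-for-all-sparsifier} (space $\Ot(n)$, update time $t$), and the for-each cut sketch of \Cref{thm:changing-weights} with accuracy $\eps/3$ and failure probability $\delta=n^{-C}$ for a large constant $C$ (space $\Ot(n/\eps)$, update time $\poly\log n$). The randomness of the two is independent. Because $H$ depends only on the sparsifier's randomness, any candidate family computed from $H$ is fixed independently of the for-each sketch. The for-each guarantee therefore applies to every cut in that family.

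\textbf{Candidate family.} First I relate the $\ell_p$ objective to the $k$-cut value. For any $k$-partition $\mathcal P$ we have $\sum_i w_G(S_i)=2w_G(\mathcal P)$. This gives $2w_G(\mathcal P)k^{1/p-1}\le \Phi_{p,G}(\mathcal P)\le 2w_G(\mathcal P)$, and for $p=\infty$ it gives $2w_G(\mathcal P)/k\le\Phi_{\infty,G}(\mathcal P)\le 2w_G(\mathcal P)$. Let $\mathcal P^*$ be an optimal $\ell_p$ partition and $\mathcal Q$ a minimum $k$-cut. Then
\[
2w_G(\mathcal P^*)/k\le \Phi_{p,G}(\mathcal P^*)\le\Phi_{p,G}(\mathcal Q)\le 2\lambda_k(G),
\]
so $w_G(\mathcal P^*)\le k\lambda_k(G)$. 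Since $H$ preserves every cut within $1\pm0.1$, it preserves every $k$-cut value $w(\mathcal P)=\frac12\sum_i w(S_i)$ within the same factor. Define the candidate family as all $k$-partitions whose $k$-cut value in $H$ is at most $2k\,\lambda_k(H)$. This family contains $\mathcal P^*$, and every member is an $O(k)$-approximate minimum $k$-cut of $G$. For fixed $k$, the counting bound for approximate minimum $k$-cuts of \cite{chekuri2020lp} (or Karger's) then gives $n^{O(k)}=\poly(n)$ candidates. The family can be enumerated offline at query time from $H$ without extra space beyond time.

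\textbf{Evaluation.} For each candidate, query the for-each sketch on its $k$ parts $S_1,\dots,S_k$, and form $\widehat\Phi(\mathcal P)=(\sum_i\widehat w(S_i)^p)^{1/p}$, or the maximum when $p=\infty$. A union bound over $k\cdot n^{O(k)}$ queries, with $C$ large, ensures that with high probability every estimate is within $1\pm\eps/3$. Since $\Phi_{p}$ is monotone and $1$-homogeneous in the vector of part weights, this implies $\widehat\Phi(\mathcal P)\in(1\pm\eps/3)\Phi_{p,G}(\mathcal P)$ for all candidates. Output the candidate $\widehat{\mathcal P}$ minimizing $\widehat\Phi$. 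Then
\[
\Phi_{p,G}(\widehat{\mathcal P})\le\frac{1+\eps/3}{1-\eps/3}\,\Phi_{p,G}(\mathcal P^*)\le(1+\eps)\lambda_{k,p}(G).
\]

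The space is $\Ot(n)+\Ot((n/\eps)\log n)=\Ot(n/\eps)$ and the update time is $t+\poly\log n$. The main obstacle is making the candidate-counting step rigorous. One needs the right form of the approximate $k$-cut counting theorem: the number of $\alpha$-approximate minimum $k$-cuts is $n^{O(\alpha k)}$ or similar, which is polynomial for constant $k$ and $\alpha=O(k)$. One must also check that it applies to $H$'s approximate cuts, which are $O(k)$-approximate in $G$. I would cite \cite{chekuri2020lp} for exactly this bound and apply it in $G$ rather than in $H$.
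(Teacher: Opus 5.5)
Your proposal is correct and follows essentially the same route as the paper. It maintains an independent constant-accuracy for-all sparsifier, shows the $\ell_p$-optimum is an $O(k)$-approximate minimum $k$-cut, and gets a polynomial-size candidate family from the Chekuri--Quanrud--Xu counting bound. It then takes a union bound over for-each queries on all parts, and differs from the paper only in constants: $1\pm 0.1$ versus $1\pm 1/4$, threshold $2k$ versus $k$, and your $w_G(\mathcal P^*)\le k\lambda_k(G)$ versus the paper's $\frac{k}{2}\lambda_k(G)$.

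The one thing to add is the degenerate case $\lambda_k(H)=0$. There the counting bound fails; for example, in an empty graph every $k$-partition is a candidate. The paper handles this case separately by outputting any zero-weight partition, which is trivially optimal since all its parts then have $w_G(S_i)=0$.
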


\begin{proof}
Set $\gamma:=\eps/3$. Independently maintain a $(1\pm1/4)$ for-all cut
sparsifier $H$ using \Cref{thm:weighted-for-all-sparsifier} and a
for-each cut sketch for $G$ using \Cref{thm:changing-weights} with
accuracy parameter $\gamma$ and a sufficiently small inverse-polynomial
failure probability.

Let $\lambda_k(G)$ denote the ordinary minimum $k$-cut value, and let
$\mathcal P^\star=(S_1^\star,\ldots,S_k^\star)$ be a minimum
$\ell_p$-norm $k$-partition of $G$. We first show that
$\mathcal P^\star$ is a constant-factor approximate minimum $k$-cut.
For every $k$-partition $\mathcal P=(S_1,\ldots,S_k)$,
\begin{equation}
\label{eq:k-cut-as-cuts-lp}
w_G(\mathcal P)
=
\frac12\sum_{i=1}^k w_G(S_i),
\end{equation}
since every edge crossing the partition is counted once from each of
its two endpoint parts.

Let $\mathcal P_\Sigma$ be a minimum $k$-cut of $G$. For $p<\infty$,
H\"older's inequality gives
\[
w_G(\mathcal P^\star)
=
\frac12\sum_{i=1}^k w_G(S_i^\star)
\leq
\frac12 k^{1-1/p}\Phi_{p,G}(\mathcal P^\star).
\]
Since $\mathcal P^\star$ minimizes $\Phi_{p,G}$,
\[
\Phi_{p,G}(\mathcal P^\star)
\leq
\Phi_{p,G}(\mathcal P_\Sigma).
\]
Moreover, every part $S_i$ of $\mathcal P_\Sigma$ has
$w_G(S_i)\leq\lambda_k(G)$, since every edge in its boundary crosses
the partition. Hence
\[
\Phi_{p,G}(\mathcal P_\Sigma)
\leq
k^{1/p}\lambda_k(G),
\]
and therefore
\[
w_G(\mathcal P^\star)
\leq
\frac{k}{2}\lambda_k(G).
\]
For $p=\infty$, the same conclusion follows from
\[
w_G(\mathcal P^\star)
\leq
\frac{k}{2}\Phi_{\infty,G}(\mathcal P^\star)
\leq
\frac{k}{2}\Phi_{\infty,G}(\mathcal P_\Sigma)
\leq
\frac{k}{2}\lambda_k(G).
\]
Thus, for every $p\in[1,\infty]$, an optimum $\ell_p$-norm
$k$-partition is a $(k/2)$-approximate minimum $k$-cut.

On the event that $H$ preserves every cut, it also preserves the value
of every $k$-cut:
\[
\frac34 w_G(\mathcal P)
\leq
w_H(\mathcal P)
\leq
\frac54 w_G(\mathcal P)
\]
simultaneously for every $k$-partition $\mathcal P$. In particular,
\[
\frac34\lambda_k(G)
\leq
\lambda_k(H)
\leq
\frac54\lambda_k(G).
\]

If $\lambda_k(H)=0$, output any $k$-partition $\mathcal P$ attaining
this value. The preceding inequalities imply that $w_G(\mathcal P)=0$,
and hence $\Phi_{p,G}(\mathcal P)=\lambda_{k,p}(G)=0$. Assume
henceforth that $\lambda_k(H)>0$. Define
\[
\mathcal F_k
:=
\left\{
\mathcal P:
w_H(\mathcal P)\leq k\lambda_k(H)
\right\}.
\]
Since
\[
w_H(\mathcal P^\star)
\leq
\frac54 w_G(\mathcal P^\star)
\leq
\frac{5k}{8}\lambda_k(G)
\leq
\frac{5k}{6}\lambda_k(H)
<
k\lambda_k(H),
\]
we have $\mathcal P^\star\in\mathcal F_k$. Conversely, every
$\mathcal P\in\mathcal F_k$ satisfies
\[
w_G(\mathcal P)
\leq
\frac43w_H(\mathcal P)
\leq
\frac{4k}{3}\lambda_k(H)
\leq
\frac{5k}{3}\lambda_k(G).
\]
Thus, every partition in $\mathcal F_k$ is a $(5k/3)$-approximate
minimum $k$-cut of $G$. The counting argument of Chekuri, Quanrud, and
Xu~\cite{chekuri2020lp} implies that, for fixed $k$ and
$\alpha\geq1$, the number of $\alpha$-approximate minimum $k$-cut
partitions is $n^{O(\alpha k)}$. Since $5k/3$ is a constant for fixed
$k$, it follows that
$|\mathcal F_k|
\leq
n^{D_k}$
for some constant $D_k$ depending only on $k$.

For each $\mathcal P=(S_1,\ldots,S_k)\in\mathcal F_k$, query the
for-each sketch on $S_1,\ldots,S_k$. For $p<\infty$, define
\[
\widehat\Phi_p(\mathcal P)
:=
\left(
\sum_{i=1}^k \widehat w(S_i)^p
\right)^{1/p},
\]
and for $p=\infty$ define
$\widehat\Phi_\infty(\mathcal P)
:=
\max_{i\in[k]}\widehat w(S_i).$
The for-all cut sparsifier $H$ is independent of the for-each sketch,
so conditioned on $H$, the family of queried cuts is fixed independently
of the for-each sketch's randomness. Configure each cut query to fail
with probability at most $n^{-(D_k+C+2)}$, where $C>0$ is a sufficiently
large constant. Since there are at most $kn^{D_k}$ queries, a union bound
implies that, with high probability,
\[
(1-\gamma)w_G(S_i)
\leq
\widehat w(S_i)
\leq
(1+\gamma)w_G(S_i)
\]
simultaneously for every part of every
$\mathcal P\in\mathcal F_k$. Consequently,
\[
(1-\gamma)\Phi_{p,G}(\mathcal P)
\leq
\widehat\Phi_p(\mathcal P)
\leq
(1+\gamma)\Phi_{p,G}(\mathcal P)
\]
simultaneously for every $\mathcal P\in\mathcal F_k$ and every
$p\in[1,\infty]$.

Let $\widehat{\mathcal P}$ minimize
$\widehat\Phi_p(\mathcal P)$ over $\mathcal F_k$. Since
$\mathcal P^\star\in\mathcal F_k$,
\[
\Phi_{p,G}(\widehat{\mathcal P})
\leq
\frac{\widehat\Phi_p(\widehat{\mathcal P})}{1-\gamma}
\leq
\frac{\widehat\Phi_p(\mathcal P^\star)}{1-\gamma}
\leq
\frac{1+\gamma}{1-\gamma}\lambda_{k,p}(G)
\leq
(1+\eps)\lambda_{k,p}(G),
\]
where the last inequality follows from $\gamma=\eps/3$ and
$\eps\in(0,1)$.

The constant-accuracy for-all sparsifier uses $\Ot(n)$ bits of space.
Moreover,
\[
\log\!\left(n^{D_k+C+2}\right)=O_k(\log n),
\]
so \Cref{thm:changing-weights} uses $\Ot(n/\eps)$ bits of space for the
required per-query failure probability.

Since $k$ is constant, the failure probability
$\delta=n^{-(D_k+C+2)}$ satisfies
$\log(1/\delta)=O_k(\log n)$. Thus, by
\Cref{thm:changing-weights}, the for-each cut sketch
has $\poly\log n$ update time. Since each update is
processed by both sketches, the total update time is
$t+\poly\log n$, where $t$ is the update time of the
constant-accuracy for-all cut sparsifier.
\end{proof}

\begin{remark}
The case $p=1$ recovers minimum $k$-cut: for every partition
$\mathcal P$,
\[
\Phi_{1,G}(\mathcal P)
=
\sum_{i=1}^k w_G(S_i)
=
2w_G(\mathcal P).
\]
Thus, minimizing the $\ell_1$ objective gives exactly the same partition
as minimum $k$-cut, and dividing the objective value by two gives the
ordinary $k$-cut value.

The same computation also yields an estimate of the optimum
$\ell_p$-norm value. On the high-probability event above,
\[
\widehat\lambda_{k,p}
:=
\min_{\mathcal P\in\mathcal F_k}
\widehat\Phi_p(\mathcal P)
\]
satisfies
\[
(1-\gamma)\lambda_{k,p}(G)
\leq
\widehat\lambda_{k,p}
\leq
(1+\gamma)\lambda_{k,p}(G).
\]
This also shows that our space bound for estimating the optimum value is
optimal up to polylogarithmic factors. Indeed, the
$\Omega(n/\eps)$ lower bound for estimating ordinary minimum
cut~\cite{ding2025} transfers to minimum $\ell_p$-norm
$k$-partitioning for every fixed $k$ and $p$. Given a graph $G$, adjoin
$k-2$ isolated vertices. Then, for $p<\infty$,
\[
\lambda_{k,p}\!\left(G\cup I_{k-2}\right)
=
2^{1/p}\lambda_2(G),
\]
while
\[
\lambda_{k,\infty}\!\left(G\cup I_{k-2}\right)
=
\lambda_2(G).
\]
Thus, a $o(n/\eps)$-space algorithm for estimating the optimum
$\ell_p$-norm $k$-partitioning value would imply a
$o(n/\eps)$-space algorithm for estimating ordinary minimum cut.
\end{remark}

\section{Two-cluster correlation clustering}

We now apply our for-each cut sketch to evaluating weighted two-cluster correlation clusterings. For every pair $\{u,v\}$, let $w^+_{uv}$ and $w^-_{uv}$ denote the nonnegative costs of placing $u$ and $v$ in different clusters and in the same cluster, respectively. A two-clustering can be represented by labels $x_v\in\{0,1\}$ for each vertex $v$, with disagreement cost
\[\operatorname{dis}_G(x)
:=
\sum_{u<v}
\left(
w^+_{uv}\mathbf{1}[x_u\neq x_v]
+
w^-_{uv}\mathbf{1}[x_u=x_v]
\right).\]
We impose no probability constraints on the positive and negative
weights. The key observation is that this disagreement cost can be expressed as the weight of a single cut in an auxiliary graph on two copies of the vertex set. Applying our for-each cut sketch to this graph then gives the desired space bound under general turnstile updates.

\begin{corollary}
\label{cor:correlation-clustering}
For every $\eps\in(0,1)$, there is a randomized one-pass streaming
algorithm that, given general turnstile updates to the positive and
negative constraint weights of a weighted correlation-clustering
instance on $n$ vertices whose final weights are nonnegative,
uses $\Ot(n/\eps)$ bits of space and, with high probability,
estimates the disagreement cost of any fixed two-clustering
within a
$(1\pm\eps)$ factor. The update time is $\poly\log n$.
\end{corollary}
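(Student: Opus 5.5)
The plan is to reduce evaluation of a two-clustering to a single cut query on an auxiliary graph. Then we apply \Cref{thm:changing-weights} with accuracy $\eps$ and failure probability $\delta=1/\poly(n)$.

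\textbf{Construction.} Build a graph $G'$ on $2n$ vertices $\{v_0,v_1: v\in V\}$. For each pair $u<v$:
\begin{itemize}
\item add edges $\{u_0,v_1\}$ and $\{u_1,v_0\}$, each of weight $w^+_{uv}/2$;
\item add edges $\{u_0,v_0\}$ and $\{u_1,v_1\}$, each of weight $w^-_{uv}/2$.
\end{itemize}
For a labeling $x$, define the cut
\[
\calC_x:=\{v_{x_v}:v\in V\}\cup\{v_{1-x_v}:\ \text{never}\}.
\]
The second set is empty, so this is just $\calC_x=\{v_{x_v}: v\in V\}$, which puts exactly one copy of each vertex on each side.

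\textbf{Checking the cut value.} Fix a pair $u,v$.
\begin{itemize}
\item If $x_u=x_v=a$, then $u_a,v_a\in\calC_x$ and $u_{1-a},v_{1-a}\notin\calC_x$. Both ``$-$'' edges are uncut. Both ``$+$'' edges $\{u_a,v_{1-a}\}$ and $\{u_{1-a},v_a\}$ cross. The contribution is $w^+_{uv}$.
\item Symmetrically, if $x_u\ne x_v$, exactly the two ``$-$'' edges cross, contributing $w^-_{uv}$.
\end{itemize}
This gives $w_{G'}(\calC_x)=\operatorname{dis}_G(\bar x)$, where $\bar x$ flips the roles of $w^+$ and $w^-$. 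That is the wrong pairing. To fix it, put the $w^+_{uv}/2$ weights on the same-index pairs and the $w^-_{uv}/2$ weights on the cross-index pairs. Then
\[
w_{G'}(\calC_x)=\operatorname{dis}_G(x).
\]
The cut is nontrivial since $n\ge1$. To avoid halving integer weights, I can instead use full weights and divide the final estimate by $2$.

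\textbf{Streaming and conclusion.} Each turnstile update $(\{u,v\},\Delta)$ to $w^+$ or $w^-$ maps to two turnstile updates on $G'$, which preserves linearity and final nonnegativity. The auxiliary graph has $2n$ vertices, so \Cref{thm:changing-weights} gives space $\Ot((2n/\eps)\log n)=\Ot(n/\eps)$ and update time $\poly\log n$. The fixed clustering $x$ is chosen independently of the sketch randomness, so $\calC_x$ is a valid for-each query, and the returned estimate is a $(1\pm\eps)$-approximation of $\operatorname{dis}_G(x)$ with high probability.

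The only delicate step is the case check above, namely getting the same-index versus cross-index assignment right so that the cut counts exactly the disagreements. Everything else is a direct invocation of the for-each sketch.
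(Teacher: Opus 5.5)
Your reduction is correct and is essentially the paper's: the paper uses the same two-copy gadget, with positive constraints on same-index edges and negative constraints on cross-index edges, but keeps only one edge of each type per pair, namely $\{u^0,v^0\}$ with weight $w^+_{uv}$ and $\{u^0,v^1\}$ with weight $w^-_{uv}$, together with the cut $\{v^0:x_v=1\}\cup\{v^1:x_v=0\}$ (the complement of your $\calC_x$), which avoids the halving issue you handle by dividing by $2$. In both versions the corollary then follows from a single for-each query to \Cref{thm:changing-weights} on $2n$ vertices with failure probability $1/\poly(n)$.
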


\begin{proof}
Construct an auxiliary graph $H$ on two copies $v^0,v^1$ of every
vertex $v\in V$. Fix an arbitrary ordering of the vertices. For every
pair $\{u,v\}$ with $u<v$, place an edge $\{u^0,v^0\}$ of weight
$w^+_{uv}$ for its positive constraint and an edge $\{u^0,v^1\}$ of
weight $w^-_{uv}$ for its negative constraint.

Each update to $w^+_{uv}$ or $w^-_{uv}$ corresponds to an update
of the same magnitude to the associated edge of $H$. Thus, we
can maintain a for-each cut sketch of $H$ directly under general
turnstile updates, without explicitly storing $H$.

Consider a two-clustering and let $x_v\in\{0,1\}$ indicate the cluster
containing $v$. Define the cut
\[
\calC_x
:=
\{v^0:x_v=1\}
\cup
\{v^1:x_v=0\}.
\]
The edge $\{u^0,v^0\}$ crosses $\calC_x$ exactly when $x_u\neq x_v$, while
the edge $\{u^0,v^1\}$ crosses $\calC_x$ exactly when $x_u=x_v$. Hence
\[
w_H(\calC_x)
=
\sum_{u<v} w^+_{uv}\mathbf 1[x_u\neq x_v]
+
\sum_{u<v} w^-_{uv}\mathbf 1[x_u=x_v],
\]
which is exactly the disagreement cost of the clustering.

Thus, evaluating a fixed two-clustering reduces to a single cut
query in a graph on $2n$ vertices. Since the clustering is fixed
independently of the sketch randomness, so is the corresponding
cut $\calC_x$. Applying \Cref{thm:changing-weights} to $H$ gives a
$(1\pm\eps)$ estimate using $\Ot(n/\eps)$ bits of space and
polylogarithmic update time.
\end{proof}

\begin{remark}
Our for-each cut sketch also applies to evaluating clusterings with
arbitrarily many clusters when the negative constraints are sparse.
Suppose $G^-$ has at most $s$ nonzero edges. Maintain an
$\Ot(n/\eps)$-space for-each cut sketch of $G^+$ and an
$\Ot(s)$-space turnstile sparse-recovery sketch of $G^-$.
For any fixed clustering $\calP=(S_1,\ldots,S_k)$, its disagreement
cost is
\[
\frac12\sum_{i=1}^k w_{G^+}(S_i)
+
\sum_{i=1}^k\sum_{\{u,v\}\subseteq S_i}w^-_{uv}.
\]
The second term is recovered exactly, while the first is
estimated using at most $n$ cut queries. A union bound and
nonnegativity of both terms give a $(1\pm\eps)$-approximation
using $\Ot(n/\eps+s)$ bits of space under general turnstile
updates. This improves the $\eps$-dependence of the
$\Ot(n/\eps^2+s)$-space baseline obtained by combining a
for-all cut sparsifier with exact storage of negative edges,
as in \cite{ACGMW2021correlation}.
\end{remark}

\section{Balanced directed graphs}

We now extend our for-each cut sketch to balanced directed graphs.
For $A,B\subseteq V$, let
$w_G(A,B)=\sum_{u\in A,v\in B}w_{uv}$, where $w_{uv}$ denotes the
weight of the directed edge $(u,v)$, taken to be zero if the edge
is absent. A directed graph is $\beta$-balanced if
$w_G(\calC,V\setminus\calC)\leq\beta\,w_G(V\setminus\calC,\calC)$
for every nontrivial cut $\calC\subsetneq V$. Unlike in the
undirected setting, our goal is to estimate the weight of edges
leaving the queried cut, rather than the total weight of edges
crossing it. We adapt the construction of our for-each cut sketch (\Cref{thm:changing-weights}) to distinguish
outgoing from incoming edges and exploit the balance condition
to obtain an $\Ot(n\sqrt{\beta}/\eps)$-space for-each cut sketch
under general turnstile updates.

\begin{theorem}
\label{thm:balanced-directed}
For every $\beta\geq 1$ and $\eps\in(0,1)$, there is a randomized
one-pass streaming algorithm that, given a weighted directed graph
under general turnstile updates (under the bounded-integer stream
assumptions of Section~2.1) whose final edge weights are
nonnegative and whose final graph is $\beta$-balanced, maintains a
for-each cut sketch using $\Ot(n\sqrt\beta/\eps)$ bits of space and
$\poly\log n$ worst-case update time. For every fixed nontrivial
$\calC\subsetneq V$ chosen independently of the sketch randomness, the
sketch returns an estimate $\widehat w_G(\calC)$ such that
\[
(1-\eps)w_G(\calC,V\setminus \calC)
\leq \widehat w_G(\calC)
\leq (1+\eps)w_G(\calC,V\setminus \calC)
\]
with high probability.
\end{theorem}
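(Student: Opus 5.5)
The plan is to reuse \Cref{alg:weighted-for-each-sp} almost verbatim on a directed bipartite representation, and to estimate the outgoing weight from an estimate of the total crossing weight plus the exactly known signed difference.

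\textbf{Construction.} Represent every directed edge $(u,v)$ with $u\neq v$ by the single edge $(u^L,v^R)$ of weight $w_{uv}$. The tail is on the left and the head is on the right, and there is no $u<v$ restriction, so the coordinates are indexed by ordered pairs and $d=n(n-1)$. Use $b:=\lceil c_0\sqrt\beta/\eps\rceil$ left and right buckets, chosen by independent pairwise-independent hashes $h^L,h^R$. For every vertex--bucket pair $(u,R_j)$ and $(L_i,v)$, maintain exactly the same three structures as before:
\begin{itemize}
\item the exact counters $\ct^L,\ct^R$;
\item an $L_1$-norm estimator block (\Cref{lem:cauchy-sketch});
\item an $\ell_1$-sampler (\Cref{lem:l1Samp}).
\end{itemize}
Linear maps are shared within a cell, and randomness is independent across cells and between the two primitives.

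\textbf{Query.} For a query $\calC$, set $s_{uv}=\chi_u-\chi_v$. An edge is positive exactly when it leaves $\calC$ and negative exactly when it enters $\calC$, so $m^+=w_G(\calC,V\setminus\calC)$ and $m^-=w_G(V\setminus\calC,\calC)$. The computation in \Cref{clm:weighted-exact-cell-difference} goes through unchanged and recovers $\mdif{c}=\mpos{c}-\mneg{c}$ exactly. It only uses bilinearity, not the orientation convention. The combined sampler and norm vectors are supported exactly on the crossing edges of the cell, with entry $w_es_e$, so \Cref{lem:weighted-sample-recovery} and \Cref{clm:weighted-cell-estimator} still give an unbiased $\widehat m^{\mathrm{tot}}_c$ with variance at most $4C_r\mtot{c}\mmin{c}\le 8C_r\mpos{c}\mneg{c}$, using \Cref{eq:weighted-cell-minority-product}. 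Since $\mpos{c}=(\mtot{c}+\mdif{c})/2$, we output
\[
\widehat w_G(\calC)=\sum_{c}\tfrac12\bigl(\widehat m^{\mathrm{tot}}_c+\mdif{c}\bigr).
\]
Conditioned on $h^L,h^R$ this is unbiased for $m^+$, and its variance is at most $2C_r\sum_c\mpos{c}\mneg{c}$.

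\textbf{Collision bound.} A positive edge $(u,v)$ has $\chi_u=1,\chi_v=0$, and a negative edge $(u',v')$ has $\chi_{u'}=0,\chi_{v'}=1$. Hence $u\neq u'$ and $v\neq v'$, and \Cref{clm:weighted-opposite-collisions} applies verbatim to give $\E\bigl[\sum_c\mpos{c}\mneg{c}\bigr]=m^+m^-/b^2$.

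\textbf{Using the balance condition (the key new step).} This replaces the AM--GM bound $m^+m^-\le w^2/4$. Since $\calC$ is nontrivial, so is $V\setminus\calC$. Applying $\beta$-balance to $V\setminus\calC$ gives $m^-=w_G(V\setminus\calC,\calC)\le\beta\,w_G(\calC,V\setminus\calC)=\beta m^+$. The law of total variance then gives
\[
\E\bigl[(\widehat w_G(\calC)-m^+)^2\bigr]\le 2C_r\beta(m^+)^2/b^2\le\eps^2(m^+)^2/10
\]
for $c_0$ a large enough constant. Chebyshev's inequality yields constant success probability per repetition, and we absorb the sampler and Cauchy discretization errors as before. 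If $m^+=0$, balance forces $m^-=0$, so all combined vectors vanish and the estimate is exactly $0$. Taking the median of $O(\log n)$ repetitions gives high probability by \Cref{prop:chernoff}.

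\textbf{Space and update time.} The same count as in \Cref{clm:weighted-space-update} gives $\Ot(nb)=\Ot(n\sqrt\beta/\eps)$ bits of space. Each update touches two structures per repetition, so the update time is $\poly\log n$.

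\textbf{Main obstacle.} The only delicate point is that the target is the one-sided quantity $m^+$, not $m^{\mathrm{tot}}$. The fix is that the exact difference $\mdif{c}$ converts an unbiased total estimate into an unbiased out-weight estimate at no extra variance cost. The balance inequality then charges the error $m^+m^-/b^2$ to $\beta(m^+)^2/b^2$, which is what forces only a $\sqrt\beta$ blowup in $b$ rather than $\beta$.
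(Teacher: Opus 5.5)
Your argument matches the paper's proof step for step. It uses the same directed bipartite representation without reordering endpoints and the same out-weight estimator $\frac{1}{2}(\widehat m_c^{\mathrm{tot}}+m_c^{\mathrm{dif}})$, with conditional variance at most $2C_r m_c^+m_c^-$. It also uses the same collision bound $m^+m^-/b^2$ and the same application of balance to $V\setminus\calC$ with $b=\Theta(\sqrt\beta/\eps)$. The one step that does not carry over verbatim is ``we absorb the sampler and Cauchy discretization errors as before.''

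In the undirected analysis, the finite-precision error in cell $c$ is $O(d^{-C}m_c^{\mathrm{tot}})$. So the cumulative error is $O(d^{-C}w)$, a negligible fraction of the target $w$. Here the same sum is $O(d^{-C}(m^++m^-))$, and the only bound your argument supplies is $O((1+\beta)d^{-C}m^+)$. This is exactly the one-sided issue you named as the main obstacle, but you resolved it only for the variance, not for the primitive errors. When $\beta$ is super-polynomial in $n$, this bound is not small for any fixed $C$. Relatedly, with $\beta$ unbounded your $b$ (and hence $\log b$) is unbounded. Hashing into $[b]$ and indexing the vertex--bucket structures then need not take $\poly\log n$ time per update, so the claimed update time is not justified.

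The paper closes this in three steps:
\begin{enumerate}
\item It stores the graph explicitly whenever $b>n$, at a cost of $\Ot(n^2)=\Ot(n\sqrt\beta/\eps)$ bits. Afterwards, $\beta$ and $1/\eps$ are $\poly(n)$.
\item It sets the primitives' precision to $\eta\leq\eps/(2(1+\beta))$. This bounds the total discretization error by $\eta(m^++m^-)\leq\eps m^+/2$.
\item It applies Chebyshev at deviation $\eps m^+/2$ to leave room for that error.
\end{enumerate}
Alternatively, you could use the bounded-integer assumption. If $m^+>0$ then $m^+\geq 1$, while $m^-\leq\poly(n)$, so $m^{\mathrm{tot}}/m^+\leq\poly(n)$ regardless of $\beta$. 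Moreover, every $\beta$-balanced input is then also $\min\{\beta,\poly(n)\}$-balanced, so $b$ can be chosen polynomially bounded. As usual, $\eps\geq 1/n$ may be assumed, since otherwise explicit storage fits the budget. With either addition your proof is complete.
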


\begin{proof}
We adapt the construction of \Cref{thm:changing-weights}. Set $b=2^{\lceil\log_2(8\sqrt{\beta}/\eps)\rceil}.$
If $b>n$, we store the entire directed graph explicitly using
$\Ot(n^2)=\Ot(n\sqrt{\beta}/\eps)$ bits. Henceforth assume $b\leq n$.

As for \Cref{thm:changing-weights}, create left and right copies of the vertex set,
but now represent each directed edge $(u,v)$ by $(u^L,v^R)$,
without reordering its endpoints. We remove the restriction $u<v$ from the counters and sketches
of Section~3, treating each ordered pair $(u,v)$ as a distinct
coordinate. We maintain the same
vertex--bucket counters, $\ell_1$-norm estimator blocks, and
$\ell_1$-samplers, using independent bucket hashes
$h^L,h^R:[n]\to[b]$.

Fix a query cut $\calC$ and set $\chi_u=\mathbf{1}[u\in\calC]$.
For each directed edge $e=(u,v)$, define
$s_e=\chi_u-\chi_v$. Thus, $s_e=1$ for edges leaving $\calC$,
$s_e=-1$ for edges entering $\calC$, and $s_e=0$ otherwise.
For each cell $c$, let $m_c^+$ and $m_c^-$ be the total
weights of edges with signs $1$ and $-1$, respectively, and define
\[
m_c^{\mathrm{tot}}=m_c^++m_c^-,
\qquad
m_c^{\mathrm{dif}}=m_c^+-m_c^-,
\qquad
m_c^{\min}=\min\{m_c^+,m_c^-\}.
\]
Exactly as in \Cref{thm:changing-weights}, the counters recover
$m_c^{\mathrm{dif}}$ exactly, while the combined sketches
estimate $m_c^{\mathrm{tot}}$.

Let $\widehat m_c^{\mathrm{tot}}$ be the ideal cell estimator
from Section~3. We estimate the outgoing weight in cell $c$ by
\[
\widehat m_c^+
:=
\frac{\widehat m_c^{\mathrm{tot}}
      +m_c^{\mathrm{dif}}}{2}.
\]
Since
$m_c^+=(m_c^{\mathrm{tot}}+m_c^{\mathrm{dif}})/2$,
\Cref{clm:weighted-cell-estimator} and
\eqref{eq:weighted-cell-minority-product} give
\[
\mathbb E[\widehat m_c^+]=m_c^+,
\qquad
\operatorname{Var}(\widehat m_c^+)
\leq C_r m_c^{\mathrm{tot}}m_c^{\min}
\leq 2C_r m_c^+m_c^-,
\]
where $C_r<4/3$. These statements are
conditional on the bucket hashes.

Write
\[
m^+=w_G(\calC,V\setminus\calC),
\qquad
m^-=w_G(V\setminus\calC,\calC),
\]
and let $\widehat w=\sum_c\widehat m_c^+$.
The cell estimators are conditionally independent, and
$\mathbb E[\widehat w\mid h^L,h^R]=m^+$.
Therefore,
\[
\operatorname{Var}(\widehat w)
\leq
2C_r\,\mathbb E\left[\sum_c m_c^+m_c^-\right].
\]

The collision argument of \Cref{clm:weighted-opposite-collisions}
applies unchanged: an outgoing edge and an incoming edge have
distinct left endpoints and distinct right endpoints, so they
occupy the same cell with probability $1/b^2$. Consequently,
\[
\mathbb E\left[\sum_c m_c^+m_c^-\right]
=\frac{m^+m^-}{b^2}.
\]
By $\beta$-balance, applied to the complementary cut,
$m^-\leq\beta m^+$. Hence
\[
\operatorname{Var}(\widehat w)
\leq
\frac{2C_r\beta(m^+)^2}{b^2}.
\]

If $m^+=0$, balance also implies $m^-=0$, and the estimate is
zero. Otherwise, Chebyshev's inequality (\Cref{prop:chebyshev}) and the choice of $b$
give
\[
\Pr\left[
|\widehat w-m^+|>\frac{\eps m^+}{2}
\right]
\leq
\frac{8C_r\beta}{b^2\eps^2}
\leq\frac{C_r}{8}<\frac16.
\]

Finally, we replace the ideal primitives by their finite-space
implementations, as in \Cref{thm:changing-weights}. Choose the primitive error
parameters sufficiently small that, except with
inverse-polynomial probability, all sampler outputs agree
with their ideal counterparts and every norm estimate has
additive error at most
$\eta m_c^{\mathrm{tot}}$, where
$\eta\leq\eps/(2(1+\beta))$.
This is possible with polylogarithmic space per primitive,
since $b\leq n$ implies $\beta$ and $1/\eps$ are
polynomially bounded in $n$. The resulting total error is at most
\[
\eta\sum_c m_c^{\mathrm{tot}}
=\eta(m^++m^-)
\leq\frac{\eps m^+}{2}.
\]
Thus, one repetition succeeds with constant probability, and
taking the median of $O(\log n)$ independent repetitions gives
high probability.

Each repetition maintains $O(nb)$ vertex--bucket structures,
and each update modifies only a constant number of them.
The total space is
$\Ot(nb)=\Ot(n\sqrt{\beta}/\eps)$ bits and the update time
is $\poly\log n$.
\end{proof}

\section*{AI disclosure}
ChatGPT 5.6 Sol was used in the preparation of this manuscript. It was used to a limited extent for rewording, presentation, and figure design.
AI assistance also contributed to the development of part of the technical result. In particular, while discussing several candidate hard instances and attempting to find counterexamples to them, the authors, in interaction with Sol, identified the bucketing idea underlying the algorithm. The subsequent development of the algorithm and its analysis was carried out by the authors. In addition, the authors brainstormed with various Gemini models. Part of the results of this brainstorming was the identification of the $\ell_p$-norm $k$-partitioning corollary. The other two extensions were identified and written by Sol. The authors assume responsibility for all content.

\bibliographystyle{alpha}
\bibliography{ref}

\appendix 

\section{Omitted proofs from preliminaries}\label{appen:prelim}
We prove \Cref{lem:cauchy-sketch} using the following two lemmas. 
\begin{lemma} \label{appen:lemCV}
Fix a constant integer $r>2$, let $x\in\mathbb{Z}^d$, and let $\eta\in(0,1/2)$. There is a finite-precision linear sketch $\widetilde X_i = \sum_{j=1}^d \widetilde\rho_{i,j}x_j$ for $i\in[r]$, such that, for \[ \widetilde A_r(x) = \cos^r\!\left(\frac{\pi}{2r}\right) \prod_{i=1}^r|\widetilde X_i|^{1/r}, \] we have $\Pr\left[ |\widetilde A_r(x)-A_r(x)| > \eta\|x\|_1 \right] \le \eta. $ Each $\widetilde\rho_{i,j}$ can be represented using $O(\log(d/\eta))$ bits. Moreover, each $\widetilde X_i$ can be represented using $O\!\left(\log(1+\|x\|_1)+\log(d/\eta)\right)$ bits.
\end{lemma}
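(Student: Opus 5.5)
We discretize each Cauchy coefficient to a multiple of a small grid width and then show that the rounding error barely moves the estimator. Concretely, set $\tau := \eta^{c}/d^{c}$ for a sufficiently large constant $c$, and let $\widetilde\rho_{i,j}$ be $\rho_{i,j}$ rounded to the nearest multiple of $\tau$. To keep the bit length bounded, we also truncate to $[-M,M]$ with $M := (rd/\eta)^{c}$. The standard Cauchy tail $\Pr[|\rho|>M]\le 2/(\pi M)$ together with a union bound over the $rd$ coefficients shows that truncation happens with probability at most $\eta/4$. On the complementary event each $\widetilde\rho_{i,j}$ is an integer multiple of $\tau$ of magnitude at most $M$, hence representable with $O(\log(d/\eta))$ bits. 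Since $x$ is integral, $\widetilde X_i/\tau$ is an integer of magnitude at most $(M/\tau)\|x\|_1$. This gives the stated $O(\log(1+\|x\|_1)+\log(d/\eta))$ bit bound for $\widetilde X_i$.

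Next we control the measurement error. On the good event,
\[
|\widetilde X_i-X_i|\le \tau\|x\|_1 \quad\text{for all } i\in[r].
\]
We then need a lower bound on $|X_i|$ so that this additive error becomes a relative one. By \Cref{lem:cauchy-stability}, $X_i\stackrel{d}{=}\|x\|_1 Z_i$ with $Z_i$ standard Cauchy. The Cauchy density is at most $1/\pi$, so
\[
\Pr\bigl[|Z_i|<\theta\bigr]\le 2\theta/\pi .
\]
Choosing $\theta:=\eta/(4r)$, all $|X_i|\ge \theta\|x\|_1$ except with probability $\eta/4$. Similarly, all $|X_i|\le \|x\|_1/\theta$ except with probability $\eta/4$.

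On the intersection of these events, $|\widetilde X_i|/|X_i| \in [1-\tau/\theta,\,1+\tau/\theta]$. Hence each factor $|\widetilde X_i|^{1/r}$ is within a multiplicative $1\pm 2\tau/\theta$ of $|X_i|^{1/r}$, and the product over $r$ factors is within $1\pm 4r\tau/\theta$. It follows that
\[
|\widetilde A_r(x)-A_r(x)|\le (4r\tau/\theta)\,A_r(x)\le (4r\tau/\theta)\cdot\theta^{-1}\|x\|_1 ,
\]
where the last step uses the upper bound on the $|X_i|$ and $\cos\le 1$. With $c$ large, this is at most $\eta\|x\|_1$. Summing the failure probabilities gives at most $\eta$. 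The case $x=0$ is trivial, since both estimators are $0$.

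The main obstacle is the near-zero event for $X_i$: the map $t\mapsto |t|^{1/r}$ is not Lipschitz at $0$, so additive rounding error cannot be converted to relative error without anti-concentration. This is exactly why we use the bounded Cauchy density. The truncation, and replacing full independence by a short seed, are routine. The seed belongs to the companion lemma using Indyk's pseudorandom generator and is not needed here.
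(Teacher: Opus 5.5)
Your proof is correct, but it handles the non-Lipschitz point of $t\mapsto|t|^{1/r}$ differently from the paper. You use the bounded Cauchy density to get $|X_i|\ge\theta\|x\|_1$ with $\theta=\eta/(4r)$. This turns the additive coefficient error $\tau\|x\|_1$ into a relative error $\tau/\theta$, after which you multiply the $r$ relative perturbations; it only needs a grid of width $\tau$ of order $\eta^3/r^3$.

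The paper uses no small-ball estimate at all. It bounds only the upper tail, $|X_i|\le T\|x\|_1$ with $T=4r/\eta$. It then applies the $1/r$-H\"older inequality $\bigl||u|^{1/r}-|v|^{1/r}\bigr|\le|u-v|^{1/r}$ and telescopes the product, obtaining $|\widetilde A_r(x)-A_r(x)|\le r\varepsilon^{1/r}(T+1)^{(r-1)/r}\|x\|_1$. So your closing remark, that anti-concentration is what makes the argument work, holds only for a relative-error route. H\"older continuity bypasses the issue at zero, at the price of a much finer precision $\varepsilon=(\eta/r)^r/(T+1)^{r-1}$, of order $\eta^{2r-1}$ up to $r$-dependent factors. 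That is still $O(\log(1/\eta))$ bits for constant $r$.

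What your route buys is a coarser grid whose degree in $\eta$ does not grow with $r$. Your explicit round-and-clip construction also makes the integrality argument for the bit length of $\widetilde X_i$ transparent, which the paper merely asserts. What the paper's route buys is that it uses only tail information and outsources the discretization to Indyk's bounded-precision construction. That construction also supplies that each $\widetilde\rho_{i,j}$ is generated from $O(\log(d/\eta))$ random bits, which the companion pseudorandomness lemma relies on. In your version, where $\widetilde\rho_{i,j}$ is defined as a function of the ideal real $\rho_{i,j}$, that property needs a routine addition, for example sampling via the inverse CDF from a discretized uniform.
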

\begin{proof}
If $x=0$, then $\widetilde A_r(x)=A_r(x)=0$, so the claim is
immediate. Henceforth assume $x\neq 0$.

Let
\[
    T=\frac{4r}{\eta}
    \qquad\text{and}\qquad
    \varepsilon=\frac{(\eta/r)^r}{(T+1)^{r-1}}.
\]
By the bounded-precision construction of \cite{Indyk2006}, for any $\delta>0$ we can construct $\widetilde\rho_{i,j}$ for each $\rho_{i,j}$ such that 
$\Pr\left[|\widetilde\rho_{i,j}-\rho_{i,j}|>\varepsilon\right]\le\delta,$
using $O(\log(1/\delta+1/\varepsilon))$ bits. Taking $\delta=\eta/(2rd)$ and applying a union bound, with probability at least $1-\eta/2$,
$|\widetilde\rho_{i,j}-\rho_{i,j}|\le\varepsilon$
simultaneously for all $i\in[r]$ and $j\in[d]$. On this event,
\[
    |\widetilde X_i-X_i|\le\varepsilon\|x\|_1
    \qquad\text{for every }i\in[r].
\]

By Cauchy $1$-stability, $X_i/\|x\|_1$ is a standard Cauchy random variable. Since a standard Cauchy variable $Z$ satisfies
$\Pr[|Z|>T]\le\frac{2}{\pi T},$
our choice of $T$ and a union bound imply that, with probability at least $1-\eta/2$,
\[
    |X_i|\le T\|x\|_1
    \qquad\text{for every }i\in[r].
\]
Hence, with probability at least $1-\eta$, both events hold simultaneously, and therefore $|\widetilde X_i|\le(T+1)\|x\|_1$
    for every $i\in[r].$
Set $q=1/r$. Using
$\bigl||u|^q-|v|^q\bigr|\le|u-v|^q
$
and telescoping the two products, we obtain
\[
\begin{aligned}
    |\widetilde A_r(x)-A_r(x)|
    &\le r\varepsilon^{1/r}(T+1)^{(r-1)/r}\|x\|_1 \\
    &=\eta\|x\|_1.
\end{aligned}
\]
This proves the claimed coupling guarantee.

Finally, $\log(1/\delta+1/\varepsilon)=O(\log(d/\eta))$, so \cite{Indyk2006} gives the claimed coefficient precision. Thus each maintained measurement $\widetilde X_i$ can be represented using $O\!\left(\log(1+\|x\|_1)+\log(d/\eta)\right)$ bits.
\end{proof}

\begin{lemma}
\label{lem:cauchy-pseudorandomness}
Consider the finite-precision Cauchy sketch of
\Cref{appen:lemCV}. The independent coefficient blocks
$\bigl(\widetilde\rho_{1,j},\ldots,\widetilde\rho_{r,j}\bigr)$ for $j\in[d]$
can be replaced by pseudorandom coefficient blocks determined by a seed
of $O(\log^2(d/\eta))$ bits. For every fixed stream, this replacement changes the
probability of any event determined by the final sketch state by at most
$\eta$. Moreover, the coefficient block associated with any coordinate
$j$ can be regenerated from the seed in $\poly \log(d/\eta)$ time, and
hence each update can be processed in $\poly \log(d/\eta)$ time.
\end{lemma}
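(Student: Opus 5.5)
The natural route is Nisan's pseudorandom generator for small-space computation, as in \cite{Indyk2006}. Fix the stream. Consider the sketch in its ``reordered'' form: for every fixed stream, the final state $(\widetilde X_1,\ldots,\widetilde X_r)$ depends only on the final vector $x$. The final state equals $\sum_j x_j(\widetilde\rho_{1,j},\ldots,\widetilde\rho_{r,j})$. We may therefore imagine a small-space machine that reads the coefficient blocks in the order $j=1,2,\ldots,d$. It maintains the running sums $\sum_{j'\le j} x_{j'}\widetilde\rho_{i,j'}$ for $i\in[r]$, with $x$ hard-wired into the machine; this is legitimate since Nisan's generator fools nonuniform branching programs.

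The first step is to bound the width of this branching program. By \Cref{appen:lemCV}, each $\widetilde\rho_{i,j}$ uses $O(\log(d/\eta))$ bits. The partial sums are integer combinations (after scaling) of entries of magnitude $\poly(d)$. So each of the $r=O(1)$ running sums fits in $O(\log(d/\eta))$ bits, possibly after truncating coefficients of huge magnitude in a way that costs at most another $\eta/2$ probability. Each block therefore consumes $S=O(\log(d/\eta))$ random bits, and the machine has space $O(\log(d/\eta))$ over $d$ steps.

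Nisan's theorem then gives a generator with seed length $O(S\log d)=O(\log^2(d/\eta))$. Its output distribution on the final state differs in total variation from the truly random one by at most $2^{-\Omega(S)}\le\eta$. Consequently the probability of every event determined by the final sketch state changes by at most $\eta$. We then conclude for the actual stream order: the sketch is linear, so the final state is the same function of the coefficients regardless of update order. The fooling guarantee, proved for the canonical order, therefore transfers to arbitrary orders.

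For the update time, the $j$-th output block of Nisan's generator is computed by following the binary representation of $j$ through the $O(\log d)$ levels of pairwise-independent hash compositions. Each composition takes $\poly\log(d/\eta)$ time, so regeneration costs $\poly\log(d/\eta)$ per update.

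I expect the main obstacle to be the width bound. Cauchy coefficients are heavy-tailed, so the discretized $\widetilde\rho$ from \cite{Indyk2006} must be truncated at magnitude $\poly(d/\eta)$, with the truncation probability absorbed into $\eta$. Once the running sums are confirmed to take only $\poly(d/\eta)$ values, the rest is a direct invocation of Nisan's generator.
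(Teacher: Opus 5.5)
Your proposal is correct and is essentially the paper's argument. The paper groups the bits of $(\widetilde\rho_{1,j},\ldots,\widetilde\rho_{r,j})$ into a block $R_j$ and invokes the randomness-reduction argument of \cite[Lemma~3]{Indyk2006}, which is exactly the Nisan-generator-plus-reordering argument you spell out; the truncation step you flag is already covered by the bit bound in \Cref{appen:lemCV}, which caps each coefficient at $O(\log(d/\eta))$ bits, so the running sums automatically take only $\poly(d/\eta)$ values.
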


\begin{proof}
Group the random bits used to generate
$\widetilde\rho_{1,j},\ldots,\widetilde\rho_{r,j}$ into a single random
block $R_j$. By \Cref{appen:lemCV}, each such block
has length $O(\log(d/\eta))$, and the $r$ measurements can be stored
using a polylogarithmic number of bits.

Applying the randomness-reduction argument of
\cite[Lemma~3]{Indyk2006} replaces the independent blocks
$R_1,\ldots,R_d$ by pseudorandom blocks generated from a seed of
$O(\log^2(d/\eta))$ bits, while changing the probability of any event
determined by the final sketch state by at most $\eta$. The same block
$R_j$ can be regenerated whenever coordinate $j$ is updated. By
\cite{Indyk2006}, this takes $\poly \log(d/\eta)$ time per update.
\end{proof}

\begin{proof}[Proof of \Cref{lem:cauchy-sketch}]
Fix any constant $C>0$ and set $\eta=d^{-C}/4$.
By \Cref{appen:lemCV}, the finite-precision estimator can be
coupled with the ideal estimator with additive error at most
$\eta\|x\|_1$, except with probability $\eta$.
By \Cref{lem:cauchy-pseudorandomness}, the pseudorandom and
fully independent finite-precision sketch states can be coupled
to agree except with probability $\eta$. Therefore,
\[
\Pr\left[
|\widetilde A_r(x)-A_r(x)|>d^{-C}\|x\|_1
\right]
\le 2\eta\le d^{-C}.
\]
The space and update-time bounds follow from the two lemmas
and the polynomial bounds on the stream.
\end{proof}

\end{document}